\tikzstyle{every picture}=[
\newcommand{\Intervalle}[2]{\{#1,\ldots,#2\}}
\newcommand{\Card}[1]{\mathrm{Card}(#1)}
\newcommand{\indef}{\bot}
\newtheorem{lemma}{Lemma}
\newtheorem{corollary}{Corollary}
\newtheorem{proposition}{Proposition}
\newtheorem{definition}{Definition}
\newtheorem{example}{Example}
\begin{document} 

  \title{Derivatives of Approximate Regular Expressions}
  
  \author{J.-M.~Champarnaud 
  \and H.~Jeanne
  \and L.~Mignot
  }
  
  \date{}

  \maketitle
  \begin{abstract}
    Our aim is to construct a finite automaton recognizing the set of words that are at a bounded distance from some word of a given regular language. We define new regular operators, the similarity operators, based on a generalization of the notion of distance and we introduce the family of regular expressions extended to similarity operators, that we call AREs (Approximate Regular Expressions). We set formulae to compute the Brzozowski derivatives and the Antimirov derivatives of an ARE, which allows us to give a solution to the ARE membership problem and to provide the construction of two recognizers for the language denoted by an ARE.
As far as we know, the family of approximative regular expressions is introduced for the first time in this paper.
Classical approximate regular expression matching algorithms are approximate matching algorithms on regular expressions.
Our approach is rather to process an exact matching on approximate regular expressions.
  \end{abstract}

\section{Introduction}\label{se:int}

This paper addresses the problem of constructing a finite automaton that recognizes
the language of all the words that are at a distance less than or equal to a given positive integer $k$ 
from some word of a given regular language.
Our approach is based on the extension of regular expressions to approximate regular expressions (AREs)
that handle distance operators.
More precisely, we first define a new family of operators:
given an integer $k$, the $\mathbb{F}_k$ operator is such that,
for any regular language $L$, the language $\mathbb{F}_k(L)$ is the set of
all the words that are at a distance less than or equal to $k$ 
from some word of $L$. 
We then consider the family of approximate regular expressions
obtained from the family of regular expressions 
by adding the family of $\mathbb{F}_k$ operators to the set of regular operators.
We provide a formula that, 
given a regular language $L$, computes
the quotient of the language $\mathbb{F}_k(L)$ with respect to a symbol.
We finally extend the computation of Brzozowski derivatives~\cite{Brz64} (resp. of Antimirov derivatives~\cite{Ant96})
to the family of approximate regular expressions.
The first benefit of the derivation of an ARE is that it yields an elegant solution for the approximate membership problem.
Moreover, the set of Brzozowski derivatives (resp. of Antimirov derivatives) of an ARE is shown to be finite.
As a consequence, the derivation of an ARE enables the computation of a finite automaton that recognizes the language of this ARE.

The similarity between two words is generally measured by a distance and 
two basic types of distance called Hamming distance and Levenshtein distance (or edit distance) are generally considered.
In our constructions
the similarity between two words
is handled by a word comparison function,
that is more general than a distance (for instance, a comparison function is not necessarily symmetrical).
It is the reason why we will speak of similarity operators rather than of distance operators.

The aim of this paper is to investigate the properties of the AREs family,
in particular to define formulae for computing the set of (Brzozowski or Antimirov) derivatives of an ARE
and to check the properties of this set. 
This theoretical study leads to a solution for the approximate membership problem
as well as to a solution for the approximate regular expression matching problem 
(based on the automaton associated with the set of derivatives of an ARE).
However, this paper is not an algorithmic contribution to the approximate regular expression matching problem:
it investigates new automaton-theoretic constructions
that hopefully make a sound foundation for the design of new approximate matching algorithms,
but it does not present new efficient algorithms.  

Let us recall that approximate matching consists in locating the segments of the text 
that approximately correspond to the pattern to be matched,
{\it i.e.\/} segments that do not present too many errors with respect to the pattern.
This research topic has numerous applications, in biology or in linguistics for example,
and many algorithms have been designed in this framework for more than thirty years
especially concerning approximate string matching
(see~\cite{CL06,Nav01} for a survey of such algorithms).
Two contexts can be distinguished: in the off-line case, that is when a pre-computing of the text is performed,
the basic tool is the construction of indexes~\cite{JTU96};
otherwise, 
the basic technique is dynamic programming~\cite{MM89}.
In both cases, automata constructions have been used,
either to represent an index~\cite{UW93,BYG96}
or to simulate dynamic programming~\cite{Hol02}.

Several studies address the problem of constructing a finite automaton that recognizes
the language of all the words that are at a distance less than or equal to a given positive integer $k$ 
from a given word. 
For instance this problem is considered in~\cite{Mab96}        
where Hamming distance is used and in~\cite{SM02}
where Levenshtein distance is used.
A challenging problem is to tackle the more general case
where the pattern is no longer a word but a regular expression~\cite{Nav07,WMM95}.
The solution
described in~\cite{Muz96}
first computes $k+1$ clones of some non-deterministic automaton
recognizing the language of the regular expression
and then interconnects these clones by a set of transitions that depends on the type of distance.   

As far as we know, the family of approximate regular expressions is introduced for the first time in this paper.
Approximate regular expression matching algorithms described in the papers above-cited
are approximate matching algorithms on regular expressions.
Our approach is rather to process an exact matching on approximate regular expressions.

This paper is an extended version of~\cite{CJM12}. Classical notions of language theory, such as derivative computation, are recalled in Section~\ref{se:pre}.
Section~\ref{se: dist cost comp} gives a formalization of the notion of word comparison function and provides a definition of the family of approximate regular expressions.
The usual case of Hamming and Levenshtein distances is addressed in Section~\ref{se:hamLevDeriv}.
Finally, Section~\ref{se: deriv are} is devoted to the general case and derivative-based constructions of an automaton from an approximate regular expression are described.

\section{Preliminaries}\label{se:pre}

  Given a set $X$, we denote by $\mathrm{Card}(X)$ the number of elements in $X$.

  A \emph{finite automaton} $A$ is a 5-tuple $(\Sigma,Q,I,F,\delta)$ with:
  \begin{itemize}
    \item $\Sigma$ the \emph{alphabet} (a finite set of symbols), 
    \item $Q$ a finite set of \emph{states},  
    \item $I \subset Q$ the set of \emph{initial states}, 
    \item $F \subset Q$ the set of \emph{final states},
    \item $\delta \subset Q\times\Sigma\times Q$ the set of \emph{transitions}.
  \end{itemize}
  The set $\delta$ is equivalent to the function from $Q\times\Sigma$ to $2^Q$ defined by: $q'\in\delta(q,a)$ if and only if $(q,a,q')\in\delta$. The domain of the function $\delta$ is extended to $2^Q\times \Sigma^*$ as follows: $\forall P\subset Q$, $\forall a\in\Sigma$, $\forall w\in\Sigma^*$, $\delta(P,\varepsilon) =P$, $\delta(P,a)=\bigcup_{p\in P}\delta(p,a)$ and $\delta(P,a\cdot w)=\delta(\delta(P,a),w)$. The automaton $A$ \emph{recognizes} the language $L(A)=\{w\in\Sigma^*\mid \delta(I,w)\cap F\neq\emptyset\}$. The automaton $A$ is \emph{deterministic} if $\Card{I}=1$ 
  and
  $\forall (q,a)\in Q\times \Sigma$, $\Card{\delta(q,a)}\leq 1$.
  
A \emph{regular expression} $E$ over an alphabet $\Sigma$ is inductively defined by:

 \centerline{$E=\emptyset$, $E=\varepsilon$, $E=a$,}
 
 \centerline{ $E=(F+G)$, $E=(F\cdot G)$, $E=(F^*)$}
 
 where $a$ is any symbol in $\Sigma$ and $F$ and $G$ are any two regular expressions.
 
 The \emph{language} $L(E)$ \emph{denoted by} $E$ is inductively defined by:
 
 \centerline{ $L(\emptyset)=\emptyset$, $L(a)=\{a\}$, $L(\varepsilon)=\{\varepsilon\}$,}
 
 \centerline{ $L(E+F)=L(E)\cup L(F)$, $L(E\cdot F)=L(E)\cdot L(F)$ and $L(F^*)=(L(F))^*$}
 
  where $a$ is any symbol in $\Sigma$, $F$ and $G$ are any two regular expressions, and for any $L_1,L_2\subset \Sigma^*$, 
  
  \centerline{$L_1\cup L_2=\{w\mid w\in L_1\vee w\in L_2\}$,}
  
  \centerline{$L_1\cdot L_2=\{w_1w_2\mid w_1\in L_1\wedge w_2\in L_2\}$}
  
  \centerline{and $L_1^*=\{w_1\cdots w_k\mid k\geq 1\wedge \forall j\in\{1,\ldots,k\},\ w_j\in L_1\}\cup\{\varepsilon\}$.}
  
A language $L$ is \emph{regular} if there exists a regular expression $E$ such that $L(E)=L$.
It has been proved by Kleene~\cite{Kle56} that a language is regular if and only if it is recognized by a finite automaton.

Given a language $L$ over an alphabet $\Sigma$ and a word $w$ in $\Sigma^*$, the \emph{membership problem} is to determine whether $w$ belongs to $L$. It can be solved by the computation of the boolean $\mathrm{r}(w,L)$ defined by:

	\centerline{
	    $\mathrm{r}(w,L)=
	    \left\{
	      \begin{array}{l@{\ }l}
	        1 & \text{ if } w\in L,\\
	        0 & \text{ otherwise.}
	      \end{array}
	    \right.$
	}

   The \emph{quotient of} $L$ w.r.t. a symbol $a$ is the language $a^{-1}(L)=\{w\in\Sigma^*\mid aw\in L\}$. It can be recursively computed as follows:

  \centerline{ \hfill$a^{-1}(\emptyset)=a^{-1}(\{\varepsilon\}) = a^{-1}(\{b\})=\emptyset$,\hfill $a^{-1}(\{a\})=\{\varepsilon\}$\hfill
   } 
  
  \centerline{\hfill$a^{-1}(L_1\cup L_2) =a^{-1}(L_1)\cup a^{-1}(L_2)$,\hfill  $a^{-1}(L_1^*) =a^{-1}(L_1)\cdot L_1^*$\hfill}

  \centerline{$a^{-1}(L_1\cdot L_2)=
    \left\{
      \begin{array}{l@{\ }l}
        a^{-1}(L_1)\cdot L_2 \cup a^{-1}(L_2) & \text{ if }\mathrm{r}(\varepsilon,L_1)=1,\\
	a^{-1}(L_1)\cdot L_2 & \text{ otherwise.}\\
      \end{array}
    \right.$
  }
  
\noindent    The quotient $w^{-1}(L)$ of $L$ w.r.t. a word $w$ in $\Sigma^*$ is the set $\{w'\in\Sigma^*\mid w\cdot w'\in L\}$. It can be recursively computed as follows: $\varepsilon^{-1}(L)=L$, $(aw')^{-1}(L)=w'^{-1}(a^{-1}(L))$ with $a\in\Sigma$ and $w'\in\Sigma^+$. 
The Myhill-Nerode Theorem~\cite{Myh57,Ner58}
 states that a language $L$ is regular if and only if the set of quotients $\{u^{-1}(L)\mid u\in\Sigma^*\}$ is finite. 
 
Since $\mathrm{r}(w,L)=\mathrm{r}(\varepsilon,w^{-1}(L))$, the membership problem can be solved using the quotient formulae and the following straightforward computation of $\mathrm{r}(\varepsilon,L)$:

\centerline{
  $\mathrm{r}(\varepsilon,\{a\})=\mathrm{r}(\varepsilon,\emptyset)= 0$, $\mathrm{r}(\varepsilon,\{\varepsilon\})= 1$,
}

\centerline{
  $\mathrm{r}(\varepsilon,L_1\cup L_2)=\mathrm{r}(\varepsilon,L_1)\vee \mathrm{r}(\varepsilon,L_2)$, $\mathrm{r}(\varepsilon,L_1\cdot L_2)=\mathrm{r}(\varepsilon,L_1)\wedge \mathrm{r}(\varepsilon,L_2)$,
}

\centerline{
  $\mathrm{r}(\varepsilon,L_1^*)=1$.
}
  
The notion of derivative of an expression has been introduced by Brzozowski~\cite{Brz64}. The 
 derivative
  of an expression $E$ w.r.t. a word $w$ 
  is
   an expression denoting the quotient of $L(E)$ w.r.t. $w$.
    Let $E$ be a regular expression over an alphabet $\Sigma$ and let $a$ and $b$ be two distinct symbols of $\Sigma$. The \emph{derivative of} $E$ w.r.t. $a$ is the expression $\frac{d}{d_a}(E)$ inductively computed as follows:

      \centerline{
	$\frac{d}{d_a}(\emptyset)=\frac{d}{d_a}(\varepsilon)=\frac{d}{d_a}(b)=\emptyset$,\ \ \ \  $\frac{d}{d_a}(a)=\varepsilon$,
      }
      
      \centerline{
	$\frac{d}{d_a}(F^*)=\frac{d}{d_a}(F)\cdot F^*$,\ \ \ \   $\frac{d}{d_a}(F+G)=\frac{d}{d_a}(F)+\frac{d}{d_a}(G)$
      }
      
      \centerline{
	$\frac{d}{d_a}(F\cdot G)=
	  \left\{
	    \begin{array}{l@{ }l}
	      \frac{d}{d_a}(F)\cdot G+\frac{d}{d_a}(G)&\text{ if } \mathrm{r}(\varepsilon,L(F))=1,\\
	      \frac{d}{d_a}(F)\cdot G & \text{ otherwise.}
	    \end{array}
	  \right.
	$
      }

 \noindent The derivative of $E$ is extended to words of $\Sigma^*$ as follows:

  \centerline{$\frac{d}{d_{\varepsilon}}(E)=E$, $\frac{d}{d_{aw}}(E)=\frac{d}{d_{w}}(\frac{d}{d_{a}}(E))$.}
  
   Since $w^{-1}(L(E))=L(\frac{d}{d_w}(E))$, it holds $\mathrm{r}(w,L(E))=\mathrm{r}(\varepsilon,L(\frac{d}{d_w}(E)))$. For convenience, we set $\mathrm{r}(w,E)=\mathrm{r}(w,L(E))$. Notice that the boolean $\mathrm{r}(\varepsilon,E)$ can be inductively computed as follows:   

\centerline{
  $\mathrm{r}(\varepsilon,a)=\mathrm{r}(\varepsilon,\emptyset)= 0$, $\mathrm{r}(\varepsilon,\varepsilon)= 1$,
}

\centerline{
  $\mathrm{r}(\varepsilon,E_1\cup E_2)=\mathrm{r}(\varepsilon,E_1)\vee \mathrm{r}(\varepsilon,E_2)$, $\mathrm{r}(\varepsilon,E_1\cdot E_2)=\mathrm{r}(\varepsilon,E_1)\wedge \mathrm{r}(\varepsilon,E_2)$,
}

\centerline{
  $\mathrm{r}(\varepsilon,E_1^*)=1$.
}

As a consequence, derivation provides a syntactical solution for the membership problem.

 Notice that the set $\mathcal{D}_E$ of derivatives of an expression $E$ is not necessarily finite.
 It has been proved by Brzozowski~\cite{Brz64} that it is sufficient to 
use the ACI equivalence
(that is based on the associativity, the commutativity and the idempotence of the sum of expressions)
to obtain a finite set of derivatives:
the set $\mathcal{D}'_E$ of \emph{dissimilar derivatives}.
Given a class of ACI-equivalent expressions,
a unique representative
can be obtained after deleting parenthesis (associativity),
ordering terms of each sum (commutativity) and deleting redundant subexpressions (idempotence).
Let $E_{\sim_s}$ be the unique representative of the class of the expression $E$. The set of dissimilar derivatives can be computed as follows:

      \centerline{
	$\frac{d'}{d'_a}(\emptyset)=\frac{d'}{d'_a}(\varepsilon)=\frac{d'}{d'_a}(b)=\emptyset$, $\frac{d'}{d'_a}(a)=\varepsilon$,
      }

      \centerline{
	$\frac{d'}{d'_a}(E+F)=(\frac{d'}{d'_a}(F)+\frac{d'}{d'_a}(G))_{\sim_s}$, $\frac{d'}{d'_a}(F^*)=\frac{d'}{d'_a}(F)\cdot F^*$,
      }
      
      \centerline{
	$\frac{d'}{d'_a}(F\cdot G)=
	  \left\{
	    \begin{array}{l@{ }l}
	      (\frac{d'}{d'_a}(F)\cdot G+\frac{d'}{d'_a}(G))_{\sim_s}&\text{ if } \mathrm{r}(\varepsilon,F)=1,\\
	      (\frac{d'}{d'_a}(F)\cdot G)_{\sim_s} & \text{ otherwise.}
	    \end{array}
	  \right.
	$
      }      

  \noindent The \emph{dissimilar derivative finite automaton} $B'(E)=(\Sigma,Q,\{q_0\},F,\delta)$ of a
  regular expression $E$ over an alphabet $\Sigma$
is defined by:
\begin{itemize}
  \item $Q =\mathcal{D}'_{E}$, 
  \item $q_0 =E$, 
  \item $F =\{q\in Q\mid \varepsilon\in L(q)\}$, 
  \item $\delta =\{(q,a,q')\in Q\times\Sigma\times Q\mid \frac{d'}{d'_{a}}(q)=q'\}$.
\end{itemize}
 The automaton $B'(E)$ is deterministic and it recognizes the language $L(E)$.
 Its size can be exponentially larger than the number of symbols of $E$.
 
  Antimirov's
  algorithm~\cite{Ant96} constructs a finite automaton from a regular expression $E$. It is  based on the \emph{partial derivative} computation.
    The partial derivative of a regular expression $E$ w.r.t. a symbol $a$ is the set $\frac{\partial}{\partial_a}(E)$ of expressions defined as follows:
    
      \centerline{
	  $\frac{\partial}{\partial_a}(\emptyset)=\frac{\partial}{\partial_a}(\varepsilon)=\frac{\partial}{\partial_a}(b)=\emptyset$, $\frac{\partial}{\partial_a}(a)=\{\varepsilon\}$,
      }
      
      \centerline{
	  $\frac{\partial}{\partial_a}(F+G)=\frac{\partial}{\partial_a}(F) \cup \frac{\partial}{\partial_a}(G)$, $\frac{\partial}{\partial_a}(F^*)=\frac{\partial}{\partial_a}(F)\cdot F^*$,
      }
      
      \centerline{
	  $\frac{\partial}{\partial_a}(F\cdot G)=
	    \left\{
	      \begin{array}{c@{\ }l}
		\frac{\partial}{\partial_a}(F)\cdot G \cup \frac{\partial}{\partial_a}(G)&\text{ if } \mathrm{r}(\varepsilon,F)=1,\\
		\frac{\partial}{\partial_a}(F)\cdot G & \text{ otherwise,}
	      \end{array}
	    \right.
	  $\
      }
      
  \noindent with for any set $\mathcal{E}$ of expressions, $\mathcal{E}\cdot F=\bigcup_{E\in\mathcal{E}} E\cdot F$.
  
  \noindent The partial derivative of $E$ is extended to words of $\Sigma^*$ as follows:
  
  \centerline{$\frac{\partial}{\partial_{\varepsilon}}(E)=\{E\}$, $\frac{\partial}{\partial_{aw}}(E)=\frac{\partial}{\partial_{w}}(\frac{\partial}{\partial_{a}}(E))$,}
    
  \noindent with for a set $\mathcal{E}$ of expressions, $\frac{\partial}{\partial_a}(\mathcal{E})=\bigcup_{E\in\mathcal{E}}\frac{\partial}{\partial_a}(E)$.
  \noindent Every element of the partial
derivative of $E$ w.r.t. a word $w$ in $\Sigma^*$ is called a \emph{derivated term
of} $E$ w.r.t. $w$. The \emph{set of the derivated terms of} $E$ is the union of the
sets of the derivated terms of $E$ w.r.t. $w$, for all $w$ in $\Sigma^*$. Antimirov~\cite{Ant96} has shown that 
the set $\mathcal{DT}_E$ of the derivated terms of $E$
is such that $\Card{\mathcal{DT}_E}\leq n+1$, where $n$ is the number of symbols of $E$.

Furthermore, for any word $w$ in $\Sigma^*$, $\bigcup_{E'\in\frac{\partial}{\partial_w}(E)} L(E')=w^{-1}(L(E))$. Consequently, the partial derivation provides another syntactical solution for the membership problem as well as  a finite automaton computation. Indeed, it can be shown that $\mathrm{r}(w,E)=\bigvee_{E'\in\frac{\partial}{\partial_w}(E)} \mathrm{r}(\varepsilon,E')$.

The \emph{derivated term finite automaton} $A(E)=(\Sigma,Q,\{q_0\},F,\delta)$ of a  regular expression $E$
is defined as follows: 
\begin{itemize}
  \item $Q =\mathcal{DT}_E$, 
  \item $q_0 =E$, 
  \item $F =\{q\in Q\mid \mathrm{r}(\varepsilon,q)=1\}$, 
  \item $\delta =\{(q,a,q')\in Q\times\Sigma\times Q\mid q'\in \frac{\partial}{\partial_a}(q)\}$.
\end{itemize}
 The automaton $A(E)$ recognizes the language $L(E)$.
 
In this paper, we consider the \emph{approximate membership problem} that is defined as follows:

Given a regular expression $E$ over an alphabet $\Sigma$, a word $w$ in $\Sigma^*$, a function $\mathbb{F}$ from $\Sigma^*\times \Sigma^*$ to $\mathbb{N}$ and an integer $k$, is there a word $w'$ in $L(E)$ satisfying $\mathbb{F}(w,w')\leq k$ ?

In the following, we provide a syntactical solution for the approximate membership problem in the case where the function $\mathbb{F}$ satisfies specific properties.
 
\section{Comparison Functions: Symbols, Sequences and Words}\label{se: dist cost comp}
  
Let $\Sigma$ be an alphabet, $S=\Sigma\ \cup\ \{\varepsilon\}$ and $X$ be a subset of $S\times S$.
A \emph{cost function} $\mathrm{C}$ over $X$ is a function from $X$ to $\mathbb{N}$
satisfying 
\textbf{Condition 1:}
 for all $\alpha$ in $S$, $\mathrm{C}(\alpha,\alpha)=0$.
For any 
pair
 $(\alpha,\beta)$ in $S\times S$ such that $\mathrm{C}(\alpha,\beta)$ is not defined,
let us set $\mathrm{C}(\alpha,\beta)=\indef$.
Consequently, a cost function can be viewed as a function from $S\times S$ to $\mathbb{N}\cup\{\indef\}$
satisfying 
 Condition $1$.
 Since we use 
 $\indef$ 
 to deal with undefined computation, we 
 set 
  for all $x$ in $\mathbb{N}\cup\{\indef\}$, $\indef+x=x+\indef=x-\indef=\indef-x=\indef$ and for all integers $x,y$ in $\mathbb{N}$, $x-y=\indef$ when $y>x$. 
   A cost function can be represented by a 
directed and labelled graph $\mathrm{C}=\{S,V\}$ where $V$ is a subset of 
$S\times (\mathbb{N} \cup \{\indef\})\times S$
 such that for all $(\alpha,\beta)$ in $S\times S$, $\mathrm{C}(\alpha,\beta)=k \Leftrightarrow (\alpha,k,\beta)\in V$.
 Transitions labelled by $\indef$ can be omitted in the graphical representation, 
as well as the implicit transitions $(\alpha,0,\alpha)$ (See Example~\ref{ex cost func as graph}).
  
  \begin{example}\label{ex cost func as graph}
    Let $\Sigma=\{a,b,c\}$. Let $\mathrm{C}$ be the cost function defined as follows:
    
    \begin{minipage}{0.45\linewidth}
    \centerline{
      $\mathrm{C}(x,y)=
        \left\{
          \begin{array}{l@{\ }l}
            0 & \text{ if } x=y,\\
            4 & \text{ if } x=a \wedge y=c,\\
            3 & \text{ if } x=c \wedge y=a,\\
            1 & \text{ if } x\in\{a,c\} \wedge y=b,\\
            \indef & \text{ otherwise.}\\
          \end{array}
        \right.
      $
    }
    \end{minipage}
    \hfill
    \begin{minipage}{0.5\linewidth}  
  \begin{figure}[H]
    \centerline{ 
      \begin{tikzpicture}[node distance=2cm,bend angle=30]   
	    \node[state] (b) {$b$};
	    \node[state] (a) [left of=b]{$a$};
	    \node[state] (c) [right of=b] {$c$};  	    	 
	    \node[state] (eps) [right of=c] {$\varepsilon$};	    	  
	    \path[->]
	      (a)   edge [bend left] node {4} (c)
	      (a)   edge node {1} (b)
	      (c)   edge [bend left] node {3} (a)
	      (c)   edge [swap] node {1} (b);	            
      \end{tikzpicture}
    }   
    \caption{The cost function $\mathrm{C}$}
    \label{fig ex cost function graph}
  \end{figure}
  \end{minipage}
    
    \vspace{0.5\baselineskip}
    The cost function $\mathrm{C}$ can be represented by the graph in Figure~\ref{fig ex cost function graph}.
  \end{example}
  
Given a positive integer $k$ we now consider the set $S^k$ of all 
the {\emph sequences} $s=(s_1,\ldots,s_k)$ of size $k$ made of elements of $S$.
A \emph{sequence comparison function} is a function $\mathcal F$ from $\bigcup_{k\in\mathbb{N}}S^k\times S^k$ 
to $\mathbb{N}\cup\{\indef\}$.
Given a 
pair
$(s,s')$ of sequences with the same size, ${\mathcal F}(s,s')$ either is an integer or is undefined.
In the following we will consider sequence comparison functions $\mathcal F$ satisfying
\textbf{Condition 2:} 
$\mathcal F$ is defined from a given cost function $\mathrm{C}$ over $S\times S$,
and 
\textbf{Condition 3:} 
$\mathcal F$ is a \emph{symbol-wise} comparison function,
that is, 
for any two sequences $s=(s_1,\ldots,s_n)$ and $s'=(s'_1,\ldots,s'_n)$, it holds:

   \centerline{
    \begin{tabular}{l@{\ }l@{\ }l}
      $\mathcal{F}(s,s')$ & $=\mathcal{F}((s_1),(s'_1))+\mathcal{F}((s_2,\ldots,s_n),(s'_2,\ldots,s'_n))$ & $=\sum_{k\in\Intervalle{1}{n}}\mathcal{F}((s_k),(s'_k)).$\\
    \end{tabular}
   }

\noindent We consider that those functions satisfy Condition~1
, \emph{i.e.}
  for all $\alpha$ in $S$, $\mathcal{F}((\alpha),(\alpha))=0$.
Consequently, for any 
pair
of sequences
 $s=(s_1,\ldots,s_k)$ and $s'=(s'_1,\ldots,s'_k)$ such that $k>1$,
\textbf{Condition 4}
 is satisfied:
 if there exists an integer $k'$ in $\Intervalle{1}{k}$ such that $s_{k'}=s'_{k'}=\varepsilon$, then:
  
  \centerline{ $\mathcal{F}(s,s')=
    \left\{
      \begin{array}{l@{\ }l}
        \mathcal{F}((s_2,\ldots,s_k),(s'_2,\ldots,s'_k)) & \text{ if }k'=1,\\ 
        \mathcal{F}((s_1,\ldots,s_{k-1}),(s'_1,\ldots,s'_{k-1})) & \text{ if }k'=k,\\ 
        \mathcal{F}((s_1,\ldots,s_{k'-1},s_{k'+1},\ldots,s_k),(s'_1,\ldots,s'_{k'-1},s'_{k'+1},\ldots,s'_k)) & \text{ otherwise.}\\
      \end{array}
    \right.$}
  
\noindent As a consequence of Condition $3$, a symbol-wise sequence comparison function is defined by
the images of the 
pairs
of sequences of size $1$.
Notice that a sequence comparison function is not necessarily symbol-wise, \emph{e.g.}  for a given cost function $\mathrm{F}$, $\mathcal{F}((s_1,\ldots,s_n),(s'_1,\ldots,s'_n))=\sum_{k\in\Intervalle{1}{n}} \mathrm{F}(s_k,s'_k)^k$.

 \noindent Two of the most well-known symbol-wise sequence comparison functions are the Hamming one ($\mathcal{H}$) and the Levenshtein one ($\mathcal{L}$) respectively defined for any integer $n>0$ and for any 
pair
of sequences $s=(s_1,\ldots,s_n)$ and $s'=(s'_1,\ldots,s'_n)$ in $S^n\times S^n$ by:
  
  \centerline{
    $\mathcal{H}(s,s')=\sum_{k\in\Intervalle{1}{n}} \mathrm{H}(s_k,s'_k)$,\ \ \ 
    $\mathcal{L}(s,s')=\sum_{k\in\Intervalle{1}{n}} \mathrm{L}(s_k,s'_k)$,
  }
  
  \noindent with $\mathrm{H}$ and $\mathrm{L}$ the two cost functions respectively defined for all $a,b$ in $\Sigma\cup\{\varepsilon\}$ by:

    \centerline{
      $\mathrm{H}(a,b)=
        \left\{
          \begin{array}{c@{\ }l}
            \indef & \text{ if } (a=\varepsilon\vee b=\varepsilon)\wedge (a,b)\neq(\varepsilon,\varepsilon),\\
            1 & \text{ if } a\neq b,\\
            0 & \text{ otherwise,}\\
          \end{array}
        \right.
      $ and 
      $\mathrm{L}(a,b)=
        \left\{
          \begin{array}{c@{\ }l}
            1 & \text{ if } a\neq b,\\
            0 & \text{ otherwise.}\\
          \end{array}
        \right.
      $
  }

Let us now explain how a word comparison function can be deduced from a sequence comparison function.
  Let $w$ be a word in $\Sigma^*$ and $|w|$ be its \emph{length}.
  The sequence $s=(s_1,\ldots,s_n)$ in $S^n$ is said to be a 
  \emph{split-up}
  of $w$ if $s_1\cdots s_n=w$. The integer $n$ is the \emph{size} of $s$.
 The set of all the 
 split-ups
 of size $k$ of a word $w$ is denoted by 
 $\mathrm{Split}_k(w)$
  and
 the set of all the 
 split-ups
  of $w$ is denoted by 
  $\mathrm{Split}(w)$.

    Let $\mathcal{F}$ be a sequence comparison function,
$(u,v)$ be a 
pair
 of words of $\Sigma^*$, and $k$ be a positive integer.
We consider the following sets:

\centerline{$ Y(u,v)=\{\mathcal{F}(u',v')\mid \exists k\in\mathbb{N}, k\geq 1 \wedge (u',v')\in \mathrm{Split}_k(u)\times \mathrm{Split}_k(v) \}\cap \mathbb{N},$}

\centerline{$Y_m(u,v)=\{\mathcal{F}(u',v')\mid \exists k\in\mathbb{N}, 1\leq k \leq m \wedge (u',v')\in \mathrm{Split}_k(u)\times \mathrm{Split}_k(v) \}\cap \mathbb{N}$.}

  \begin{definition}\label{def split up comp func}
    Let $\mathcal{F}$ be a sequence comparison function.
 The \emph{word comparison function} associated with $\mathcal{F}$ is the function $\mathbb{F}$ from $\Sigma^*\times \Sigma^*$ to $\mathbb{N}\cup\{\indef\}$ defined by:
 
  \centerline{\hfill$\mathbb{F}(u,v)=
        \mathrm{min}\{Y(u,v)\}\ \text{\ if\ } Y(u,v) \neq\emptyset,$\hfill
        $\mathbb{F}(u,v)=\indef  \text{\ otherwise.}$\hfill
  }
  \end{definition}
  
  Notice that a word comparison function is not necessarily symmetrical. Indeed, some problems can be modelized with a non-symmetrical function. For instance, given two words $w$ and $w'$, can $w$ be obtained from $w'$ by deleting some letters, \emph{i.e.} is $w$ a subword of $w'$? Such a problem can be modelized by the word comparison function $\mathbb{D}$ associated to the symbol-wise comparison function $\mathcal{D}$ defined for any pair of sequences of length $1$ by:
  
  \centerline{
  $\forall (\alpha,\beta)\in (\Sigma\cup\{\varepsilon\})^2$, $\mathcal{D}((\alpha),(\beta))=
    \left\{
      \begin{array}{l@{\ }l}
        0 & \text{ if }\alpha=\beta,\\
        1 & \text{ if }\alpha=\varepsilon\wedge \beta\in\Sigma,\\
        \indef & \text{ otherwise.} 
      \end{array}
    \right.
  $
  }
 
 It can be shown that for any two words $w$ and $w'$ in $\Sigma^*$: 
  
  \centerline{
  $\mathbb{D}(w,w')=
    \left\{
      \begin{array}{l@{\ }l}
        \indef & \text{ if }w\text{ is not a subword of }w',\\
        |w'|-|w| & \text{ otherwise.} 
      \end{array}
    \right.
  $
  }

  In the case of a sequence comparison function based on a cost function,
the whole set $\mathbb{N}$ needs not to be considered.
Indeed, according to Condition $4$,
if $u\neq\varepsilon$ or $v\neq\varepsilon$, 
then $Y(u,v)=Y_{|u|+|v|}(u,v)$ and we can write:

  \centerline{$\mathbb{F}(u,v)=
    \left\{
      \begin{array}{l@{\ }l}
        0 & \text{\ if\ } u=v=\varepsilon,\\
        \mathrm{min}\{Y_{|u|+|v|}(u,v)\} & \text{\ if\ } (u,v)\neq(\varepsilon,\varepsilon) \wedge Y_{|u|+|v|}(u,v)\neq\emptyset,\\
        \indef & \text{\ otherwise.}\\
      \end{array}
    \right.$
  }

  The \emph{Hamming distance} $\mathbb{H}$
and the \emph{Levenshtein distance} $\mathbb{L}$ are the word comparison functions respectively associated
 to the sequence comparison functions $\mathcal{H}$ and $\mathcal{L}$.
 Both of them satisfy the properties of word distances\footnote{A \emph{word distance} $\mathbb{D}$ is a word comparison function satisfying the three following properties for all $x,y,z\in \Sigma^*$: \textbf{(1)} $\mathbb{D}(x,y)=0 \Rightarrow x=y$, \textbf{(2)} $\mathbb{D}(x,y)=\mathbb{D}(y,x)$, \textbf{(3)} 
 $\mathbb{D}(x,y)+\mathbb{D}(y,z)\geq \mathbb{D}(x,z)$.}.  
Notice that in the following we will handle word comparison functions that are not necessarily distances (see Example~\ref{ex cost func as graph} for the definition of a nonsymmetrical cost function).
  
  \begin{example}\label{ex word comp}
    Let $\mathrm{C}$ be the cost function defined in Example~\ref{ex cost func as graph}. Let $s=(s_1)$ and $s'=(s'_1)$ be two sequences
of size 1. 
We define four symbol-wise sequence comparison functions
by setting the images of the 
pairs
 of sequences of size 1 from the cost function $\mathrm{C}$.
    
    \centerline{
      \begin{tabular}{l@{\  \ \ }l}
        $\rightarrow^{\mathrm{C}}(s,s')= \mathrm{C}(s_1,s'_1)$, & $\leftrightarrow^{\mathrm{C}}(s,s')=\mathrm{min}\{\mathrm{C}(s_1,s'_1),\mathrm{C}(s'_1,s_1)\}$,\\
        $\leftarrow^{\mathrm{C}}(s,s')=\mathrm{C}(s'_1,s_1)$, & $\rightrightarrows^{\mathrm{C}}(s,s')= \mathrm{min}_{x\in\Sigma\cup\{\varepsilon\}} \{\mathrm{C}(s_1,x)+\mathrm{C}(s'_1,x)\}$.\\
      \end{tabular}
    }
    
    Let us consider the two split-ups $s=(a,c,a)$ and $s'=(c,a,c)$. According to Figure~\ref{fig ex split up comp}, it holds:
    
    \begin{minipage}{0.30\linewidth}
    \centerline{
      \begin{tabular}{l}
        $\rightarrow^{\mathrm{C}}(s,s')=11$,\\
        $\leftarrow^{\mathrm{C}}(s,s')=10$,\\
        $\leftrightarrow^{\mathrm{C}}(s,s')=9$,\\
        $\rightrightarrows^{\mathrm{C}}(s,s')=6$.\\
      \end{tabular}
    }
    \end{minipage}
    \hfill
    \begin{minipage}{0.65\linewidth}
    \begin{figure}[H]
    \centerline{ 
      \begin{tabular}{l@{\ }|@{\ }l}
      \begin{tikzpicture}[node distance=0.6cm]  
	    \node (s1) {$s=($};  
	    \node (op) [below left of= s1] {$\rightarrow^{\mathrm{C}}:$}; 
	    \node (s11) [right of= s1]{$a$};
	    \node (s12) [right of= s11]{$c$};
	    \node (s13) [right of= s12] {$a$}; 
	    \node (s1f) [right of= s13] {$)$};  
	    \node (s2)  [below right of= op]{$s'=($};  
	    \node (s21) [right of= s2]{$c$};
	    \node (s22) [right of= s21]{$a$};
	    \node (s23) [right of= s22] {$c$}; 
	    \node (s2f) [right of= s23] {$)$};  	        	  
	    \path[->]
	      (s11)   edge  node {4} (s21)
	      (s12)   edge  node {3} (s22)
	      (s13)   edge  node {4} (s23)
	      (s1f)   edge [bend left]  node {11} (s2f);	            
      \end{tikzpicture}
      &
      \begin{tikzpicture}[node distance=0.6cm]   
	    \node (s1) {$s=($}; 
	    \node (op) [below left of= s1] {$\leftrightarrow^{\mathrm{C}}:$}; 
	    \node (s11) [right of= s1]{$a$};
	    \node (s12) [right of= s11]{$c$};
	    \node (s13) [right of= s12] {$a$}; 
	    \node (s1f) [right of= s13] {$)$};  
	    \node (s2)  [below right of= op]{$s'=($};  
	    \node (s21) [right of= s2]{$c$};
	    \node (s22) [right of= s21]{$a$};
	    \node (s23) [right of= s22] {$c$}; 
	    \node (s2f) [right of= s23] {$)$};  	        	  
	    \path[->]
	      (s21)   edge[swap]  node {3} (s11)
	      (s12)   edge  node {3} (s22)
	      (s23)   edge[swap]  node {3} (s13)
	      (s1f)   edge [bend left]  node {9} (s2f);	            
      \end{tikzpicture}
      \\
      \hline
      \begin{tikzpicture}[node distance=0.6cm]  
	    \node (s1) {$s=($};  
	    \node (op) [below left of= s1]{$\leftarrow^{\mathrm{C}}:$}; 
	    \node (s11) [right of= s1]{$a$};
	    \node (s12) [right of= s11]{$c$};
	    \node (s13) [right of= s12] {$a$}; 
	    \node (s1f) [right of= s13] {$)$};  
	    \node (s2)  [below right of= op]{$s'=($};  
	    \node (s21) [right of= s2]{$c$};
	    \node (s22) [right of= s21]{$a$};
	    \node (s23) [right of= s22] {$c$}; 
	    \node (s2f) [right of= s23] {$)$};  	        	  
	    \path[->]
	      (s21)   edge[swap]  node {3} (s11)
	      (s22)   edge[swap]  node {4} (s12)
	      (s23)   edge[swap]  node {3} (s13)
	      (s1f)   edge [bend left]  node {10} (s2f);	            
      \end{tikzpicture}
      &
      \begin{tikzpicture}[node distance=0.6cm]  
	    \node (s1) {$s=($};  
	    \node (s11) [right of= s1]{$a$};
	    \node (s12) [right of= s11]{$c$};
	    \node (s13) [right of= s12] {$a$}; 
	    \node (s1f) [right of= s13] {$)$};  
	    \node (b1) [below of= s11]{$b$};
	    \node (op)[left of= b1,node distance=1cm] {$\rightrightarrows^{\mathrm{C}}:$}; 
	    \node (b2) [below of= s12]{$b$};
	    \node (b3) [below of= s13] {$b$}; 
	    \node (s21) [below of= b1]{$c$};
	    \node (s2)  [left of= s21]{$s'=($};  
	    \node (s22) [right of= s21]{$a$};
	    \node (s23) [right of= s22] {$c$}; 
	    \node (s2f) [right of= s23] {$)$};  	        	  
	    \path[->]
	      (s11)   edge  node {1} (b1)
	      (s12)   edge  node {1} (b2)
	      (s13)   edge  node {1} (b3)
	      (s21)   edge[swap]  node {1} (b1)
	      (s22)   edge[swap]  node {1} (b2)
	      (s23)   edge[swap]  node {1} (b3)
	      (s1f)   edge [bend left]  node {6} (s2f);	            
      \end{tikzpicture}
      \\
    \end{tabular}      
    }  
    \caption{Examples of sequence comparisons}
    \label{fig ex split up comp}
  \end{figure}
    \end{minipage}
  \end{example}
  
  Any word comparison function can be used as a language operator in order to compute the set of words that are at a bounded distance from some word of a given language.
  
  \begin{definition}\label{def fkl}
    Let $L$ be a language over an alphabet $\Sigma$, $\mathbb{F}$ a word comparison function and $k$ an integer in $\mathbb{N}\cup\{\indef\}$. Then:

    \centerline{$\mathbb{F}_{k}(L)=
      \left\{
        \begin{array}{l@{\ }l}
          \{w\in\Sigma^*\mid\exists u\in L, \mathbb{F}(w,u)\in\Intervalle{0}{k}\} & \text{ if } k\in\mathbb{N},\\
          \emptyset & \text{ otherwise.}\\
        \end{array}
      \right.
    $
    }
  \end{definition}
    
  The operator $\mathbb{F}_{k}$ is called a \emph{similarity operator}. Let us notice that $\mathbb{F}_{k}(\mathbb{F}_{k'}(L))$ is not necessarily equal to $\mathbb{F}_{k+k'}(L)$. Indeed, let us consider the three languages $L_1=\mathbb{F}_{1}(\{a\})$, $L_2=\mathbb{F}_{1}(\mathbb{F}_{1}(\{a\}))$ and $L_3=\mathbb{F}_{2}(\{a\})$ over the alphabet $\Sigma=\{a,b\}$ with $\mathbb{F}$ the word comparison function associated with the symbol-wise sequence comparison function $\mathcal{F}$ defined for any symbol $\alpha,\beta$ by $\mathcal{F}((\alpha),(\beta))=0$ if $\alpha=\beta$, $\mathcal{F}((\alpha),(\beta))=2$ otherwise. Then $L_1=L_2=\{a\}$ whereas
 $L_3=\{\varepsilon,a,b,aa, ab, ba\}$.
 
 \begin{definition}      
  An \emph{approximate regular expression}\footnote{The fact that any ARE denotes a regular language is proved in Corollary~\ref{cor lang are rat}.} (\textbf{ARE}) $E$ over an alphabet $\Sigma$ is inductively defined by:
  
  \centerline{$E=\emptyset$, $E=\varepsilon$, $E=a$,}
  
  \centerline{$E=F+G$, $E=(F\cdot G)$, $E=(F^*)$,}
  
  \centerline{$E=\mathbb{F}_k(F)$}
  
   where $a$ is any symbol in $\Sigma$, $F$ and $G$ are any two AREs, $\mathbb{F}$ is any symbol-wise word comparison function and $k$ is any integer in $\mathbb{N}\cup\{\indef\}$.
 \end{definition}
 
 \begin{definition}
  The \emph{language denoted} by an ARE $E$ is the language $L(E)$ inductively defined by:
  
  \centerline{ $L(\emptyset)=\emptyset$, $L(\varepsilon)=\{\varepsilon\}$, $L(a)=\{a\}$,
  }
  
  \centerline{ $L(F+G)=L(F)\cup L(G)$, $L(F\cdot G)=L(F)\cdot L(G)$, $L(F^*)=L(F)^*$,}
  
  \centerline{
    $L(\mathbb{F}_k(F))=\mathbb{F}_k(L(F))$.
  }
  
   where $a$ is any symbol in $\Sigma$, $F$ and $G$ are any two AREs, $\mathbb{F}$ is any symbol-wise word comparison function and $k$ is any integer in $\mathbb{N}\cup\{\indef\}$.
 \end{definition}
 
  In order to prove that the language denoted by an ARE $E$ is regular, we will show how to compute a finite automaton recognizing $L(E)$.
  
\section{Hamming and Levenshtein Derivation Formulae}\label{se:hamLevDeriv}

In this section, we extend the derivation formulae to the family of approximate regular expressions where the word comparison functions are the usual Hamming and Levenshtein distances. Notice that the proofs are not given in this section, but will be stated in Section~\ref{se:lienEntreDeuxForm}, deduced from the proof of the general case provided in Section~\ref{se: deriv are}. 
  
  Let $a$ be a symbol in an alphabet $\Sigma$ and $L$ be a regular language over $\Sigma$. Let $k$ be an integer and $L'=\mathbb{L}_k(L)$. The quotient of $L'$ w.r.t. $a$ is by definition the set of words $w$ such that there exists a word $w'$ in $L'$ satisfying $\mathbb{L}(aw,w')\leq k$. Consequently, we distinguish the four following cases, according to the way $w'$ can be split:
  \begin{enumerate}
    \item $w'=aw''$ and $\mathbb{L}(a,a)+\mathbb{L}(w,w'')\leq k$: hence the word $w''$ is by definition in $a^{-1}(L)$ and $\mathbb{L}(w,w'')\leq k$. Consequently, $w\in \mathbb{L}_k(a^{-1}(L))$;
    \item $w'=bw''$ with $b\in\Sigma\setminus\{a\}$ and $\mathbb{L}(a,b)+\mathbb{L}(w,w'')\leq k$: hence the word $w''$ is by definition in $b^{-1}(L)$ and $\mathbb{L}(w,w'')\leq k-1$. Consequently, $w\in \mathbb{L}_{k-1}(b^{-1}(L))$;
    \item $\mathbb{L}(a,\varepsilon)+\mathbb{L}(w,w')\leq k$: hence the word $w'$ is by definition in $L$ and $\mathbb{L}(w,w')\leq k-1$. Consequently, $w\in \mathbb{L}_{k-1}(L)$;
    \item $w'=bw''$ with $b\in\Sigma$ and $\mathbb{L}(\varepsilon,b)+\mathbb{L}(aw,w'')\leq k$: hence the word $w''$ is by definition in $b^{-1}(L)$ and $\mathbb{L}(aw,w'')\leq k-1$. Consequently, $w\in a^{-1}(\mathbb{L}_{k-1}(b^{-1}(L)))$.
  \end{enumerate}
  
  Notice that for the Hamming distance, only the two first cases need to be considered since $\mathbb{H}(\alpha,\beta)=\indef$ whenever $\alpha=\varepsilon$ and $\beta\neq\varepsilon$ or $\alpha\neq \varepsilon$ and $\beta=\varepsilon$.
  
  As a consequence, the following lemma can be stated.  
  
  \begin{lemma}
    Let $L$ be a regular language over an alphabet $\Sigma$, $a$ be a symbol in $\Sigma$ and $k$ be an integer in $\mathbb{N}\cup\{\indef\}$. Then:
    
    \centerline{
      $a^{-1}(\mathbb{H}_k(L))= \mathbb{H}_k(a^{-1}(L)) \cup \bigcup_{b\in\Sigma\setminus\{a\}} \mathbb{H}_{k-1}(b^{-1}(L))$,
    }
    
    \centerline{
      $a^{-1}(\mathbb{L}_k(L))=
      \left(
          \begin{array}{l@{\ }l}
          & \mathbb{L}_k(a^{-1}(L)) \\
          \cup & \bigcup_{b\in\Sigma\setminus\{a\}} \mathbb{L}_{k-1}(b^{-1}(L))\\
          \cup & \mathbb{L}_{k-1}(L)\\
          \cup & a^{-1}(\bigcup_{b\in\Sigma}\mathbb{L}_{k-1}(b^{-1}(L))) \\
          \end{array}
          \right)$.
    }
  \end{lemma}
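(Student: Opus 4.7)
My plan is to prove each set equality by showing both inclusions via a case analysis on an optimal split-up witnessing the comparison value. First I unfold: $w\in a^{-1}(\mathbb{F}_k(L))$ iff $aw\in\mathbb{F}_k(L)$ iff there exists $w'\in L$ with $\mathbb{F}(aw,w')\leq k$. Using the split-up characterization from Section~\ref{se: dist cost comp}, this is equivalent to the existence of an integer $n\geq 1$ and of split-ups $(u_1,\ldots,u_n)\in\mathrm{Split}_n(aw)$, $(v_1,\ldots,v_n)\in\mathrm{Split}_n(w')$ such that $\sum_{i=1}^n \mathrm{F}(u_i,v_i)\leq k$, where $\mathrm{F}\in\{\mathrm{H},\mathrm{L}\}$. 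By Condition~4, I may freely assume no column satisfies $u_i=v_i=\varepsilon$, so every column is non-trivial. The whole argument then reduces to inspecting the first column $(u_1,v_1)$.

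For the Hamming side, the constraint $\mathrm{H}(\alpha,\beta)=\indef$ whenever exactly one of $\alpha,\beta$ equals $\varepsilon$ forces $u_i,v_i\in\Sigma$ for every column of a finite-cost split-up. Hence $|w'|=|aw|$, so $w'=bw''$ with $|w''|=|w|$, and the first column contributes $\mathrm{H}(a,b)\in\{0,1\}$ while the remaining columns realize $\mathbb{H}(w,w'')\leq k-\mathrm{H}(a,b)$. I split into $b=a$ (giving $w\in\mathbb{H}_k(a^{-1}(L))$) and $b\neq a$ (giving $w\in\mathbb{H}_{k-1}(b^{-1}(L))$). The reverse inclusion is symmetric: given a witness in either summand, I prepend the column $(a,b)$ to produce a split-up witnessing $\mathbb{H}(aw,bw'')\leq k$.

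For the Levenshtein side, the same first-column analysis applies, but now $u_1\in\{a,\varepsilon\}$ and $v_1\in\Sigma\cup\{\varepsilon\}$, which after the non-triviality assumption leaves exactly four scenarios with column costs $0,1,1,1$ respectively:
\[
(u_1,v_1)\in\{(a,a),\ (a,b\neq a),\ (a,\varepsilon),\ (\varepsilon,b\in\Sigma)\}.
\]
The first three peel off the leading $a$ from $aw$ and reduce to $\mathbb{L}(w,\cdot)\leq k-c$ against a quotient of $L$, yielding the summands $\mathbb{L}_k(a^{-1}(L))$, $\mathbb{L}_{k-1}(b^{-1}(L))$ and $\mathbb{L}_{k-1}(L)$. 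The fourth consumes only the leading $b$ of $w'$ without consuming any letter of $aw$, so it says $\mathbb{L}(aw,w'')\leq k-1$ with $w''\in b^{-1}(L)$, which rewrites as $aw\in\mathbb{L}_{k-1}(b^{-1}(L))$, i.e.\ $w\in a^{-1}(\mathbb{L}_{k-1}(b^{-1}(L)))$. Each converse direction simply prepends the corresponding column to a witnessing pair of split-ups.

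The step I expect to require the most care is the fourth Levenshtein case, since the RHS contains an iterated quotient $a^{-1}(\mathbb{L}_{k-1}(b^{-1}(L)))$ which looks circular and threatens to loop. I would justify well-foundedness by observing that each application of this branch strictly decreases the budget $k$, so it can be taken at most $k$ times before reducing to one of the other three branches (or to $\emptyset$ when the budget is exhausted, since $\mathbb{L}_{-1}(\cdot)=\emptyset$ by Definition~\ref{def fkl}). Once this subtlety is handled, the remaining verifications are routine column-by-column manipulations of split-ups using the symbol-wise additivity enforced by Condition~3.
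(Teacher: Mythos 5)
Your proof is correct and follows essentially the same route as the paper: the four-way case analysis on the first aligned column of $aw$ against $w'\in L$ (match, substitution, deletion of $a$, insertion of $b$) is exactly the decomposition the paper gives for this lemma, whose formal proof is then obtained by specializing the general quotient formula of Lemma~\ref{lem quot app lang}. The only difference is presentational: you argue directly for $\mathbb{H}$ and $\mathbb{L}$, where the machinery of $\mathcal{W}_\mathcal{F}$ and $X(L)$ is unnecessary because $\mathcal{H}((\varepsilon),(b))=\indef$ and $\mathcal{L}((\varepsilon),(b))=1$ rule out zero-cost insertions, and your remark that the nested term $a^{-1}(\mathbb{L}_{k-1}(b^{-1}(L)))$ is a well-defined set whose budget strictly decreases mirrors the paper's observation that a recursive loop can only arise when $\mathcal{F}((\varepsilon),(b))=0$.
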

  
  In the remaining of this section, we consider restricted AREs that only use Hamming and Levenshtein distances.
  
  \begin{definition}
    Let $\Sigma$ be an alphabet.
    An \emph{Hamming-Levenshtein Approximate} \emph{Regular Expression} (HLARE)
    over $\Sigma$
    is an ARE  
    over $\Sigma$
    satisfying the following condition:
  
  \centerline{For any subexpression $G=\mathbb{F}_k(H)$, either $\mathbb{F}=\mathbb{H}$ or $\mathbb{F}=\mathbb{L}$.}
  \end{definition}

\subsection{Brzozowski Derivatives for an HLARE}

In this subsection, we extend the Brzozowski derivation to the HLAREs. From an HLARE $E$ and a word $w$, Brzozowski derivation allow us to syntactically compute an HLARE $D'_w(E)$, called the dissimilar derivative of $E$ w.r.t. $w$, denoting the language $w^{-1}(L(E))$.

\begin{definition}\label{def deriv diss hlare}
  Let $E$ be an HLARE over an alphabet $\Sigma$. Let $a$ and $b$ be two distinct symbols in $\Sigma$ and $w$ be a word in $\Sigma^*$.  The \emph{dissimilar derivative} of $E$ w.r.t. the symbol $a$ (resp. the word $w$) is the HLARE $D'_a(E)$ (resp. $D'_w(E)$) defined as follows:
  
  \centerline{
    $D'_a(\varepsilon)=D'_a(\emptyset)=D'_a(b)=\emptyset$,
    $D'_a(a)=\varepsilon$,
  }
  
  \centerline{
    $D'_a(E_1+E_2)=(D'_a(E_1)+D'_a(E_2))_{\sim_s}$,
    $D'_a(E_1^*)=D'_a(E_1)\cdot E_1^*$,
  }
  
  \centerline{
    $D'_a(E_1\cdot E_2)=
      \left\{
        \begin{array}{l@{\ }l}
          (D'_a(E_1)\cdot E_2 +D'_a(E_2))_{\sim_s} & \text{ if } \mathrm{r}(\varepsilon,E_1)=1,\\
          (D'_a(E_1)\cdot E_2)_{\sim_s} & \text{ if }\mathrm{r}(\varepsilon,E_1)=0,\\
        \end{array}
      \right.$
  }
  
  \centerline{
    $D'_a(\mathbb{H}_k(E_1))=( \mathbb{H}_k(D'_a(E_1)) + \sum_{b\in\Sigma\setminus\{a\}} \mathbb{H}_{k-1}(D'_b(E_1)))_{\sim_s}$,
  }
  
  \centerline{
    $D'_a(\mathbb{L}_k(E_1))=
      \left(
          \begin{array}{l@{\ }l}
          & \mathbb{L}_k(D'_a(E_1)) \\
          + & \sum_{b\in\Sigma\setminus\{a\}} \mathbb{L}_{k-1}(D'_b(E_1))\\
          + & \mathbb{L}_{k-1}(E_1)\\
          + & D'_a(\sum_{b\in\Sigma}\mathbb{L}_{k-1}(D'_b(E_1))) \\
          \end{array}
          \right)_{\sim_s}$,
  }
  
  \centerline{
    $D'_w(E)=
      \left\{
        \begin{array}{l@{\ }l}
          E & \text{ if } w=\varepsilon,\\          
          D'_{u}(D'_a(E)) & \text{ if }w=au\wedge a\in\Sigma\ \wedge\ u\in\Sigma^*,\\
        \end{array}
      \right.
    $
  }
  
  where $E_1$ and $E_2$ are any two HLARES and $k$ is any integer in $\mathbb{N}\cup\{\indef\}$.
\end{definition}

\begin{lemma}
  Let $E$ be an HLARE over an alphabet $\Sigma$. Let $w$ be a word in $\Sigma^*$. 
  Then:
  
  \centerline{
    $L(D'_w(E))=w^{-1}(L(E))$.
  }
\end{lemma}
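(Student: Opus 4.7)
The plan is to reduce first to the single-symbol case and then argue by well-founded induction on HLAREs. For the word case, the definition of $D'_w$ is recursive in $|w|$, and $w^{-1}(L) = u^{-1}(a^{-1}(L))$ for $w = au$, so a straightforward induction on $|w|$ reduces the statement to proving $L(D'_a(E)) = a^{-1}(L(E))$ for every HLARE $E$ and every symbol $a$. For the classical constructors ($\emptyset$, $\varepsilon$, single symbols, sum, concatenation, star), the argument is the one given by Brzozowski, the only extra observation being that the ACI normal form $(\cdot)_{\sim_s}$ preserves the denoted language, so the outer $\sim_s$ applied in each defining equation is transparent for the proof. For the Hamming constructor $E = \mathbb{H}_k(E_1)$, applying the preceding lemma with $L = L(E_1)$ together with the induction hypothesis $L(D'_b(E_1)) = b^{-1}(L(E_1))$ for each $b \in \Sigma$ yields the formula defining $D'_a(\mathbb{H}_k(E_1))$ verbatim.

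The main obstacle is the Levenshtein constructor $E = \mathbb{L}_k(E_1)$, because the defining equation contains the term $D'_a\bigl(\sum_{b \in \Sigma} \mathbb{L}_{k-1}(D'_b(E_1))\bigr)$ whose argument is not a subexpression of $E$, so plain structural induction does not suffice. To handle this I would equip HLAREs with a well-founded measure that strictly decreases in this recursive occurrence, for example the lexicographic pair $(m(E), |E|)$ where $m(E)$ is the sum over all Levenshtein-subexpressions $\mathbb{L}_j(\cdot)$ of $E$ of the index $j$ (with the convention $\bot < 0$ or, equivalently, treating $\mathbb{L}_\bot$ as a base case denoting $\emptyset$). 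In the Levenshtein rule, the outer $D'_a$ is applied to an expression whose Levenshtein indices are all at most $k-1$, so $m$ strictly drops; the base case $k = \bot$ gives $L(\mathbb{L}_\bot(E_1)) = \emptyset$ by Definition~\ref{def fkl} and an empty derivative.

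Once this framework is in place, the Levenshtein formula is justified line by line by the four-summand identity of the preceding lemma,
\[
a^{-1}(\mathbb{L}_k(L)) = \mathbb{L}_k(a^{-1}(L)) \cup \bigcup_{b \neq a} \mathbb{L}_{k-1}(b^{-1}(L)) \cup \mathbb{L}_{k-1}(L) \cup a^{-1}\Bigl(\bigcup_{b \in \Sigma} \mathbb{L}_{k-1}(b^{-1}(L))\Bigr),
\]
taken with $L = L(E_1)$: the induction hypothesis applied to each $D'_b(E_1)$ (of strictly smaller size) handles the first three summands, and applied to the expression $\sum_{b \in \Sigma} \mathbb{L}_{k-1}(D'_b(E_1))$ (of strictly smaller measure) handles the fourth. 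Combining with the language-preservation of $(\cdot)_{\sim_s}$ and with the fact that $L\bigl(\sum_{b} \mathbb{L}_{k-1}(D'_b(E_1))\bigr) = \bigcup_b \mathbb{L}_{k-1}(b^{-1}(L(E_1)))$ yields the desired equality $L(D'_a(\mathbb{L}_k(E_1))) = a^{-1}(L(\mathbb{L}_k(E_1)))$, completing the induction.
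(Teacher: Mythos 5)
Your reduction to the one-symbol case, the classical constructors, and the Hamming case are all fine, and you correctly isolate the real difficulty: the summand $D'_a\bigl(\sum_{b\in\Sigma}\mathbb{L}_{k-1}(D'_b(E_1))\bigr)$ in the Levenshtein rule, which is not covered by structural induction. But the well-founded measure you propose does not strictly decrease there, so the induction as written does not go through. With $m(E)$ the sum of the indices of \emph{all} Levenshtein subexpressions, the expression $G=\sum_{b\in\Sigma}\mathbb{L}_{k-1}(D'_b(E_1))$ carries $\mathrm{Card}(\Sigma)$ outermost operators of index $k-1$ \emph{plus} every Levenshtein operator occurring inside the derivatives $D'_b(E_1)$, and derivation duplicates rather than shrinks that inner structure. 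Already for $\Sigma=\{a,b\}$ and $E=\mathbb{L}_3(a^*b^*)$ one gets $G=\mathbb{L}_2(a^*b^*)+\mathbb{L}_2(b^*)$, so $m(G)=4>3=m(E)$; and reading $m$ as the maximum index instead of the sum fails for $E=\mathbb{L}_2(\mathbb{L}_5(a^*))$, where the derivatives of the inner $\mathbb{L}_5(a^*)$ still contain index-$5$ operators, while the size tie-break is useless because $G$ is larger than $E$. So the sentence ``the outer $D'_a$ is applied to an expression whose Levenshtein indices are all at most $k-1$'' is true only of the outermost operators of $G$, not of those inherited from $D'_b(E_1)$, and the key strict-decrease claim is false in general.

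The decreasing quantity has to be blind to the interior of the derivatives $D'_b(E_1)$. One repair of your direct route is to induct lexicographically on the pair (nesting depth of similarity operators in the expression, maximal index of its outermost similarity operators), after first proving the invariant that derivation does not increase the nesting depth (every derivative of $\mathbb{L}_k(E_1)$ is a sum of expressions $\mathbb{L}_{k'}(G')$ with $k'\leq k$ and $G'$ a derivative of $E_1$ --- this is essentially the structure fact behind Lemma~\ref{lem deriv ham fini}); then $E_1$ has smaller depth, while $G$ has depth at most that of $E$ and outermost index $k-1<k$. This is also, in substance, what the paper does, though by a different route: it never proves the HLARE lemma directly, but proves the quotient formula of Lemma~\ref{lem quot app lang} and the general ARE statement of Lemma~\ref{lem form deriv ham ok} by an induction on the argument of the similarity operator combined with a recurrence over $k$ (the induction hypothesis being invoked for derivatives of $E_1$, not only subexpressions), and then transfers the result to the HLARE formulae of Definition~\ref{def deriv diss hlare} via the syntactic identification of $D'_a(E)$ with $\frac{d'}{d'_a}(E)$ up to a spurious $+\,\emptyset$ given in Proposition~\ref{prop eq der HLARE ARE}. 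Your three-line justification of the Levenshtein formula from the quotient identity is the right computation; what is missing is a sound well-founded order under which the fourth summand is ``smaller''.
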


Next lemma shows that the boolean $\mathrm{r}(\varepsilon,E)$ is syntactically computable for any HLARE $E$ using dissimilar derivatives.

\begin{lemma}
  Let $E=\mathbb{H}_k(E')$ and $F=\mathbb{L}_k(F')$ be two HLAREs over an alphabet $\Sigma$. Then the two following propositions are satisfied:
  \begin{itemize}
    \item $\varepsilon\in L(E) \Leftrightarrow \varepsilon \in L(E')$,
    \item $\varepsilon\in L(F) \Leftrightarrow \varepsilon \in L(F')\cup \bigcup_{a\in\Sigma} L(\mathbb{L}_{k-1}(D'_a(E')))$.
  \end{itemize}
\end{lemma}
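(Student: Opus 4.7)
The plan is to reduce each of the two equivalences to computing the values of $\mathbb{H}(\varepsilon, w)$ and $\mathbb{L}(\varepsilon, w)$ directly from the split-up definition of the word comparison function, and then to combine this with the quotient identity $L(D'_a(G)) = a^{-1}(L(G))$ (guaranteed by the preceding lemma) in order to expose the right-hand-side decomposition in the Levenshtein case. I will treat the two items separately; the second item appears to contain a typo ($E'$ should be $F'$), so I will prove the intended statement with $F'$.

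First I would observe that in both cases, membership of $\varepsilon$ in the language of a similarity expression $\mathbb{F}_k(G)$ amounts to the existence of some $w \in L(G)$ with $\mathbb{F}(\varepsilon, w) \leq k$. The only split-up of $\varepsilon$ of a given size $m$ is the all-$\varepsilon$ tuple $(\varepsilon, \ldots, \varepsilon)$. Thus for any word $w$ of length $n$ and any size $m \geq n$, the value $\mathcal{F}((\varepsilon, \ldots, \varepsilon), (s'_1, \ldots, s'_m))$ with $(s'_1, \ldots, s'_m) \in \mathrm{Split}_m(w)$ reduces, by Condition~3, to $\sum_i \mathrm{F}(\varepsilon, s'_i)$ where $\mathrm{F}$ is the underlying cost function.

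For the Hamming case, the key fact is that $\mathrm{H}(\varepsilon, b) = \indef$ as soon as $b \in \Sigma$, so the sum above is finite only when every $s'_i$ equals $\varepsilon$, which forces $w = \varepsilon$; and in that case the value is $0$. Hence $\mathbb{H}(\varepsilon, w) = 0$ if $w = \varepsilon$ and $\indef$ otherwise, which gives $\varepsilon \in L(\mathbb{H}_k(E')) \Leftrightarrow \varepsilon \in L(E')$ for every $k \in \mathbb{N} \cup \{\indef\}$.

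For the Levenshtein case, $\mathrm{L}(\varepsilon, \varepsilon) = 0$ and $\mathrm{L}(\varepsilon, b) = 1$ for $b \in \Sigma$, so the sum above is simply the number of non-$\varepsilon$ entries of the split-up, namely $|w|$. Therefore $\mathbb{L}(\varepsilon, w) = |w|$ for every word $w$, and $\varepsilon \in L(\mathbb{L}_k(F'))$ is equivalent to the existence of some $w' \in L(F')$ with $|w'| \leq k$. To obtain the claimed decomposition, I would split on whether $w' = \varepsilon$ (which gives the disjunct $\varepsilon \in L(F')$) or $w' = a w''$ with $a \in \Sigma$ and $|w''| \leq k-1$ (which, by the identity $L(D'_a(F')) = a^{-1}(L(F'))$, means $w'' \in L(D'_a(F'))$, and then $|w''| \leq k-1$ re-expresses as $\varepsilon \in L(\mathbb{L}_{k-1}(D'_a(F')))$ by the same computation applied to $k-1$). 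The converse direction is obtained by reversing these two constructions.

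The only subtle point is the handling of split-ups of different sizes for $\varepsilon$: one must check that no choice of size can produce a smaller value, but since padding a split-up of $w$ with extra $\varepsilon$-slots contributes only $\mathrm{L}(\varepsilon, \varepsilon) = 0$ (resp.\ is impossible under $\mathrm{H}$), the value $|w|$ (resp.\ the finiteness condition $w = \varepsilon$) is stable across all admissible sizes. I expect this size-independence argument to be the main (minor) obstacle; everything else is a direct unfolding of the definitions together with the quotient formula for $D'_a$.
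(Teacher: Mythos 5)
Your proposal is correct, but it takes a genuinely different route from the paper. The paper does not prove this lemma where it is stated: it defers the proof to Section~\ref{se:lienEntreDeuxForm}, obtaining it as a corollary of the general ARE result (Lemma~\ref{lem eps are}, proved through the zero-deletion-cost machinery $\mathcal{W}_\mathcal{F}$ and $X(E')$) combined with Proposition~\ref{prop eq der HLARE ARE}, which shows that the HLARE formulae of Definition~\ref{def deriv diss hlare} coincide, up to a superfluous $\emptyset$, with the general formulae of Definition~\ref{def diss deriv are}; the two items then fall out by noting that $\mathcal{H}((\varepsilon),(b))=\indef$ and $\mathcal{L}((\varepsilon),(b))=1$, so that $\mathcal{W}_\mathcal{H}=\mathcal{W}_\mathcal{L}=\{\varepsilon\}$ and $X(E')=\{E'\}$. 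You instead compute $\mathbb{H}(\varepsilon,w)$ and $\mathbb{L}(\varepsilon,w)$ directly from the split-up definition ($\indef$ unless $w=\varepsilon$, respectively $|w|$), then split a witness $w'\in L(F')$ with $|w'|\leq k$ at its first letter and invoke $L(D'_a(F'))=a^{-1}(L(F'))$ from the preceding lemma; this is essentially the specialization of the paper's chain of equivalences with the prefix $w_1\in\mathcal{W}_\mathcal{F}$ forced to be $\varepsilon$, and your size-independence remark correctly disposes of the only delicate point. What your route buys is a self-contained, elementary argument for the two concrete distances; what the paper's route buys is uniformity (one proof for every symbol-wise comparison function) together with the halting of the membership test, which your argument does not address but is also not claimed here. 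Two small remarks: you are right that $E'$ in the second item is a typo for $F'$; however, your parenthetical claim that the Hamming equivalence holds for every $k\in\mathbb{N}\cup\{\indef\}$ conflicts with Definition~\ref{def fkl}, under which $\mathbb{H}_{\indef}(E')$ denotes $\emptyset$ --- as in Lemma~\ref{lem eps are}, the right-hand sides should carry the conjunct $k\neq\indef$ (the statement of the lemma is itself loose on this point, so this does not affect the substance of your argument).
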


Given an HLARE $E$, we denote by $\mathcal{D}_{HL}(E)$ the set $\{D'_w(E)\mid w\in\Sigma^*\}$ of the dissimilar derivatives of $E$.

\begin{lemma}
  The set $\mathcal{D}_{HL}(E)$ of dissimilar derivatives of an HLARE $E$ is finite.
\end{lemma}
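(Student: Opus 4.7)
The plan is a structural induction on the HLARE $E$. The atomic cases $E \in \{\emptyset, \varepsilon, a\}$ are immediate, and for the three classical operators $E = F + G$, $F \cdot G$, $F^{\ast}$ the clauses of Definition~\ref{def deriv diss hlare} are the usual Brzozowski ones, so $\mathcal{D}_{HL}(E)$ is finite by Brzozowski's original argument. It thus suffices to treat the two similarity operators, and in both cases I would strengthen the statement so as to describe the \emph{shape} of a dissimilar derivative, not just its number.

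For $E = \mathbb{H}_k(F)$, I would prove by induction on $|w|$ that every $D'_w(E)$ is, modulo ACI, a sum of expressions drawn from the set $T_H(E) = \{\mathbb{H}_j(G) \mid 0 \leq j \leq k,\ G \in \mathcal{D}_{HL}(F)\}$. The base case $w = \varepsilon$ places $E$ itself in $T_H(E)$. For the induction step it is enough to observe that the defining clause $D'_a(\mathbb{H}_j(G)) = \mathbb{H}_j(D'_a(G)) + \sum_{b \neq a} \mathbb{H}_{j-1}(D'_b(G))$ is already a sum over $T_H(E)$, because $D'_a(G), D'_b(G) \in \mathcal{D}_{HL}(F)$ by closure of dissimilar derivatives under further derivation. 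By the outer induction hypothesis $\mathcal{D}_{HL}(F)$ is finite, whence $T_H(E)$ is finite, and the number of ACI-equivalence classes of sums over a finite set is finite, which bounds $|\mathcal{D}_{HL}(E)|$.

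For $E = \mathbb{L}_k(F)$ the same strategy targets the set $T_L(E) = \{\mathbb{L}_j(G) \mid j \in \{\indef, 0, \ldots, k\},\ G \in \mathcal{D}_{HL}(F)\}$, which is again finite once $\mathcal{D}_{HL}(F)$ is. The main obstacle is that the defining clause for $D'_a(\mathbb{L}_k(G))$ contains a further $D'_a$ applied to $\sum_b \mathbb{L}_{k-1}(D'_b(G))$, so it is not manifestly of the target shape. I would resolve this by an auxiliary induction on $k$, proving that for every HLARE $G$ and every $k \in \mathbb{N} \cup \{\indef\}$, $D'_a(\mathbb{L}_k(G))$ is an ACI-sum of terms $\mathbb{L}_j(G')$ with $j \leq k$ (or $j = \indef$) and $G' \in \mathcal{D}_{HL}(G)$. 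The inner base case $k = \indef$ is trivial since the $\mathbb{L}_{\indef}$ subterms denote $\emptyset$ and can be absorbed, terminating the recursion. The inner inductive step unfolds the formula, distributes $D'_a$ over the sum in the recursive term, and applies the $k-1$ hypothesis to each $D'_a(\mathbb{L}_{k-1}(D'_b(G)))$, using that iterated derivatives of $G$ remain in $\mathcal{D}_{HL}(G)$ and that the indices stay bounded by $k$.

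Once this auxiliary claim is established, the induction on $|w|$ propagates the shape of derivatives from $D'_a$ to arbitrary $D'_w$ exactly as in the Hamming case, and the finiteness of $\mathcal{D}_{HL}(E)$ follows from that of $T_L(E)$. The only genuinely new ingredient is therefore the inner induction on the index $k$, needed to tame the recursive $D'_a$ hidden inside the Levenshtein clause; every other step is a direct adaptation of Brzozowski's classical finiteness proof.
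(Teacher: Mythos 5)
Your proof is correct, and its technical core is the same as the paper's: the invariant that every derivative of $\mathbb{F}_k(F)$ is, modulo ACI, a sum of expressions $\mathbb{F}_{k'}(G)$ with $k'$ bounded by $k$ and $G$ a dissimilar derivative of $F$, together with a secondary recurrence on $k$ to tame the nested $D'_a(\cdot)$ in the Levenshtein clause. The difference is purely one of route: the paper does not prove this lemma directly from Definition~\ref{def deriv diss hlare} but defers it, proving the general ARE statement (Lemma~\ref{lem deriv ham fini}, by induction on the structure of $E'$ and recurrence on $k$, with exactly your shape property) and then transferring it to HLAREs via Proposition~\ref{prop eq der HLARE ARE}, which shows that the HLARE formulae and the instantiated general formulae of Definition~\ref{def diss deriv are} coincide up to a spurious $\emptyset$ summand. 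Your specialization buys a self-contained argument for the two concrete distances, and it makes explicit (via the inner induction on $k$) why the self-referential Levenshtein clause terminates, which the paper only argues in the general setting; the paper's detour buys uniformity, since one proof covers every symbol-wise comparison function and the HLARE case follows for free. Two cosmetic points: your set $T_H(E)$ should also admit the index $\indef$ (or you should invoke the reduction $\mathbb{F}_\indef(E)=\emptyset$), since $\mathbb{H}_{j-1}$ with $j=0$ produces it, exactly as you already do for $T_L(E)$; and your appeal to Brzozowski's classical argument for $+$, $\cdot$, $^*$ is legitimate because those clauses are unchanged and only need finiteness of the derivative sets of the operands, which your structural induction supplies.
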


From this finite set, one can compute a deterministic finite automaton that recognizes $L(E)$.

\begin{definition}
  Let $E$ be an HLARE over an alphabet $\Sigma$. The tuple $B'(E)=(\Sigma,Q,I,F,\delta)$ is defined by:
  \begin{itemize}
    \item $Q=\mathcal{D}_{HL}(E)$,
    \item $I=\{E\}$,
    \item $F=\{q\in Q\mid\mathrm{r}(\varepsilon,q)=1\}$,
    \item $\forall (q,a)\in Q\times \Sigma$, $\delta(q,a)=\{D'_a(q)\}$.
  \end{itemize}
\end{definition}

\begin{proposition}
  Let $E$ be an HLARE over an alphabet $\Sigma$. Then:
  
  \centerline{$B'(E)$ is a deterministic finite automaton that recognizes $L(E)$.}
\end{proposition}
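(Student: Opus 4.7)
The proof is essentially an assembly of the three lemmas immediately preceding the proposition, so my plan is to verify the DFA properties first and then chain together the facts about derivatives to establish language equality.

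First I would check that $B'(E)$ is a well-formed deterministic finite automaton. Finiteness of the state set $Q = \mathcal{D}_{HL}(E)$ is exactly the content of the preceding lemma on the finiteness of dissimilar derivatives. The initial-state set $I = \{E\}$ has cardinality one, and by the definition of the transition function $\delta(q,a) = \{D'_a(q)\}$, for every $(q,a) \in Q \times \Sigma$ the set $\delta(q,a)$ is a singleton. Hence $B'(E)$ fits the definition of a deterministic finite automaton given in Section~\ref{se:pre}. A small point worth noting is that the transition is well-defined because $D'_a(q) \in \mathcal{D}_{HL}(E)$: if $q = D'_w(E)$ then $D'_a(q) = D'_{wa}(E)$ by the inductive clause $D'_{au}(E) = D'_u(D'_a(E))$ applied symmetrically (one shows by induction on $|w|$ that $D'_w(D'_a(E)) = D'_a(D'_w(E))$, or more directly one observes that $D'_{wa}(E) = D'_a(D'_w(E))$ by unfolding the definition through the letters of $w$).

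Next I would prove, by induction on the length of $w \in \Sigma^*$, that
\[
  \delta(I, w) = \{D'_w(E)\}.
\]
The base case $w = \varepsilon$ follows from $\delta(I, \varepsilon) = I = \{E\} = \{D'_\varepsilon(E)\}$. For the inductive step, write $w = ua$ (or $w = av$, whichever is convenient for the recursion in $D'$); using $\delta(I, ua) = \delta(\delta(I,u), a)$ together with the inductive hypothesis and the definition of $\delta$ on singletons, one obtains $\delta(I, ua) = \delta(\{D'_u(E)\}, a) = \{D'_a(D'_u(E))\} = \{D'_{ua}(E)\}$.

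Finally I would combine this with the two remaining lemmas to conclude. For any $w \in \Sigma^*$,
\[
  w \in L(B'(E)) \iff \delta(I,w) \cap F \neq \emptyset \iff D'_w(E) \in F \iff \mathrm{r}(\varepsilon, D'_w(E)) = 1.
\]
By the lemma $L(D'_w(E)) = w^{-1}(L(E))$, the last condition is equivalent to $\varepsilon \in w^{-1}(L(E))$, i.e.\ $w \in L(E)$. The syntactic computability of $\mathrm{r}(\varepsilon,\cdot)$ on HLAREs (the second-to-last lemma) ensures that $F$ is well-defined as a subset of $Q$.

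I do not see a genuine obstacle: all the non-trivial content — the finiteness of $\mathcal{D}_{HL}(E)$, the correctness of $D'_w$ for the quotient, and the computability of the $\varepsilon$-membership bit — has already been packaged into the lemmas cited above. The only mildly delicate point is the commutation $D'_a(D'_w(E)) = D'_{wa}(E)$ used to justify closure of $Q$ under $\delta$, but this is an immediate induction from the definitional clause $D'_{au}(E) = D'_u(D'_a(E))$.
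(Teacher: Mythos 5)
Your proof is correct and follows essentially the same route as the paper's: the paper proves the analogous proposition for general AREs by exactly this argument (induction on $|w|$ showing $\delta(E,w)=\{\frac{d'}{d'_w}(E)\}$, determinism from $\mathrm{Card}(I)=1$ and singleton transitions, then the chain of equivalences through $L(D'_w(E))=w^{-1}(L(E))$ and the definition of $F$ via $\mathrm{r}(\varepsilon,\cdot)$), and the HLARE statement is then obtained through the syntactic correspondence of Proposition~\ref{prop eq der HLARE ARE}. One small caveat: the parenthetical identity $D'_w(D'_a(E))=D'_a(D'_w(E))$ you offer as a first alternative is false in general (it would equate derivatives w.r.t.\ $aw$ and $wa$), but the second justification you give, $D'_{wa}(E)=D'_a(D'_w(E))$ by unfolding along the letters of $w$, is the correct one and suffices for closure of the state set under $\delta$.
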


  For any HLARE $E$, the automaton $B'(E)$ is called the \emph{dissimilar derivative finite automaton} of $E$.

Example~\ref{ex cons brzo} presents the computation of the dissimilar derivative automaton of an HLARE. Example~\ref{ex membership prob} illustrates the computation of the boolean $\mathrm{r}(w,E)$ for an HLARE $E$. Notice that in both of these examples, the following reductions are used:

\centerline{ $E+\emptyset=\emptyset+E=E$,}

\centerline{ $E\cdot\emptyset=\emptyset\cdot E=\emptyset$,}

\centerline{ $E\cdot\varepsilon=\varepsilon\cdot E=E$,}

\centerline{ $\mathbb{F}_\indef(E)=\emptyset$.}
  
  \begin{example}\label{ex cons brzo}
    Let $F=b^*(a+b)c^*$ and $E=\mathbb{H}_{1}(F)$ be an HLARE over $\Sigma=\{a,b,c\}$. The dissimilar derivatives of $E$ are the following expressions:

  \centerline{\begin{tabular}{l@{\ }||@{\ }l}
  \begin{tabular}{r@{\ }l@{\ }l}
    $D'_a(E)$ & $=\mathbb{H}_{0}(F)+\mathbb{H}_{1}(c^*)+\mathbb{H}_{0}(c^*)$ & $=E_1$\\
    $D'_b(E)$ & $=E+\mathbb{H}_{1}(c^*)+\mathbb{H}_{0}(c^*)$ & $=E_2$\\
    $D'_c(E)$ & $=\mathbb{H}_{0}(F)+\mathbb{H}_{0}(c^*)$ & $=E_3$\\
  \end{tabular}
  &
  \begin{tabular}{r@{\ }l@{\ }l}
    $D'_a(E_1)$ & $=\mathbb{H}_{0}(c^*)$ & $=E_4$\\
    $D'_b(E_1)$ & $=\mathbb{H}_{0}(F)+\mathbb{H}_{0}(c^*)$ & $=E_3$\\
    $D'_c(E_1)$ & $=\mathbb{H}_{1}(c^*)+\mathbb{H}_{0}(c^*)$ & $=E_5$\\
  \end{tabular}
  \\
  \begin{tabular}{r@{\ }l@{\ }l}
    $D'_a(E_2)$ & $=\mathbb{H}_{0}(F)+\mathbb{H}_{1}(c^*)+\mathbb{H}_{0}(c^*)$ & $=E_1$\\
    $D'_b(E_2)$ & $=E+\mathbb{H}_{1}(c^*)+\mathbb{H}_{0}(c^*)$ & $=E_2$\\
    $D'_c(E_2)$ & $=\mathbb{H}_{0}(F)+\mathbb{H}_{0}(c^*)+\mathbb{H}_{1}(c^*)$ & $=E_1$\\
  \end{tabular}
  &
  \begin{tabular}{r@{\ }l@{\ }l}
    $D'_a(E_3)$ & $=\mathbb{H}_{0}(c^*)$ & $=E_4$\\
    $D'_b(E_3)$ & $=\mathbb{H}_{0}(F)+\mathbb{H}_{0}(c^*)$ & $=E_3$\\
    $D'_c(E_3)$ & $=\mathbb{H}_{0}(c^*)$ & $=E_4$\\
  \end{tabular}
  \\
  \begin{tabular}{r@{\ }l@{\ }l}
    $D'_a(E_4)$ & $=\emptyset$\\
    $D'_b(E_4)$ & $=\emptyset$\\
    $D'_c(E_4)$ & $=\mathbb{H}_{0}(c^*)$ & $=E_4$\\
  \end{tabular}
  &
  \begin{tabular}{r@{\ }l@{\ }l}
    $D'_a(E_5)$ & $=\mathbb{H}_{0}(c^*)$ & $=E_4$\\
    $D'_b(E_5)$ & $=\mathbb{H}_{0}(c^*)$ & $=E_4$\\
    $D'_c(E_5)$ & $=\mathbb{H}_{1}(c^*)+\mathbb{H}_{0}(c^*)$ & $=E_5$\\
  \end{tabular}
  \\
  \end{tabular}}
  
  The dissimilar derivative automaton of $E$ is given Figure~\ref{fig aut brzo ham}.
  
  \end{example}
  
  \begin{figure}[H]
    \centerline{ 
      \begin{tikzpicture}[node distance=2.5cm, bend angle=25]  
	    \node[state,initial] (q) [rounded rectangle] {$E$};
	    \node[state,accepting] (q1) [rounded rectangle, above right of=q] {$E_1$};   
	    \node[state,accepting] (q2) [rounded rectangle,left of=q1] {$E_2$};
	    \node[state,accepting] (q3) [rounded rectangle, right of=q] {$E_3$};
	    \node[state,accepting] (q4) [rounded rectangle, right of=q3] {$E_4$};	
	    \node[state,accepting] (q5) [rounded rectangle, right of=q1] {$E_5$};	    	  
	    \path[->]
	      (q)  edge node {a} (q1)
	           edge node {b} (q2)	
	           edge node {c} (q3) 
	      (q1) edge node {a} (q4)
	           edge node {b} (q3)	
	           edge node {c} (q5) 
	      (q2) edge node {a,c} (q1)
	           edge [in=-150,out=150,loop,swap] node {b} ()	
	      (q3) edge node {a,c} (q4)
	           edge [in=20,out=80,loop] node {b} ()	
	      (q4) edge [in=-30,out=30,loop] node {c} ()	
	      (q5) edge node {a,b} (q4)
	           edge [in=-30,out=30,loop] node {c} ();	            
      \end{tikzpicture}
    }   
    \caption{The dissimilar derivative automaton of $E=\mathbb{H}_{1}(b^*(a+b)c^*)$}
    \label{fig aut brzo ham}
  \end{figure}
  
  \begin{example}\label{ex membership prob}
    Let $G=\mathbb{L}_1((aba+abb)a(a)^*)$ be an HLARE over the alphabet $\Sigma=\{a,b\}$ and $w=aba$ be a word in $\Sigma^*$. Then: $\mathrm{r}(aba,G)=\mathrm{r}(\varepsilon,D'_{aba}(G))$. Let us first compute the HLARE $D'_{aba}(G)$:
    
    \begin{tabular}{l@{\ }l}    
      $D'_a(G)=$ & $\mathbb{L}_1((ba+bb)a(a)^*)+\mathbb{L}_0((aba+abb)a(a)^*)=G_1$\\
      $D'_b(G_1)=$ & $\mathbb{L}_1((a+b)a(a)^*)+\mathbb{L}_0((ba+bb)a(a)^*)+\mathbb{L}_0(a(a)^*)=G_2$\\
      $D'_a(G_2)=$ & $\mathbb{L}_1(a(a)^*)+\mathbb{L}_0(a(a)^*)+\mathbb{L}_0((a+b)a(a)^*)+\mathbb{L}_0((a)^*)=G_3$\\
    \end{tabular}
    
    \noindent Hence $\mathrm{r}(aba,G)=\mathrm{r}(\varepsilon,G_3)$.
    Furthermore, since $\varepsilon\in L( \mathbb{L}_0(D'_a(a(a)^*))) $, it holds that $\varepsilon\in L(\mathbb{L}_1(a(a)^*))$. Consequently, $\mathrm{r}(\varepsilon,G_3)=1$ and $aba$ belongs to $L(G)$.
    
     Notice that in this case:
     \begin{enumerate}
       \item The word $w$ is split up into $s_w=(a,b,a,\varepsilon)$;
       \item The word $w'=abaa$ in $L((aba+abb)a(a)^*)$ can be split up into $s_{w'}=(a,b,a,a)$; 
       \item It holds $\mathcal{L}(s_w,s_{w'})=1$.
     \end{enumerate}
     
     Another split-up is presented in Example~\ref{ex membership prob anti}.
  \end{example}

\subsection{Antimirov Partial  Derivatives of an HLARE}

In this subsection, we extend the Antimirov derivation to the HLAREs. From an HLARE $E$ and a word $w$, Antimirov derivation allows us to compute a set $\Delta_w(E)$ of HLAREs, called the partial derivative of $E$ w.r.t. $w$. Any HLARE in $\Delta_w(E)$ is called a derivated term of $E$ w.r.t. $w$. Finally, we state that the union of the languages denoted by the derivated terms in $\Delta_w(E)$ is equal to $w^{-1}(L(E))$.

\begin{definition}\label{def deriv part hlare}
  Let $E$ be an HLARE over an alphabet $\Sigma$. Let $a$ and $b$ be two distinct symbols in $\Sigma$ and $w$ be a word in $\Sigma^*$.  The \emph{partial derivative} of $E$ w.r.t. the symbol $a$ (resp. to the word $w$) is the set $\Delta_a(E)$ (resp. $\Delta_w(E)$) of HLAREs defined as follows:
  
  \centerline{
    $\Delta_a(\varepsilon)=\Delta_a(\emptyset)=\Delta_a(b)=\emptyset$,
    $\Delta_a(a)=\{\varepsilon\}$,
  }
  
  \centerline{
    $\Delta_a(E_1+E_2)=\Delta_a(E_1)\cup \Delta_a(E_2)$,
    $\Delta_a(E_1^*)=\Delta_a(E_1)\cdot E_1^*$,
  }
  
  \centerline{
    $\Delta_a(E_1\cdot E_2)=
      \left\{
        \begin{array}{l@{\ }l}
          \Delta_a(E_1)\cdot E_2 \cup \Delta_a(E_2)  & \text{ if } \mathrm{r}(\varepsilon,E_1)=1,\\
          \Delta_a(E_1)\cdot E_2  & \text{ if }\mathrm{r}(\varepsilon,E_1)=0,\\
        \end{array}
      \right.$
  }
  
  \centerline{
    $\Delta_a(\mathbb{H}_k(E_1))= \mathbb{H}_k(\Delta_a(E_1)) \cup \bigcup_{b\in\Sigma\setminus\{a\}} \mathbb{H}_{k-1}(\Delta_b(E_1)) $,
  }
  
  \centerline{
    $\Delta_a(\mathbb{L}_k(E_1))=
      \left(
          \begin{array}{l@{\ }l}
          & \mathbb{L}_k(\Delta_a(E_1)) \\
          \cup & \bigcup_{b\in\Sigma\setminus\{a\}} \mathbb{L}_{k-1}(\Delta_b(E_1))\\
          \cup & \{\mathbb{L}_{k-1}(E_1)\}\\
          \cup & \Delta_a(\bigcup_{b\in\Sigma}\mathbb{L}_{k-1}(\Delta_b(E_1)) \\
          \end{array}
          \right)$,
  }
  
  \centerline{
    $\Delta_w(E)=
      \left\{
        \begin{array}{l@{\ }l}
          \{E\} & \text{ if } w=\varepsilon,\\          
          \Delta_{w'}(\Delta_a(E)) & \text{ if }w=aw'\wedge a\in\Sigma\ \wedge\ w'\in\Sigma^*,\\
        \end{array}
      \right.
    $
  }
  
  where $E_1$ and $E_2$ are any two HLARES and $k$ an integer in $\mathbb{N}\cup\{\indef\}$ and where for any set $\mathcal{E}$ of HLAREs, for any HLARE $F$, for any symbol $a$ in $\Sigma$,
  
  \centerline{$\mathcal{E}\cdot F=\bigcup_{E\in\mathcal{E}}\{E\cdot F\}$,}
  
  \centerline{$\Delta_a(\mathcal{E})=\bigcup_{E\in\mathcal{E}}\Delta_a(E)$,}
  
  \centerline{$\mathbb{H}_k(\mathcal{E})=\bigcup_{E\in\mathcal{E}}\{\mathbb{H}_k(E)\}$,}
  
  \centerline{$\mathbb{L}_k(\mathcal{E})=\bigcup_{E\in\mathcal{E}}\{\mathbb{L}_k(E)\}$.}
\end{definition}

\begin{lemma}
  Let $E$ be an HLARE over an alphabet $\Sigma$. Let $w$ be a word in $\Sigma^*$. 
  Then:
  
  \centerline{
    $\bigcup_{G\in \Delta_w(E)}L(G)=w^{-1}(L(E))$.}
\end{lemma}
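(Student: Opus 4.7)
The plan is to reduce to the single-symbol case and then perform a structural induction on $E$, using the quotient formulas for $\mathbb{H}_k$ and $\mathbb{L}_k$ already established in the lemma preceding this subsection.

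First I would handle the extension from symbols to words by induction on $|w|$. For $w=\varepsilon$ both sides equal $L(E)$ by the definitions of $\Delta_\varepsilon$ and of $\varepsilon^{-1}$. For $w=au$, I use $\Delta_w(E)=\Delta_u(\Delta_a(E))$ together with $w^{-1}(L(E))=u^{-1}(a^{-1}(L(E)))$, apply the single-symbol statement to obtain $\bigcup_{G\in\Delta_a(E)}L(G)=a^{-1}(L(E))$, and then conclude by the inductive hypothesis applied term-wise to each $G\in\Delta_a(E)$ (using $\Delta_u$ on a set is defined as the union of $\Delta_u$ on its members).

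Next, for the single-symbol statement $\bigcup_{G\in\Delta_a(E)}L(G)=a^{-1}(L(E))$, I would induct on a suitable well-founded ordering on HLAREs. The classical cases $\emptyset,\varepsilon,a,b,E_1+E_2,E_1\cdot E_2,E_1^*$ match Antimirov's original argument, recovered by comparing Definition~\ref{def deriv part hlare} to the quotient formulas of Section~\ref{se:pre}. For $E=\mathbb{H}_k(E_1)$, I apply the quotient formula $a^{-1}(\mathbb{H}_k(L))=\mathbb{H}_k(a^{-1}(L))\cup\bigcup_{b\neq a}\mathbb{H}_{k-1}(b^{-1}(L))$ from the preceding lemma, use the structural inductive hypothesis to rewrite $a^{-1}(L(E_1))$ as $\bigcup_{G\in\Delta_a(E_1)}L(G)$ and $b^{-1}(L(E_1))$ as $\bigcup_{G\in\Delta_b(E_1)}L(G)$, and then invoke the distributivity $\mathbb{H}_k\bigl(\bigcup_i L_i\bigr)=\bigcup_i \mathbb{H}_k(L_i)$ (immediate from Definition~\ref{def fkl}) to match the syntactic definition of $\Delta_a(\mathbb{H}_k(E_1))$ via the conventions $\mathbb{H}_k(\mathcal{E})=\bigcup_{G\in\mathcal{E}}\{\mathbb{H}_k(G)\}$.

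The main obstacle is the case $E=\mathbb{L}_k(E_1)$, because the quotient formula contains the extra summand $a^{-1}\bigl(\bigcup_{b\in\Sigma}\mathbb{L}_{k-1}(b^{-1}(L(E_1)))\bigr)$, which is itself a quotient applied to a language that has already received a similarity operator; a plain structural induction on $E$ does not immediately cover this. To handle it I would refine the induction to be lexicographic on the pair $(|E|,k)$, where $|E|$ is the number of operator occurrences and $k$ is the bound of the outermost $\mathbb{L}$. The three first summands go through as in the Hamming case. For the fourth summand, I first use the structural hypothesis on $E_1$ and distributivity of $\mathbb{L}_{k-1}$ over unions to rewrite the argument of $a^{-1}$ as $L\bigl(\bigcup_{b\in\Sigma}\mathbb{L}_{k-1}(\Delta_b(E_1))\bigr)$, and then apply the inductive hypothesis on $\mathbb{L}_{k-1}(G)$ for each $G\in\Delta_b(E_1)$: this call is legitimate because the similarity bound has strictly decreased from $k$ to $k-1$, with the base case $\mathbb{L}_\indef(\cdot)=\emptyset$ terminating the recursion. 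Matching the result to the syntactic definition of $\Delta_a(\mathbb{L}_k(E_1))$ term-by-term then closes the induction, and combining with the first step finishes the proof.
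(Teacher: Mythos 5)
Your refinement of the induction is where the argument breaks. For the fourth Levenshtein summand you invoke the inductive hypothesis on $\mathbb{L}_{k-1}(G)$ with $G\in\Delta_b(E_1)$, and you justify this by the decrease $k-1<k$; but your ordering is lexicographic on $(|E|,k)$ with the size first, and a derivated term $G$ of $E_1$ can have strictly more operator occurrences than $E_1$ (for instance $\Delta_a((ab)^*)=\{b\cdot(ab)^*\}$), so $|\mathbb{L}_{k-1}(G)|>|\mathbb{L}_k(E_1)|$ and the pair does not get lexicographically smaller: the decrease in the second component is irrelevant once the first component grows. Swapping the components does not help either, since the structural calls on $E_1$ in the first three summands may raise the outer bound (take $E=\mathbb{L}_1(\mathbb{L}_5(F))$). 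So precisely at the step you identified as the main obstacle, the well-foundedness of your induction is not established, and this is a genuine gap rather than a cosmetic one: the recursion only terminates because expressions placed under a similarity operator are always derivated terms of the original argument, a structural fact your measure does not capture.

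The repair is to use a measure invariant under derivation of the inner expression, e.g.\ the nesting depth of similarity operators paired lexicographically with $k$, together with the auxiliary fact that every derivated term of $E_1$ has similarity depth at most that of $E_1$ (equivalently, that derivated terms of $\mathbb{F}_k(E_1)$ have the form $\mathbb{F}_{k'}(G)$ with $k'\leq k$ and $G$ a derivated term of $E_1$ --- this is exactly the content of the paper's Lemma~\ref{lem deriv part fini}); alternatively, strengthen the statement proved at bound $k$ so that it quantifies over all derivated terms of $E_1$ simultaneously. With such a measure the rest of your argument (distributivity of $\mathbb{F}_k$ over unions, term-by-term matching with Definition~\ref{def deriv part hlare}, and the word extension by induction on $|w|$) does go through. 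Note also that your route differs from the paper's: the paper never proves this HLARE lemma directly, but establishes the general quotient formula (Lemma~\ref{lem quot app lang}) and the correctness of the general partial derivatives (Lemma~\ref{lem form deriv part ham ok}), then transfers them to $\Delta_a$ through the syntactic identification $\frac{\partial}{\partial_a}(E)\in\{\Delta_a(E)\cup\{\emptyset\},\Delta_a(E)\}$ of Proposition~\ref{prop eq der HLARE ARE}; your direct proof is a legitimate alternative once the induction measure is fixed.
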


Next lemma shows that the boolean $\mathrm{r}(\varepsilon,E)$ is syntactically computable for any HLARE $E$ using partial derivation.

\begin{lemma}
  Let $E=\mathbb{H}_k(E')$ and $F=\mathbb{L}_k(F')$ be two HLAREs over an alphabet $\Sigma$. Then the two following conditions are satisfied:
  \begin{itemize}
    \item $\varepsilon\in L(E) \Leftrightarrow \varepsilon \in L(E')$,
    \item $\varepsilon\in L(F) \Leftrightarrow \varepsilon \in L(F')\cup \bigcup_{a\in\Sigma,G\in\Delta_a(F')} L(\mathbb{L}_{k-1}(G))$.
  \end{itemize}
\end{lemma}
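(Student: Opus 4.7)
The plan is to reduce both biconditionals to explicit computations of $\mathbb{H}(\varepsilon,u)$ and $\mathbb{L}(\varepsilon,u)$ based on the split-up definition of a word comparison function, then combine with the partial-derivative correctness lemma stated just above.

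First I would handle the Hamming case. The only split-up of $\varepsilon$ of size $n$ is $(\varepsilon,\ldots,\varepsilon)$, while any split-up of $u$ of size $n$ has at least $|u|$ non-$\varepsilon$ positions (those positions spell $u$). Since $\mathrm{H}(\varepsilon,\varepsilon)=0$ but $\mathrm{H}(\varepsilon,b)=\indef$ for $b\in\Sigma$, any symbol-wise sum in $Y(\varepsilon,u)$ is $\indef$ as soon as $u\neq\varepsilon$. Thus $\mathbb{H}(\varepsilon,u)=0$ if $u=\varepsilon$ and $\mathbb{H}(\varepsilon,u)=\indef$ otherwise. Unfolding the definition of $\mathbb{H}_k$ yields $\varepsilon\in L(\mathbb{H}_k(E'))\Leftrightarrow \exists u\in L(E'),\ \mathbb{H}(\varepsilon,u)\leq k\Leftrightarrow \varepsilon\in L(E')$, which gives the first equivalence.

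Next I would address the Levenshtein case via the identity $\mathbb{L}(\varepsilon,u)=|u|$. This follows from the same split-up analysis: on $(\varepsilon,\ldots,\varepsilon)$ versus a split-up of $u$ of size $n\geq|u|$, the sum $\sum\mathrm{L}(\varepsilon,s_i)$ equals the number of non-$\varepsilon$ coordinates of the right-hand split-up, with minimum $|u|$ (put exactly $|u|$ letters and $n-|u|$ empties). Hence $\varepsilon\in L(\mathbb{L}_k(F'))\Leftrightarrow \exists u\in L(F'),\ |u|\leq k$. I would then split on whether $u=\varepsilon$ or $u=av$ with $a\in\Sigma$: the first subcase produces $\varepsilon\in L(F')$ directly, and in the second subcase $|v|\leq k-1$ and $v\in a^{-1}(L(F'))=\bigcup_{G\in\Delta_a(F')}L(G)$ by the preceding partial-derivative lemma, so $\varepsilon\in L(\mathbb{L}_{k-1}(G))$ for some $G\in\Delta_a(F')$. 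The converse direction is symmetric, reconstructing either $\varepsilon$ or $av$ as the witness in $L(F')$ of length at most $k$.

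The only subtle point is the boundary case $k=0$, where $k-1=\indef$ and $L(\mathbb{L}_{\indef}(G))=\emptyset$ by Definition~\ref{def fkl}; the big union on the right then vanishes, leaving the equivalence $\varepsilon\in L(\mathbb{L}_0(F'))\Leftrightarrow\varepsilon\in L(F')$, which matches the computation $|u|\leq 0\Leftrightarrow u=\varepsilon$. The main (minor) obstacle is just bookkeeping: one must be careful that the split-up argument for $\mathbb{H}(\varepsilon,u)$ and $\mathbb{L}(\varepsilon,u)$ is valid for every admissible size $n$ (not only $n=|u|$), which is why the min over $Y(\varepsilon,u)$ in Definition~\ref{def split up comp func} indeed produces the claimed values; everything else reduces to unfolding Definition~\ref{def fkl} and invoking the already-established correctness of partial derivation.
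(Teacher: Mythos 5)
Your proof is correct, but it follows a different route from the paper's. The paper gives no direct argument in Section~\ref{se:hamLevDeriv}: it proves the general statement for an arbitrary symbol-wise word comparison function (Lemma~\ref{lem eps deriv part are}), via the chain of equivalences that decomposes a witness $w\in L(E')$ as $w_1bw_2$ with $\mathbb{F}(\varepsilon,w_1)=0$ and $\mathcal{F}((\varepsilon),(b))\neq 0$, using the sets $\mathcal{W}_\mathcal{F}$ and $X(E')$, and then specializes to $\mathbb{H}$ and $\mathbb{L}$ through Proposition~\ref{prop eq der HLARE ARE}, where $\mathcal{W}_\mathcal{H}=\mathcal{W}_\mathcal{L}=\{\varepsilon\}$ makes the general formula collapse to the two items of the lemma. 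You instead compute the two distances against $\varepsilon$ explicitly ($\mathbb{H}(\varepsilon,u)=0$ iff $u=\varepsilon$, else $\indef$; $\mathbb{L}(\varepsilon,u)=|u|$), which is a correct consequence of the split-up definition, and then do the elementary case split $u=\varepsilon$ versus $u=av$, invoking the already-stated correctness of partial derivation ($\bigcup_{G\in\Delta_a(F')}L(G)=a^{-1}(L(F'))$) -- there is no circularity in doing so, since that correctness lemma does not depend on the present one. Your treatment of the boundary $k=0$ via $k-1=\indef$ and $L(\mathbb{L}_{\indef}(G))=\emptyset$ is exactly right. What the two approaches buy: yours is self-contained, shorter, and avoids the $\mathcal{W}_\mathcal{F}$/$X(E')$ machinery entirely, but it only works because $\mathbb{H}$ and $\mathbb{L}$ have no zero-cost insertions; the paper's version handles arbitrary cost functions (where letters may be inserted at cost $0$, forcing the precomputation of $X(E')$) and is reused for the Brzozowski analogue and the automaton constructions. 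One small caveat, shared with the statement itself: your equivalences implicitly assume $k\in\mathbb{N}$, whereas the general Lemma~\ref{lem eps deriv part are} guards against $k=\indef$ explicitly; for $k=\indef$ the Hamming item would read $\varepsilon\in\emptyset\Leftrightarrow\varepsilon\in L(E')$, so the assumption is worth stating.
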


Given an HLARE $E$, we denote by $\mathcal{DT}_{HL}(E)$ the set $\bigcup_{w\in\Sigma^*} \Delta_w(E)$ of the derivated terms of $E$.

\begin{lemma}
  The set $\mathcal{DT}_{HL}(E)$ of the derivated terms of an HLARE $E$ is finite.
\end{lemma}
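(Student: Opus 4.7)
My plan is to proceed by structural induction on the HLARE $E$, mimicking Antimirov's original finiteness argument on the three classical cases and then handling the two similarity operators uniformly. For the base cases $E \in \{\emptyset, \varepsilon, a, b\}$, inspection of the definition gives $\mathcal{DT}_{HL}(E) \subseteq \{E, \varepsilon, \emptyset\}$, which is finite. For $E = E_1 + E_2$, $E = E_1 \cdot E_2$, and $E = E_1^*$, the standard Antimirov containments
$\mathcal{DT}_{HL}(E_1 + E_2) \subseteq \{E_1+E_2\} \cup \mathcal{DT}_{HL}(E_1) \cup \mathcal{DT}_{HL}(E_2)$,
$\mathcal{DT}_{HL}(E_1 \cdot E_2) \subseteq \{E_1\cdot E_2\} \cup \{G \cdot E_2 : G \in \mathcal{DT}_{HL}(E_1)\} \cup \mathcal{DT}_{HL}(E_2)$,
and $\mathcal{DT}_{HL}(E_1^*) \subseteq \{E_1^*\} \cup \{G \cdot E_1^* : G \in \mathcal{DT}_{HL}(E_1)\}$
carry over verbatim, and each is finite by the induction hypothesis.

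The crux lies in the similarity cases $E = \mathbb{H}_k(E_1)$ and $E = \mathbb{L}_k(E_1)$. For each $\mathbb{F} \in \{\mathbb{H}, \mathbb{L}\}$ I introduce the candidate enveloping set
$$T_k(E_1, \mathbb{F}) = \{\mathbb{F}_{k'}(G) : 0 \leq k' \leq k,\ G \in \mathcal{DT}_{HL}(E_1)\},$$
which has cardinality at most $(k+1) \cdot |\mathcal{DT}_{HL}(E_1)|$ and is therefore finite by induction. I would then prove that $\mathcal{DT}_{HL}(\mathbb{F}_k(E_1)) \subseteq T_k(E_1, \mathbb{F})$ by showing the two closure conditions: (i) $\mathbb{F}_k(E_1) \in T_k(E_1, \mathbb{F})$, and (ii) $\Delta_a(\mathbb{F}_{k'}(G)) \subseteq T_k(E_1, \mathbb{F})$ whenever $\mathbb{F}_{k'}(G) \in T_k(E_1, \mathbb{F})$ and $a \in \Sigma$. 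Once (i) and (ii) are established, a routine induction on the length of the deriving word finishes the case.

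The Hamming part of (ii) is immediate from the formula for $\Delta_a(\mathbb{H}_{k'}(G))$: every summand has the shape $\mathbb{H}_{k''}(G')$ with $k'' \in \{k', k'-1\} \leq k$ and with $G' \in \Delta_c(G) \subseteq \mathcal{DT}_{HL}(G) \subseteq \mathcal{DT}_{HL}(E_1)$ for some $c \in \Sigma$. (Here I use the elementary fact that the set of derivated terms is closed under further partial derivation, proved by a one-line induction.) The Levenshtein part is where the main obstacle sits, because the definition of $\Delta_a(\mathbb{L}_{k'}(G))$ contains the self-referential clause $\Delta_a\bigl(\bigcup_{b \in \Sigma} \mathbb{L}_{k'-1}(\Delta_b(G))\bigr)$.

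To resolve this obstacle, I would run a secondary induction on $k'$ (with base case $k' = 0$ or $k' = \indef$, where the problematic clause reduces to $\Delta_a(\emptyset) = \emptyset$, using the reduction $\mathbb{L}_\indef(\cdot) = \emptyset$). In the inductive step for $k' > 0$, the first three clauses of $\Delta_a(\mathbb{L}_{k'}(G))$ are visibly in $T_k(E_1, \mathbb{L})$ since $\Delta_c(G) \subseteq \mathcal{DT}_{HL}(E_1)$ and $G \in \mathcal{DT}_{HL}(E_1)$. For the fourth clause, I invoke the secondary induction hypothesis at level $k'-1$ on each $\Delta_a(\mathbb{L}_{k'-1}(G''))$ with $G'' \in \Delta_b(G) \subseteq \mathcal{DT}_{HL}(E_1)$, obtaining containment in $T_k(E_1, \mathbb{L})$. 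The termination of the nested recursion is guaranteed by the strict decrease of the index; once the proof is laid out this way, finiteness of $\mathcal{DT}_{HL}(E)$ for every HLARE follows at once, closing the structural induction.
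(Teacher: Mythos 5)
Your argument is correct, but it is not the route the paper takes for this lemma. In the paper, Section~\ref{se:hamLevDeriv} contains no proofs at all: the finiteness of $\mathcal{DT}_{HL}(E)$ is obtained by first proving the general ARE statement (Lemma~\ref{lem deriv part fini}), namely $\mathcal{DT}_E\subseteq\bigcup_{k'\in\{0,\ldots,k\}}\mathbb{F}_{k'}(\mathcal{DT}_{E'})$ for $E=\mathbb{F}_{k}(E')$, by induction over the structure of $E'$ combined with a recurrence over $k$ (plus Corollary~\ref{cor card terme mot}), and then proving Proposition~\ref{prop eq der HLARE ARE}, which says that the HLARE operator $\Delta_a$ of Definition~\ref{def deriv part hlare} and the general operator $\frac{\partial}{\partial_a}$ of Definition~\ref{def part deriv are} coincide up to the expression $\emptyset$; the present lemma is then deduced as a corollary. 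Your proof is instead carried out directly on the HLARE formulae, and its combinatorial heart is the same as the paper's general lemma: your enveloping set $T_k(E_1,\mathbb{F})$ is exactly the paper's $S(E',k)=\bigcup_{k'\in\{0,\ldots,k\}}\mathbb{F}_{k'}(\mathcal{DT}_{E'})$, and your secondary induction on $k'$, which defuses the self-referential fourth clause of the Levenshtein rule, plays the role of the paper's recurrence over $k$ (which also gives halting of the computation). What your route buys is self-containedness and simplicity: staying with $\mathbb{H}$ and $\mathbb{L}$ lets you bypass the sets $\mathcal{W}_\mathcal{F}$ and $X(E')$ that the general definitions must carry for arbitrary symbol-wise comparison functions (for Hamming and Levenshtein $X(E')=\{E'\}$), and you never need the syntactic-equality proposition. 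What the paper's route buys is generality and economy: a single argument covers every ARE, yields the explicit bound $\mathrm{Card}(\mathcal{DT}_E)\leq\mathrm{Card}(\mathcal{DT}_{E'})\times(k+1)$, and makes all the unproved statements of Section~\ref{se:hamLevDeriv}, for Brzozowski as well as Antimirov derivation, follow simultaneously. One small point of housekeeping on your side: when $k'=0$ the index $k'-1$ equals $\indef$, so you should either add $\emptyset$ to your enveloping set or, as you implicitly do, apply the reductions $\mathbb{F}_\indef(\cdot)=\emptyset$ and $\{\emptyset\}=\emptyset$ in the Hamming case as well as in the Levenshtein base case.
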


From this finite set, one can compute a finite automaton that recognizes $L(E)$.

\begin{definition}
  Let $E$ be an HLARE over an alphabet $\Sigma$. The tuple $A(E)=(\Sigma,Q,I,F,\delta)$ is defined by:
  \begin{itemize}
    \item $Q=\mathcal{DT}_{HL}(E)$,
    \item $I=\{E\}$,
    \item $F=\{q\in Q\mid\mathrm{r}(\varepsilon,q)=1\}$,
    \item $\forall (q,a)\in Q\times \Sigma$, $\delta(q,a)=\Delta_a(q)$.
  \end{itemize}
\end{definition}

\begin{proposition}
  Let $E$ be an HLARE over an alphabet $\Sigma$. Then:
  
  \centerline{$A(E)$ is a finite automaton that recognizes $L(E)$.}
\end{proposition}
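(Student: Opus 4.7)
The proof is essentially a formal packaging of the three lemmas just proved, so the plan is to show that the transition function of $A(E)$, extended to words, coincides with the partial derivative operator $\Delta_w$, and then to use the language characterization of $\Delta_w$ together with the syntactic test $\mathrm{r}(\varepsilon,\cdot)$.

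First, I would observe that $A(E)$ is a well-defined finite automaton: the state set $Q=\mathcal{DT}_{HL}(E)$ is finite by the previous lemma, the transition relation $\delta(q,a)=\Delta_a(q)$ is a subset of $Q\times\Sigma\times Q$ by the very definition of $\mathcal{DT}_{HL}(E)$ (since $\Delta_a(q)\subseteq \Delta_{a}(E')$ for any derivated term $q$ of $E$, and hence is contained in $\mathcal{DT}_{HL}(E)$), and the set of final states $F$ is well-defined since $\mathrm{r}(\varepsilon,q)$ is syntactically computable for HLAREs.

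Next, the key step is an induction on the length of $w\in\Sigma^*$ proving
\[
  \delta(\{E\},w)=\Delta_w(E).
\]
The base case $w=\varepsilon$ is immediate by definition ($\delta(\{E\},\varepsilon)=\{E\}=\Delta_\varepsilon(E)$). For the inductive step with $w=av$, I use the recursive definitions: on the automaton side, $\delta(\{E\},av)=\delta(\delta(\{E\},a),v)=\delta(\Delta_a(E),v)=\bigcup_{G\in\Delta_a(E)}\delta(\{G\},v)$, which by the induction hypothesis applied to each $G\in\Delta_a(E)\subseteq\mathcal{DT}_{HL}(E)$ equals $\bigcup_{G\in\Delta_a(E)}\Delta_v(G)=\Delta_v(\Delta_a(E))=\Delta_{av}(E)$, matching the extension of $\Delta$ to words from Definition~\ref{def deriv part hlare}.

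Finally, I would chain together: $w\in L(A(E))$ iff $\delta(\{E\},w)\cap F\neq\emptyset$ iff there exists $G\in\Delta_w(E)$ with $\mathrm{r}(\varepsilon,G)=1$, iff $\varepsilon\in\bigcup_{G\in\Delta_w(E)}L(G)=w^{-1}(L(E))$ (using the lemma characterizing the language denoted by the partial derivatives), iff $w\in L(E)$. The only delicate point is verifying that the inductive step is compatible with the way $\Delta$ is extended to sets of expressions and that $\Delta_a(E)\subseteq\mathcal{DT}_{HL}(E)$ (so that the induction hypothesis actually applies to every element); once this bookkeeping is in place the result is immediate from the preceding lemmas, which carry the real content of the proof.
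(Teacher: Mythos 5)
Your proof is correct and is essentially the paper's own argument: the paper proves exactly this statement for general AREs by the same induction on $|w|$ showing $\delta(E,w)=\frac{\partial}{\partial_w}(E)$ followed by the same chain of equivalences through $\mathrm{r}(\varepsilon,\cdot)$ and $\bigcup_{G\in\Delta_w(E)}L(G)=w^{-1}(L(E))$, and then transfers it to HLAREs via the syntactic correspondence of Proposition~\ref{prop eq der HLARE ARE}, whereas you run the identical induction directly on the HLARE formulae of Definition~\ref{def deriv part hlare}. The only blemish is your parenthetical justification of the closure $\Delta_a(q)\subseteq\mathcal{DT}_{HL}(E)$ (the inclusion ``$\Delta_a(q)\subseteq\Delta_a(E')$'' is not the relevant one); the one-line reason is that $q\in\Delta_u(E)$ for some $u\in\Sigma^*$ gives $\Delta_a(q)\subseteq\Delta_{ua}(E)\subseteq\mathcal{DT}_{HL}(E)$.
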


  For any HLARE $E$, the automaton $A(E)$ is the \emph{derivated term finite automaton} of $E$.

Example~\ref{ex cons anti} presents the computation of the derivated term automaton of an HLARE. Example~\ref{ex membership prob anti} illustrates the computation of the boolean $\mathrm{r}(w,E)$ for an HLARE $E$. Notice that in both of these examples, the five following reductions are used:

\centerline{ $E+\emptyset=\emptyset+E=E$,}

\centerline{ $E\cdot\emptyset=\emptyset\cdot E=\emptyset$,}

\centerline{ $E\cdot\varepsilon=\varepsilon\cdot E=E$,}

\centerline{ $\mathbb{F}_\indef(E)=\emptyset$,}

\centerline{ $\{\emptyset\}=\emptyset$.\footnote{The four first equalities are HLAREs reductions whereas the last one is a HLARE set reduction.}}

  \begin{example}\label{ex cons anti}
    Let $E$ be the HLARE defined in Example~\ref{ex cons brzo}. The partial derivatives of $E$ are the following sets of expressions:
  
  \centerline{\begin{tabular}{r@{\ }l||r@{\ }l}
    $\Delta_a(E)$ & $=\{\mathbb{H}_{0}(F),\mathbb{H}_{1}(c^*),\mathbb{H}_{0}(c^*)\}$ & $\Delta_a(\mathbb{H}_1(c^*))$ & $=\{\mathbb{H}_{0}(c^*)\}$\\
    $\Delta_b(E)$ & $=\{E,\mathbb{H}_{1}(c^*),\mathbb{H}_{0}(c^*)\}$ & $\Delta_b(\mathbb{H}_1(c^*))$ & $=\{\mathbb{H}_{0}(c^*)\}$\\
    $\Delta_c(E)$ & $=\{\mathbb{H}_{0}(F),\mathbb{H}_{0}(c^*)\}$ & $\Delta_c(\mathbb{H}_1(c^*))$ & $=\{\mathbb{H}_{1}(c^*)\}$\\
    $\Delta_a(\mathbb{H}_0(F))$ & $=\{\mathbb{H}_{0}(c^*)\}$ & $\Delta_a(\mathbb{H}_0(c^*))$ & $=\emptyset$\\
    $\Delta_b(\mathbb{H}_0(F))$ & $=\{\mathbb{H}_{0}(F),\mathbb{H}_{0}(c^*)\}$ & $\Delta_b(\mathbb{H}_0(c^*))$ & $=\emptyset$\\
    $\Delta_c(\mathbb{H}_0(F))$ & $=\emptyset$ & $\Delta_c(\mathbb{H}_0(c^*))$ & $=\{\mathbb{H}_{0}(c^*)\}$\\
  \end{tabular}}
  
  The derivated term automaton of $E$ is given Figure~\ref{fig aut ant ham}.
  
  \end{example}
  
  \begin{figure}[H]
    \centerline{ 
      \begin{tikzpicture}[node distance=2.5cm]  
	    \node[state,initial] (q1) [rounded rectangle] {$\mathbb{H}_{1}(F)$};
	    \node[state] (q2) [rounded rectangle, above  of=q1,node distance=1.5cm] {$\mathbb{H}_{0}(F)$};   
	    \node[state,accepting] (q3) [rounded rectangle,  right of=q1,node distance=4cm] {$\mathbb{H}_{1}(c^*)$};
	    \node[state,accepting] (q4) [rounded rectangle, right  of=q2,node distance=4cm] {$\mathbb{H}_{0}(c^*)$};	    	  
	    \path[->]
	      (q1)  edge [swap]node {a,c} (q2)
	            edge [swap] node {a,b} (q3)	
	            edge node {a,b,c} (q4)	      	 
	            edge [in=135,out=195,loop] node {b} ()	 
	      (q2)  edge node {a,b} (q4)
	            edge [in=170,out=230,loop] node {b} ()	 
	      (q3)  edge [swap] node {a,b} (q4)
	            edge [in=30,out=-30,loop,swap] node {c} ()	 
	      (q4)  edge [in=30,out=-30,loop,swap] node {c} ();	            
      \end{tikzpicture}
    }   
    \caption{The derivated term automaton of $E=\mathbb{H}_{1}(b^*(a+b)c^*)$}
    \label{fig aut ant ham}
  \end{figure}
  
  \begin{example}\label{ex membership prob anti}
    Let $G=\mathbb{L}_1((aba+abb)a(a)^*)$ be the HLARE defined in Example~\ref{ex membership prob} and $w=aba$ be a word in $\Sigma^*$. Then: $\mathrm{r}(aba,G)=\bigvee_{H\in\Delta_{aba}(G)}\mathrm{r}(\varepsilon,H)$. Let us first compute the HLARE set $\Delta_{aba}(G)$:
    
    \begin{tabular}{l@{\ }l}    
      $\Delta_a(G)$ & $=\{\mathbb{L}_1(baa(a)^*),\mathbb{L}_1(bba(a)^*),\mathbb{L}_0((aba+abb)a(a)^*)\}$\\
      & $=\mathcal{G}_1$\\  
      $\Delta_b(\mathcal{G}_1)$ & $=\{\mathbb{L}_1(aa(a)^*),\mathbb{L}_0(baa(a)^*),
      \mathbb{L}_1(ba(a)^*),\mathbb{L}_0(bba(a)^*),\mathbb{L}_0(a(a)^*)\}$\\
      & $=\mathcal{G}_2$\\ 
      $\Delta_a(\mathcal{G}_2)$ & $=\{\mathbb{L}_1(a(a)^*),\mathbb{L}_0(aa(a)^*),\mathbb{L}_0((a)^*),\mathbb{L}_0(ba(a)^*),\mathbb{L}_0(a(a)^*)\}$\\
      & $=\mathcal{G}_3$\\ 
    \end{tabular}
    
    \noindent Hence $\mathrm{r}(aba,G)=\bigvee_{H\in\mathcal{G}_3}\mathrm{r}(\varepsilon,H)$.
    Furthermore, since $\varepsilon\in L( \mathbb{L}_0((a)^*)) $, it holds that $\mathrm{r}(\varepsilon,G_3)=1$. Finally, $aba$ belongs to $L(G)$.    
    
     Notice that in this case:
     \begin{enumerate}
       \item The word $w$ is split up into $s_w=(a,b,\varepsilon,a)$;
       \item The word $w'=abaa$ in $L((aba+abb)a(a)^*)$ can be split up into $s_{w'}=(a,b,a,a)$; 
       \item It holds $\mathcal{L}(s_w,s_{w'})=1$.
     \end{enumerate}
     
     Another split-up is presented in Example~\ref{ex membership prob}.  
  \end{example}

\section{Word Comparison Functions, Quotients and Derivatives}\label{se: deriv are}

  In this section, we address the general case. We present two constructions of an automaton from an ARE using Brzozowski's derivatives and Antimirov's ones, respectively leading to a deterministic automaton and a non-deterministic one.
  We first show how to compute the quotient of a given language $\mathbb{F}_k(L)$ w.r.t. a symbol $a$, where $\mathbb{F}$ is a given word comparison function, $k$ is an integer and $L$ is a regular language.

  \subsection{Quotient of a Language}

Let $\mathbb{F}$ be a word comparison function associated with a symbol-wise sequence comparison function $\mathcal{F}$
defined over an alphabet $\Sigma$.
Let $k$ be a positive integer, $a$ be a symbol in $\Sigma$, $u=aw$ be a word of $\Sigma^+$, and $L'$ be a regular language of $\Sigma^*$.
According to Definition~\ref{def fkl}, the word $u$ is in $L=\mathbb{F}_k(L')$
if and only if there exists a word $v\in L'$ such that $\mathbb{F}(u, v)\leq k$.
According to Definition~\ref{def split up comp func},
this is equivalent to the existence of an alignment\footnote{An alignment between two words $u$ and $v$ is a pair $(s,s')$ of sequences of same size such that $s\in\mathrm{Split}(u)$ and $s'\in\mathrm{Split}(v)$.}
$(u',v') \in \mathrm{Split}_{n}(u)\times \mathrm{Split}_{n}(v)$,
where $n$ is a positive integer,
between $u$ and $v$,
the cost $\mathcal{F}(u',v')$ of which is not greater than $k$.
Let $u'=(u'_1,\ldots,v'_n)$ and $v'=(v'_1,\ldots,v'_n)$.
\textbf{(a)} If $n=1$, $\mathbb{F}(u,v)=\mathcal{F}((a),(v'_1))$ and since $u=aw$, $a\in L \Leftrightarrow w\in \mathbb{F}_{k-\mathcal{F}((a),(v'_1))}{v'_1}^{-1}(L')$. 
\textbf{(b)} Otherwise, let us set $u''=(u'_2,\ldots,u'_n)$ and $v''=(v'_2,\ldots,v'_n)$. Moreover, let us set $t=u$ if $u'_1=\varepsilon$ and $t=u'_2\cdots u'_n$ otherwise; let us similarly set $z=v$ if $v'_1=\varepsilon$ and $z=v'_2\cdots v'_n$ otherwise. Obviously, the word $z$ belongs to ${v'_1}^{-1}(L')$. Since $\mathbb{F}$ is a symbol-wise word comparison function,
there exists an alignment $(u',v')$ between $u$ and $v$ satisfying $\mathcal{F}(u',v')\leq k$
if and only if
there exists an alignment $(u'',v'')$ between $t$
 and $z$
satisfying $\mathcal{F}(u'',v'')\leq k-\mathcal{F}((u'_1),(v'_1))$.
According to Definition~\ref{def split up comp func},
this is equivalent to the existence of a word $z\in {v'_1}^{-1}(L')$
such that $\mathbb{F}(t,z)\leq k-\mathcal{F}((u'_1),(v'_1))$.
According to Definition~\ref{def fkl}, it is equivalent to say that the word $t$ is in 
$\mathbb{F}_{k-\mathcal{F}((u'_1),(v'_1))}({v'_1}^{-1}(L'))$.
Depending on the value of $(u'_1,v'_1)$ we can distinguish the following cases:\\
{\bf Case 1} $(u'_1,v'_1)=(a,b)$, with $b \in \Sigma$:
$u=aw\in L \Leftrightarrow w \in \mathbb{F}_{k-\mathcal{F}(a,b)}(b^{-1}L')$,\\  
{\bf Case 2} $(u'_1,v'_1)=(a,\varepsilon)$ with $a\in\Sigma$:
$u=aw\in L \Leftrightarrow w \in \mathbb{F}_{k-\mathcal{F}(a,\varepsilon)}(L')$,\\  
{\bf Case 3} $(u'_1,v'_1)=(\varepsilon,b)$, with $b \in \Sigma$:
$u=aw\in L \Leftrightarrow w \in a^{-1}(\mathbb{F}_{k-\mathcal{F}(\varepsilon,b)}(b^{-1}L'))$.
Since $w \in a^{-1}\mathbb{F}_k(L')$ $\Leftrightarrow $ $aw \in \mathbb{F}_k(L')$,
the three previous cases provide a recursive expression of the quotient of 
the language $\mathbb{F}_k(L')$ w.r.t. a symbol $a \in \Sigma$.
Unfortunately, its computation may imply a recursive loop, due to Case 3, when 
$\mathcal{F}((\varepsilon),(b))=0$.
It is possible to get rid of this loop by precomputing the set of all the quotients of $L'$ w.r.t. words $w$ such that $\mathbb{F}(\varepsilon,w)=0$.
In this purpose, let us set $\mathcal{W}_\mathcal{F}=(\bigcup_{b\in\Sigma,\mathcal{F}((\varepsilon),(b))=0} \{b\})^*$ 
and $X(L')=\{L'\}\cup\bigcup_{w\in\mathcal{W}_\mathcal{F}}\{w^{-1}(L')\}$. Let us notice that if $L'$ is a regular language, the set of its residuals is finite; as a consequence, so is $X(L')$.

  \begin{lemma}\label{lem quot app lang}
    Let $L=\mathbb{F}_{k}(L')$ be a language over an alphabet $\Sigma$ where $L'$ is a regular language, $\mathbb{F}$ is a symbol-wise word comparison function associated with a sequence comparison function $\mathcal{F}$ and $a$ be a symbol in $\Sigma$. The \emph{quotient} of $L$ w.r.t. $a$ is the language $a^{-1}(L)$ computed as follows:
    
    \centerline{$a^{-1}(L)=
      \left\{
        \begin{array}{l@{\ }l}
          & \bigcup_{L''\in X(L'),b\in\Sigma}( \mathbb{F}_{k-\mathcal{F}((a),(b))}(b^{-1}(L'')))
           \cup \    \bigcup_{L''\in X(L')} \mathbb{F}_{k-\mathcal{F}((a),(\varepsilon))}(L'')\\
          \cup & a^{-1}(\bigcup_{L''\in X(L'),b \in\Sigma,\mathcal{F}((\varepsilon),(b))\neq 0}( \mathbb{F}_{k-\mathcal{F}((\varepsilon),(b))}(b^{-1}(L''))))\\
        \end{array}
      \right.
    $}
  
  \centerline{
    where $X(L')=\{L'\}\cup\bigcup_{w\in\mathcal{W}_\mathcal{F}}w^{-1}(L')$ with $\mathcal{W}_\mathcal{F}=(\bigcup_{b\in\Sigma,\mathcal{F}((\varepsilon),(b))=0} \{b\})^*$.
  }
  \end{lemma}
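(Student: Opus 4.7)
The plan is to prove both inclusions by unfolding the definitions of $\mathbb{F}_{k}$ and of the induced word comparison function, following the three-case analysis sketched in the paragraphs preceding the statement. The key observation is that any witness $v \in L'$ together with an alignment $(u',v') \in \mathrm{Split}_n(aw) \times \mathrm{Split}_n(v)$ of cost at most $k$ decomposes canonically, by isolating its maximal leading block of zero-cost insertions, into a prefix that can be absorbed into the set $X(L')$ followed by a single productive step (consuming $a$, or performing a positive-cost insertion).

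For the $(\subseteq)$ direction, fix $w \in a^{-1}(L)$ and any witnessing alignment $(u',v')$ of $aw$ against some $v \in L'$ with $\mathcal{F}(u',v') \leq k$. By Condition~4 we may assume no position satisfies $u'_i = v'_i = \varepsilon$. Let $j \geq 0$ be maximal such that, for every $i \leq j$, $u'_i = \varepsilon$ and $\mathcal{F}((\varepsilon),(v'_i)) = 0$, and set $w_0 := v'_1 \cdots v'_j$. Then $w_0 \in \mathcal{W}_{\mathcal{F}}$, so $L'' := w_0^{-1}(L') \in X(L')$ and $v'_{j+1} \cdots v'_n \in L''$. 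By symbol-wiseness (Condition~3), the truncated alignment $((u'_{j+1},\ldots,u'_n),(v'_{j+1},\ldots,v'_n))$ still has cost $\leq k$. Inspecting position $j+1$ yields exactly the three terms of the statement: (1) if $u'_{j+1} = a$ and $v'_{j+1} = b \in \Sigma$, then $w \in \mathbb{F}_{k - \mathcal{F}((a),(b))}(b^{-1}(L''))$; (2) if $u'_{j+1} = a$ and $v'_{j+1} = \varepsilon$, then $w \in \mathbb{F}_{k - \mathcal{F}((a),(\varepsilon))}(L'')$; (3) if $u'_{j+1} = \varepsilon$, then $v'_{j+1} = b \in \Sigma$ and, by maximality of $j$, $\mathcal{F}((\varepsilon),(b)) \neq 0$, so that $aw \in \mathbb{F}_{k - \mathcal{F}((\varepsilon),(b))}(b^{-1}(L''))$ and $w$ lies in the third term.

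The $(\supseteq)$ inclusion reverses this construction: given $w$ in one of the three terms, prepend the associated $w_0 \in \mathcal{W}_{\mathcal{F}}$ (aligned as zero-cost insertions), followed by the productive pair $(a,b)$, $(a,\varepsilon)$, or $(\varepsilon,b)$ of the appropriate case, followed by a witnessing alignment for the residual budget. Concatenating these pieces yields an alignment of $aw$ against a word of $L'$ of total cost $\leq k$, hence $w \in a^{-1}(L)$.

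The main obstacle, and the reason $X(L')$ is introduced at all, is the apparent recursive loop hidden in the third term: unfolded naively, a zero-cost insertion would leave both the budget $k$ and the distinguished symbol $a$ unchanged, producing a non-terminating recursion. Restricting the third term to insertions of strictly positive cost forces the recursion on $a^{-1}$ to bottom out after at most $k$ steps (once $k - \mathcal{F}((\varepsilon),(b))$ drops below zero, $\mathbb{F}_{\indef}(\cdot) = \emptyset$ by Definition~\ref{def fkl}), while all the zero-cost insertion prefixes are absorbed once and for all into the set $X(L')$, which is finite because the regular language $L'$ has only finitely many quotients by the Myhill--Nerode theorem.
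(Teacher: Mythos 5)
Your argument is correct and is essentially the paper's own proof: the paper establishes the same three-way case split (match of $a$ with $b$, match of $a$ with $\varepsilon$, or a positive-cost insertion of $b$) after factoring out a zero-cost insertion prefix $w_1\in\mathcal{W}_\mathcal{F}$ absorbed into $X(L')$, written there as a chain of equivalences rather than your two inclusions with the explicit "maximal zero-cost prefix" decomposition of a witnessing alignment. The only difference is presentational, so no further comparison is needed.
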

  \begin{proof}
    For any symbol $\alpha,\beta$ in $\Sigma\cup\{\varepsilon\}$, let us set $k_{\alpha,\beta}=k-\mathcal{F}((\alpha),(\beta))$.
    
    $u\in a^{-1}(L)$ $\Leftrightarrow$ $au\in L$ $\Leftrightarrow$ $\exists w\in L', \mathbb{F}(au,w)\in\Intervalle{0}{k}$
    
    \noindent $\Leftrightarrow$ 
      $\left\{
        \begin{array}{l@{\ }l}
          & \exists b\in\Sigma, \exists w_1bw_2\in L', \mathbb{F}(\varepsilon,w_1)=0\wedge \mathbb{F}(u,w_2)\leq k_{a,b}\\
          \vee & \exists w_1w_2\in L', \mathbb{F}(\varepsilon,w_1)=0\wedge \mathbb{F}(u,w_2)\leq k_{a,\varepsilon}\\
          \vee & \exists b\in\Sigma, \exists w_1bw_2\in L', \mathbb{F}(\varepsilon,w_1)=0\wedge \mathcal{F}((\varepsilon),(b))\neq 0 \wedge \mathbb{F}(au,w_2)\leq k_{\varepsilon,b}\\
        \end{array}
      \right.$
      
      \vspace{\baselineskip}
    
    \noindent $\Leftrightarrow$ 
      $\left\{
        \begin{array}{l@{\ }l}
          & \exists b\in\Sigma, \exists w_1\in\mathcal{W}_\mathcal{F}, \exists w_2\in (w_1b)^{-1}(L'), \mathbb{F}(u,w_2)\leq k_{a,b}\\
          \vee & \exists w_1\in\mathcal{W}_\mathcal{F}, \exists w_2\in (w_1)^{-1}(L'), \mathbb{F}(u,w_2)\leq k_{a,\varepsilon}\\
          \vee & \exists b\in\Sigma, \exists w_1\in\mathcal{W}_\mathcal{F} , \exists w_2\in (w_1b)^{-1}L',\\
          & \mathcal{F}((\varepsilon),(b))\neq 0 \wedge \mathbb{F}(au,w_2)\leq k_{\varepsilon,b}\\
        \end{array}
      \right.$
      
      \vspace{\baselineskip}
    
    \noindent $\Leftrightarrow$ 
      $\left\{
        \begin{array}{l@{\ }l}
          & \exists b\in\Sigma, \exists w_2\in b^{-1}(\bigcup_{L''\in X(L')}L''), \mathbb{F}(u,w_2)\leq k_{a,b}\\
          \vee & \exists w_2\in \bigcup_{L''\in X(L')}L'', \mathbb{F}(u,w_2)\leq k_{a,\varepsilon}\\
          \vee & \exists b\in\Sigma, \exists w_2\in b^{-1}(\bigcup_{L''\in X(L')}L''), \mathcal{F}((\varepsilon),(b))\neq 0 \wedge \mathbb{F}(au,w_2)\leq k_{\varepsilon,b}\\
        \end{array}
      \right.$
      
      \vspace{\baselineskip}
    
    \noindent $\Leftrightarrow$ 
      $\left\{
        \begin{array}{l@{\ }l}
          & \exists b\in\Sigma, u\in \mathbb{F}_{k_{a,b}} \bigcup_{L''\in X(L')}b^{-1}(L'')\\
          \vee & u \in \bigcup_{L''\in X(L')}\mathbb{F}_{k_{a,\varepsilon}}(L'')\\
          \vee & \exists b\in\Sigma, au \in \mathbb{F}_{k_{\varepsilon,b}}(\bigcup_{L''\in X(L')}b^{-1}(L''))\\
        \end{array}
      \right.$
      
      \vspace{\baselineskip}
    
    \noindent $\Leftrightarrow$ 
      $\left\{
        \begin{array}{l@{\ }l}
          & u\in \bigcup_{L''\in X(L'),b\in\Sigma} \mathbb{F}_{k-\mathcal{F}((a),(b))} b^{-1}(L'')\\
          \vee & u \in \bigcup_{L''\in X(L')} \mathbb{F}_{k-\mathcal{F}((a),(\varepsilon))}(L'')\\
          \vee & u \in a^{-1}(\bigcup_{L''\in X(L'),b \in\Sigma,\mathcal{F}((\varepsilon),(b))\neq 0} \mathbb{F}_{k-\mathcal{F}((\varepsilon),(b))}b^{-1}(L''))\\
        \end{array}
      \right.$   
  \end{proof}
  
  \subsection{Brzozowski Derivatives for an ARE}
  
  An extension of Brzozowski derivatives can be directly deduced from the computation of the quotient presented in Lemma~\ref{lem quot app lang}.
  
  \begin{definition}\label{def diss deriv are}
    Let $E=\mathbb{F}_{k}(E')$ be an ARE over an alphabet $\Sigma$ where $\mathbb{F}$ is associated with $\mathcal{F}$ and $a$ be a symbol in $\Sigma$. The \emph{dissimilar derivative} of $E$ w.r.t. $a$ is the expression $\frac{d'}{d'_a}(E)$ defined by:
    
    \centerline{$\frac{d'}{d'_a}(E)=
      \left(
        \begin{array}{l@{\ }l}
          & \sum_{F\in X(E'),b\in\Sigma}( \mathbb{F}_{k-\mathcal{F}((a),(b))}(\frac{d'}{d'_b}(F)))\\
          + &   \sum_{F\in X(E')} \mathbb{F}_{k-\mathcal{F}((a),(\varepsilon))}(F)\\
          + & \frac{d'}{d'_a}(\sum_{F\in X(E'),b \in\Sigma,\mathcal{F}((\varepsilon),(b))\neq 0}( \mathbb{F}_{k-\mathcal{F}((\varepsilon),(b))}(\frac{d'}{d'_b}(F))))\\
        \end{array}
      \right)_{\sim_s}
    $,}
  
  \centerline{
    where $X(E')=\{E'\}\cup\bigcup_{w\in\mathcal{W}_\mathcal{F}}\frac{d'}{d'_w}(E')$ with $\mathcal{W}_\mathcal{F}=(\bigcup_{b\in\Sigma,\mathcal{F}((\varepsilon),(b))=0} \{b\})^*$.
  }
  \end{definition}

  Let us show that the set of dissimilar derivatives of any HLARE $E$ is 
  finite
   (Lemma~\ref{lem deriv ham fini}), that 
   the dissimilar derivative of $E$ w.r.t. a word $w$ denotes the quotient of $L(E)$ w.r.t. $w$ (Lemma~\ref{lem form deriv ham ok}) and  how to determine whether the empty word belongs to the language denoted by $E$ (Lemma~\ref{lem eps are}).
  
  \begin{lemma}\label{lem deriv ham fini}
    Let $E=\mathbb{F}_{k}(E')$ be an ARE over an alphabet $\Sigma$ and $\mathcal{D}_E$ be the set of dissimilar derivatives of $E$. Then $\mathcal{D}_E$ is a finite set of AREs. Moreover, its computation halts.
  \end{lemma}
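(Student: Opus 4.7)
The plan is a nested induction. The outer induction is structural on the ARE $E$: the base cases and the usual regular operators are handled by Brzozowski's classical argument, so only the new case $E=\mathbb{F}_k(E')$ demands attention. The structural induction hypothesis gives me that $\mathcal{D}_{E'}$ is finite and effectively computable, which I shall exploit throughout. Note also that $\mathcal{D}_{E'}$ is closed under single-symbol derivation, since $\frac{d'}{d'_b}(\frac{d'}{d'_w}(E'))=\frac{d'}{d'_{wb}}(E')$.

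First I would show that the computation of a single-symbol derivative $\frac{d'}{d'_a}(\mathbb{F}_k(E'))$ according to Definition~\ref{def diss deriv are} terminates. The auxiliary set $X(E')$ is contained in $\mathcal{D}_{E'}$ and so is finite; it can be obtained by saturating $\{E'\}$ under derivation with respect to the symbols $b$ satisfying $\mathcal{F}((\varepsilon),(b))=0$, a process that halts precisely because $\mathcal{D}_{E'}$ is finite. The first two sums in Definition~\ref{def diss deriv are} then range over explicit finite sets. The third summand triggers a recursive call to $\frac{d'}{d'_a}$, but only on expressions of the form $\mathbb{F}_{k-\mathcal{F}((\varepsilon),(b))}(\frac{d'}{d'_b}(F))$ with $\mathcal{F}((\varepsilon),(b))\neq 0$, hence with a strictly smaller index. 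An inner induction on $k$ (with the convention that $\mathbb{F}_j(\cdot)=\emptyset$ once $j$ becomes negative or $\indef$) terminates this recursion.

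Second, for finiteness of $\mathcal{D}_E$, I would introduce the finite set $S=\{\mathbb{F}_j(G)\mid 0\leq j\leq k,\ G\in \mathcal{D}_{E'}\}$ and prove the invariant that every $\frac{d'}{d'_w}(E)$ is ACI-equivalent either to $\emptyset$ or to a sum of elements of $S$. This is shown by induction on $|w|$: the base case $w=\varepsilon$ is immediate since $E=\mathbb{F}_k(E')\in S$; for the inductive step, $\frac{d'}{d'_a}$ distributes over $+$ modulo ACI, so it suffices to check that $\frac{d'}{d'_a}(\mathbb{F}_j(G))$ is again such a sum for each $\mathbb{F}_j(G)\in S$. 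This follows from the same inner induction on $k$ used for termination, using closure of $\mathcal{D}_{E'}$ under derivation to ensure that each $\frac{d'}{d'_b}(F)$ appearing in Definition~\ref{def diss deriv are} lies in $\mathcal{D}_{E'}$, and the fact that indices $j'$ produced by unrolling never exceed $k$. Since the number of ACI-classes of sums over $S$ is bounded by $2^{|S|}$, the set $\mathcal{D}_E$ is finite, and its saturation by iterated derivation halts.

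The main obstacle, I expect, will be pinning down the invariant cleanly enough to survive both ACI-normalization and the interleaving of the two inductions (on $k$ and on $|w|$). In particular one must verify that fully unrolling the third clause of Definition~\ref{def diss deriv are} never escapes the span of $S$, which is exactly what the inner induction on $k$ is designed to guarantee.
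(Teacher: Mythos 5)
Your proposal is correct and follows essentially the same route as the paper: a structural induction combined with an inner recurrence on $k$, the key invariant that every derivative of $\mathbb{F}_k(E')$ is (modulo ACI) a sum of expressions $\mathbb{F}_{k'}(G)$ with $k'\leq k$ and $G\in\mathcal{D}_{E'}$ (the paper's property $\mathrm{P}(E',k)$, your span of $S$), and finiteness plus halting deduced from the bound on ACI-classes of such sums and from the strict decrease $k-\mathcal{F}((\varepsilon),(b))<k$. Your explicit remark that $X(E')$ is computed by a saturation that halts because $\mathcal{D}_{E'}$ is finite is a welcome clarification of a step the paper leaves implicit, but it does not change the argument.
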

  \begin{proof}
    Consider that $\mathbb{F}$ is associated with $\mathcal{F}$. Let us show by induction over the structure of $E'$ and by recurrence over $k$ that $\mathcal{D}_E$ is a finite set of AREs.
  
  By induction, the set $\mathcal{D}_{E'}$ is a finite set of AREs. Consequently, since $X(E')$ is a subset of $\mathcal{D}_{E'}$, \textbf{(Fact~1)} $X(E')$ is a finite set of derivatives of $E'$.
  
  In order to show that $\mathcal{D}_{E}$ is a finite set, let us show that any derivative $G$ of $E$ satisfies the property $\mathrm{P}(E',k)$: $G$ is a finite sum of expressions of type $\mathbb{F}_{k'}(G')$ with $k'\leq k$ and $G'$ a derivative of $E'$.
  
  According to \textbf{Fact~1}, any subexpression $\mathbb{F}_{k-\mathcal{F}((a),(\varepsilon))}(F)$ with $F\in X(E')$ satisfies $\mathrm{P}(E',k)$. Since $X(E')$ is a subset of $\mathcal{D}_{E'}$, $\frac{d'}{d'_b}(F)$ is a derivative of $E'$ for any $b$ in $\Sigma$. Consequently, the expression $\sum_{F\in X(E'),b\in\Sigma}( \mathbb{F}_{k-\mathcal{F}((a),(b))}(\frac{d'}{d'_b}(F)))$ also satisfies $\mathrm{P}(E',k)$.
  Finally, by recurrence hypothesis, for $k'< k$, any derivative of an expression $\mathbb{F}_{k'}(G')$ satisfies $\mathrm{P}(G',k')$. Consequently, any derivative of $\mathbb{F}_{k-\mathcal{F}((\varepsilon),(b))}(\frac{d'}{d'_b}(F))$ satisfies $\mathrm{P}(\frac{d'}{d'_b}(F),k-\mathcal{F}((\varepsilon),(b)))$ if $\mathcal{F}((\varepsilon),(b))\neq 0$. Since $F$ is a derivative of $E'$, so is $\frac{d'}{d'_b}(F)$, and since $k-\mathcal{F}((\varepsilon),(b))< k$, any derivative of $\mathbb{F}_{k-\mathcal{F}((\varepsilon),(b))}(\frac{d'}{d'_b}(F))$ satisfies $\mathrm{P}(E',k)$. As a consequence, \textbf{(Fact~2)} any derivative of $E$ w.r.t. a symbol $a$ satisfies $\mathrm{P}(E',k)$.
  
  Let us show now that if an expression $H$ satisfies $\mathrm{P}(E',k)$, then any symbol derivative of $H$ also satisfies $\mathrm{P}(E',k)$. Since $H$ is a sum of expressions of type $\mathbb{F}_{k'}(G')$ where $k'\leq k$ and $G'$ is a derivative of $E'$, any symbol derivative $H'$ of $H$ is the sum of the derivatives of the expressions $H$ is the sum of.
  According to \textbf{Fact~2}, any symbol derivative of an expression $\mathbb{F}_{k'}(G')$ satisfies $\mathrm{P}(G',k')$. Since $G'$ is a derivative of $E'$ and $k'\leq k$, any expression satisfying $\mathrm{P}(G',k')$ also satisfies $\mathrm{P}(E',k)$. As a consequence, any derivative of $E$ w.r.t. a word $w$ in $\Sigma^*$ satisfies $\mathrm{P}(E',k)$.
  
  As a conclusion, since any derivative of $E$ is a sum of expressions all belonging to the finite set $\{\mathbb{F}_{k'}(G)\mid k'\leq k \wedge G\in\mathcal{D}_{E'}\}$, using the ACI-equivalence, $\mathcal{D}_E$ is a finite set of AREs. Moreover, by induction over $E'$ and by recurrence over $k$, since any derivative of an expression $F$ in $X(E')$ belongs to the finite set of derivatives of $E'$ the computation of which halts, and since $\mathcal{F}((\varepsilon),(b))\neq 0$ implies that $k-\mathcal{F}((\varepsilon),(b))<k$, the computation of $\mathcal{D}_{E}$ halts.  
  \end{proof}
  
  \begin{lemma}\label{lem form deriv ham ok}
    Let $E=\mathbb{F}_{k}(E')$ be an ARE over an alphabet $\Sigma$ and $a$ be a symbol in $\Sigma$. Then $L(\frac{d'}{d'_a}(E))=a^{-1}(L(E))$.
  \end{lemma}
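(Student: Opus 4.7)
The proof plan is to invoke Lemma~\ref{lem quot app lang} and translate its language-level decomposition of $a^{-1}(\mathbb{F}_k(L'))$ into the syntactic decomposition of $\frac{d'}{d'_a}(\mathbb{F}_k(E'))$ given by Definition~\ref{def diss deriv are}. The argument is an induction whose parameters mirror those used in Lemma~\ref{lem deriv ham fini}: a structural induction on $E'$ together with a recurrence on $k$. The inductive hypotheses I need are (i) for every strict subexpression $F$ of $E'$ and every $b\in\Sigma$, $L(\frac{d'}{d'_b}(F))=b^{-1}(L(F))$, and (ii) for every $k'<k$ and every ARE $H$ built from subexpressions of $E'$, $L(\frac{d'}{d'_a}(\mathbb{F}_{k'}(H)))=a^{-1}(L(\mathbb{F}_{k'}(H)))$. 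Both are standard (the first is Brzozowski's theorem when $F$ is a classical regular expression, and the second follows from the inner induction on $k$).

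First I would establish the base correspondence between the syntactic set $X(E')$ and the language-level set $X(L(E'))$ of Lemma~\ref{lem quot app lang}: since $X(E')=\{E'\}\cup\bigcup_{w\in\mathcal{W}_\mathcal{F}}\frac{d'}{d'_w}(E')$, the inductive hypothesis (i), iterated along $w$, yields $\{L(F)\mid F\in X(E')\}=\{L(E')\}\cup\{w^{-1}(L(E'))\mid w\in\mathcal{W}_\mathcal{F}\}=X(L(E'))$. Next I would take $L$ of both sides of the defining expression of $\frac{d'}{d'_a}(E)$, distribute $L$ through the sum (using $L(H_1+H_2)=L(H_1)\cup L(H_2)$) and through the similarity operators (using $L(\mathbb{F}_{k''}(H))=\mathbb{F}_{k''}(L(H))$), and apply ACI-invariance of $L$ to remove the $(\cdot)_{\sim_s}$ wrapper. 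The three outer summands in Definition~\ref{def diss deriv are} then match, line by line, the three unions in Lemma~\ref{lem quot app lang}, provided that for each term the corresponding hypothesis supplies $L(\frac{d'}{d'_b}(F))=b^{-1}(L(F))$.

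The delicate point concerns the third summand, in which $\frac{d'}{d'_a}$ appears again on the right-hand side. Here the argument of the outer derivative is a sum of expressions $\mathbb{F}_{k-\mathcal{F}((\varepsilon),(b))}(\frac{d'}{d'_b}(F))$ with $\mathcal{F}((\varepsilon),(b))\neq 0$, hence with index strictly smaller than $k$; so inductive hypothesis (ii) applies and gives $L(\frac{d'}{d'_a}(\cdot))=a^{-1}(L(\cdot))$ for that summand, thereby matching the last union in Lemma~\ref{lem quot app lang}. This is precisely the place where the precomputation of $X(E')$ pays off: the recursive loop that Case~3 of the quotient formula would otherwise cause (when $\mathcal{F}((\varepsilon),(b))=0$) has already been absorbed into $X(E')$, so the surviving recursive call is on a strictly smaller index.

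I expect this well-foundedness/termination bookkeeping to be the main obstacle, not the equational matching, which is essentially a rewriting of Lemma~\ref{lem quot app lang} through the homomorphism $L(\cdot)$. To keep the induction clean I would state the induction measure explicitly as the lexicographic pair $(|E'|,k)$, observe that each inductive appeal decreases this measure (strict subexpression, or same $E'$ with smaller $k$), and then simply concatenate the equalities obtained step by step, concluding $L(\frac{d'}{d'_a}(E))=a^{-1}(L(E))$.
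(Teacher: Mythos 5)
Your proposal follows essentially the same route as the paper's proof: invoke Lemma~\ref{lem quot app lang}, use the structural induction hypothesis to replace $X(L(E'))$ and the quotients $b^{-1}(L(\cdot))$ by $X(E')$ and $L(\frac{d'}{d'_b}(\cdot))$, handle the re-appearing outer derivative in the third summand by the recurrence on $k$ (legitimate because $\mathcal{F}((\varepsilon),(b))\neq 0$ forces a strictly smaller index), and then read the result off Definition~\ref{def diss deriv are} using the fact that $L$ commutes with sums, with the $\mathbb{F}_{k'}$ operators and with ACI-reduction. Your only addition is the explicit lexicographic measure $(|E'|,k)$, which is harmless and matches the paper's combined induction over the structure of $E$ and recurrence over $k$.
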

  \begin{proof} 
    By induction over the structure of $E$. According to Lemma~\ref{lem quot app lang}:
    
    \centerline{$a^{-1}(L(E))=
      \left\{
        \begin{array}{l@{\ }l}
          & \bigcup_{L''\in X(L(E')),b\in\Sigma}( \mathbb{F}_{k-\mathcal{F}((a),(b))}(b^{-1}(L'')))\\
          \cup &   \bigcup_{L''\in X(L(E'))} \mathbb{F}_{k-\mathcal{F}((a),(\varepsilon))}(L'')\\
          \cup & a^{-1}(\bigcup_{L''\in X(L(E')),b \in\Sigma,\mathcal{F}((\varepsilon),(b))\neq 0}( \mathbb{F}_{k-\mathcal{F}((\varepsilon),(b))}(b^{-1}(L''))))\\
        \end{array}
      \right.
    $,}
  
  \centerline{
    where $X(L(E'))=\{L(E')\}\cup\bigcup_{w\in\mathcal{W}_\mathcal{F}}w^{-1}(L(E'))$ with $\mathcal{W}_\mathcal{F}=(\bigcup_{b\in\Sigma,\mathcal{F}((\varepsilon),(b))=0} \{b\})^*$.
  }
  
  Let $X(E')=\{E'\}\cup\bigcup_{w\in\mathcal{W}_\mathcal{F}}\frac{d'}{d'_w}(E')$.  
  By induction over $E'$, for any word $w$ in $\Sigma^*$, $w^{-1}(L(E'))=L(\frac{d'}{d'_w}(E'))$. As a consequence, there exists a surjection $\mathrm{f}$ from $X(E')$ to $X(L(E'))$ such that for any expression $G$ in $X(E')$, $\mathrm{f}(G)=L(G)$ belongs to $X(L(E'))$. As a consequence:
  
  \centerline{$a^{-1}(L(E))=
      \left\{
        \begin{array}{l@{\ }l}
          & \bigcup_{E''\in X(E'),b\in\Sigma}( \mathbb{F}_{k-\mathcal{F}((a),(b))}(b^{-1}(L(E''))))\\
          \cup &   \bigcup_{E''\in X(E')} \mathbb{F}_{k-\mathcal{F}((a),(\varepsilon))}(L(E''))\\
          \cup & a^{-1}(\bigcup_{E''\in X(E'),b \in\Sigma,\mathcal{F}((\varepsilon),(b))\neq 0}( \mathbb{F}_{k-\mathcal{F}((\varepsilon),(b))}(b^{-1}(L(E'')))))\\
        \end{array}
      \right.
    $}
    
  By induction over $E'$, for any derivative $E''$ of $E'$, $b^{-1}(L(E''))=L(\frac{d'}{d'_b}(E''))$. Consequently:
  
  \centerline{
   \begin{tabular}{l@{\ }l}
    $a^{-1}(L(E))$ & $=
      \left\{
        \begin{array}{l@{\ }l}
          & \bigcup_{E''\in X(E'),b\in\Sigma}( \mathbb{F}_{k-\mathcal{F}((a),(b))}(L(\frac{d'}{d'_b}(E''))))\\
          \cup &   \bigcup_{E''\in X(E')} \mathbb{F}_{k-\mathcal{F}((a),(\varepsilon))}(L(E''))\\
          \cup & a^{-1}(\bigcup_{E''\in X(E'),b \in\Sigma,\mathcal{F}((\varepsilon),(b))\neq 0}( \mathbb{F}_{k-\mathcal{F}((\varepsilon),(b))}(L(\frac{d'}{d'_b}(E''))))\\
        \end{array}
      \right.$\\
      \\
    & $=
      \left\{
        \begin{array}{l@{\ }l}
          & L(\sum_{E''\in X(E'),b\in\Sigma}( \mathbb{F}_{k-\mathcal{F}((a),(b))}(\frac{d'}{d'_b}(E''))))\\
          \cup &   L(\sum_{E''\in X(E')} \mathbb{F}_{k-\mathcal{F}((a),(\varepsilon))}(E''))\\
          \cup & a^{-1}(L(\sum_{E''\in X(E'),b \in\Sigma,\mathcal{F}((\varepsilon),(b))\neq 0}( \mathbb{F}_{k-\mathcal{F}((\varepsilon),(b))}(\frac{d'}{d'_b}(E''))))\\
        \end{array}
      \right.
    $
   \end{tabular}
  }
    
   Furthermore, by recurrence over $k$, for any $\mathcal{F}((\varepsilon),(b))>0$, it holds:
   
   \centerline{ $a^{-1}(L(\mathbb{F}_{k-\mathcal{F}((\varepsilon),(b))}(\frac{d'}{d'_b}(E''))))=L(\frac{d'}{d'_a}(\mathbb{F}_{k-\mathcal{F}((\varepsilon),(b))}(\frac{d'}{d'_b}(E''))))$.}
   
   Finally,
   
   \centerline{
    \begin{tabular}{l@{\ }l}
     $a^{-1}(L(E))$ & $=
      \left\{
        \begin{array}{l@{\ }l}
          & L(\sum_{E''\in X(E'),b\in\Sigma}( \mathbb{F}_{k-\mathcal{F}((a),(b))}(\frac{d'}{d'_b}(E''))))\\
          \cup &   L(\sum_{E''\in X(E')} \mathbb{F}_{k-\mathcal{F}((a),(\varepsilon))}(E''))\\
          \cup & L(\frac{d'}{d'_a}(\sum_{E''\in X(E'),b \in\Sigma,\mathcal{F}((\varepsilon),(b))\neq 0}( \mathbb{F}_{k-\mathcal{F}((\varepsilon),(b))}(\frac{d'}{d'_b}(E''))))\\
        \end{array}
      \right.
    $\\
    & $=L(\frac{d'}{d'_a}(E))$.
   \end{tabular}
  }
    
  \end{proof}
  
  \begin{lemma}\label{lem eps are}
    Let $E=\mathbb{F}_{k}(E')$ be an ARE over an alphabet $\Sigma$ and $a$ be a symbol in $\Sigma$. Let $\mathcal{W}_\mathcal{F}$ and $X(E')$ be the sets defined by:
    
    \centerline{$\mathcal{W}_\mathcal{F}=(\bigcup_{b\in\Sigma,\mathcal{F}((\varepsilon),(b))=0} \{b\})^*$}
    
    \centerline{and $X(E')=\{E'\}\cup\bigcup_{w\in\mathcal{W}_\mathcal{F}}\frac{d'}{d'_w}(E')$.}
    
    Let us consider the language $L'$ defined by:
    
    \centerline{$L'=\bigcup_{F\in X(E')}L(F)\cup L(\sum_{F\in X(E'),b \in\Sigma,\mathcal{F}((\varepsilon),(b))\neq 0}( \mathbb{F}_{k-\mathcal{F}((\varepsilon),(b))}(\frac{d'}{d'_b}(F))))$.}
    
    Then the two following propositions are equivalent:
    
    \begin{itemize}
    \item $\varepsilon\in L(E)$
    \item $k\neq\indef\ \wedge\ \varepsilon\in L'$.
  \end{itemize}
    
    Furthermore, this equivalence defines a membership test that halts.
  \end{lemma}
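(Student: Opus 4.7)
The plan is to reduce the question $\varepsilon \in L(E) = \mathbb{F}_k(L(E'))$ to a structural analysis of potential witnesses $v \in L(E')$ satisfying $\mathbb{F}(\varepsilon, v) \leq k$, exploiting the symbol-wise property of $\mathcal{F}$. I would first dispose of the case $k = \indef$: Definition~\ref{def fkl} gives $\mathbb{F}_\indef(L(E')) = \emptyset$, so $\varepsilon \notin L(E)$, matching the explicit requirement $k \neq \indef$ in the second proposition. Henceforth I assume $k \in \mathbb{N}$.

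The central step characterises witnesses. By Definition~\ref{def fkl}, $\varepsilon \in L(E)$ is equivalent to the existence of $v \in L(E')$ with $\mathbb{F}(\varepsilon, v) \leq k$, and Conditions~3 and~4 reduce this value to $\sum_{i=1}^n \mathcal{F}((\varepsilon),(b_i))$ when $v = b_1 \cdots b_n$ (and to $0$ when $v = \varepsilon$). I would decompose any such witness uniquely as $v = u \cdot t$, where $u$ is the maximal prefix of $v$ lying in $\mathcal{W}_\mathcal{F}$ (so each symbol of $u$ contributes $0$ to the cost) and $t$ is either $\varepsilon$ or begins with a symbol $b$ satisfying $\mathcal{F}((\varepsilon),(b)) \neq 0$. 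Lemma~\ref{lem form deriv ham ok} then rewrites $v \in L(E')$ as $t \in L(F)$ with $F = \frac{d'}{d'_u}(E') \in X(E')$. In the case $t = \varepsilon$ this yields $\varepsilon \in \bigcup_{F \in X(E')} L(F)$, contributing to $L'$; in the case $t = b t'$ with $\mathcal{F}((\varepsilon),(b)) \neq 0$, one obtains $t' \in L(\frac{d'}{d'_b}(F))$ together with $\mathbb{F}(\varepsilon, t') \leq k - \mathcal{F}((\varepsilon),(b))$, which means $\varepsilon \in L(\mathbb{F}_{k - \mathcal{F}((\varepsilon),(b))}(\frac{d'}{d'_b}(F)))$, matching a summand of the second part of $L'$. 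Each step of this forward argument reverses cleanly to produce a witness $v$ from any realization of $\varepsilon \in L'$, giving the converse implication.

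For the halting claim I would argue by nested induction on the pair consisting of the structural complexity of $E'$ and the integer $k$. The first disjunct only requires tests of the form $\varepsilon \in L(F)$ for $F \in X(E')$, which are handled by the standard inductive clauses on $+$, $\cdot$, ${}^*$ (with atomic base cases) and by this very lemma for nested similarity subexpressions, all having strictly smaller structural complexity than $E'$. The second disjunct triggers recursive invocations of the lemma on $\mathbb{F}_{k'}(\frac{d'}{d'_b}(F))$ with $k' = k - \mathcal{F}((\varepsilon),(b)) < k$, so the integer parameter strictly decreases and termination follows in at most $k$ nested levels. The main obstacle is ensuring that $X(E')$, though defined as a union indexed by the potentially infinite set $\mathcal{W}_\mathcal{F}$, is effectively finite up to dissimilarity; this is precisely what Lemma~\ref{lem deriv ham fini} provides, so all the sums and unions involved in $L'$ reduce to finite AREs on which the $\varepsilon$-test is well-defined and computable.
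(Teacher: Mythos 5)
Your proof is correct and follows essentially the same route as the paper: decompose a witness $v\in L(E')$ with $\mathbb{F}(\varepsilon,v)\leq k$ into a cost-zero prefix in $\mathcal{W}_\mathcal{F}$, a first symbol $b$ with $\mathcal{F}((\varepsilon),(b))\neq 0$, and a remainder handled through the derivatives $\frac{d'}{d'_u}(E')\in X(E')$, yielding exactly the two disjuncts of $L'$, with termination by induction on the structure of $E'$ and recurrence on $k$. The only cosmetic difference is that you fix the maximal $\mathcal{W}_\mathcal{F}$-prefix whereas the paper's chain of equivalences quantifies over any such decomposition; the content is the same.
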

  \begin{proof}
  
    Let $\mathcal{W}_\mathcal{F}=(\bigcup_{b\in\Sigma,\mathcal{F}((\varepsilon),(b))=0} \{b\})^*$, $X(E')=\{E'\}\cup\bigcup_{w\in\mathcal{W}_\mathcal{F}}\frac{d'}{d'_w}(E')$ and 
    for any symbol $\alpha,\beta$ in $\Sigma$, let us set $k_{\alpha,\beta}=k-\mathcal{F}((\alpha),(\beta))$. Obviously, $k=\indef\Rightarrow \varepsilon\notin L(E)$. Consequently, if $k\neq\indef$:
    
    \noindent$\varepsilon\in L(E)$ $\Leftrightarrow$ $\exists w\in L(E'), \mathbb{F}(\varepsilon,w)\in\Intervalle{0}{k}$
    
    \noindent $\Leftrightarrow$
      $\left\{
        \begin{array}{l@{\ }l}
          & \exists w\in L(E'), \mathbb{F}(\varepsilon,w)=0\\
          \vee & \exists b\in\Sigma,\exists w_1bw_2\in L(E'),\\
          & \mathbb{F}(\varepsilon,w_1)=0\wedge \mathcal{F}((\varepsilon),(b))\neq 0 \wedge \mathbb{F}(\varepsilon,w_2)\leq k_{\varepsilon,b}\\
        \end{array}
      \right.$  
      
      \vspace{\baselineskip}  
    
    \noindent $\Leftrightarrow$ $
      \left\{
        \begin{array}{l@{\ }l}
          & \exists w\in L(E'), w\in \mathcal{W}_\mathcal{F}\\
          \vee & \exists b\in\Sigma, \exists w_1\in \mathcal{W}_\mathcal{F},\exists w_2\in (w_1b)^{-1}(L(E')),\\
          & \mathcal{F}((\varepsilon),(b))\neq 0 \wedge \mathbb{F}(\varepsilon,w_2)\leq k_{\varepsilon,b}\\
        \end{array}
      \right.$   
      
      \vspace{\baselineskip} 
    
    \noindent $\Leftrightarrow$ $
      \left\{
        \begin{array}{l@{\ }l}
          & \exists w\in L(E'), \varepsilon\in w^{-1}(L(F))\\
          \vee & \exists b\in\Sigma, \exists w_2\in (b)^{-1}(\bigcup_{F\in X(E')}L(F)), \mathcal{F}((\varepsilon),(b))\neq 0 \wedge \mathbb{F}(\varepsilon,w_2)\leq k_{\varepsilon,b}\\
        \end{array}
      \right.$     
      
      \vspace{\baselineskip}
    
    \noindent $\Leftrightarrow$ $
      \left\{
        \begin{array}{l@{\ }l}
          & \varepsilon\in \bigcup_{F\in X(E')}L(F)\\
          \vee & \exists b\in\Sigma, \exists w_2\in L(\sum_{F\in X(E')} \frac{d'}{d'_b}(F)), \mathcal{F}((\varepsilon),(b))\neq 0 \wedge \mathbb{F}(\varepsilon,w_2)\leq k_{\varepsilon,b}\\
        \end{array}
      \right.$   
      
      \vspace{\baselineskip}     
    
    \noindent $\Leftrightarrow$ $
            \varepsilon\in \bigcup_{F\in X(E')}L(F) \vee \varepsilon \in L(\sum_{b\in\Sigma,F\in X(E'),\mathcal{F}((\varepsilon),(b))\neq 0} \mathbb{F}_{k_{\varepsilon,b}}(\frac{d'}{d'_b}(F))) $    
    
    Furthermore, \textbf{(a)} by induction over $E'$, the membership test
    defined by
    $\varepsilon\in \bigcup_{F\in X(E')}L(F)$ halts; \textbf{(b)} by recurrence over $k$ since $k_{\varepsilon,b}<k$ when $\mathcal{F}((\varepsilon),(b))\neq 0$, the membership test defined by:
    
    \centerline{$\varepsilon\in L(\sum_{F\in X(E'),b \in\Sigma,\mathcal{F}((\varepsilon),(b))\neq 0}( \mathbb{F}_{k-\mathcal{F}((\varepsilon),(b))}(\frac{d'}{d'_b}(F))))$}
    
    halts.    
  \end{proof}
  
  Lemma~\ref{lem deriv ham fini} ensures that the derivative automaton $B'(E)$ of an ARE $E$, computed from the set $\mathcal{D}_E$ of dissimilar derivatives of $E$ following the classical way, is a finite recognizer. Lemma~\ref{lem eps are} ensures that the set of final states can be computed, since the number of derivatives is finite. Finally, Lemma~\ref{lem form deriv ham ok} ensures that the DFA $D$ recognizes $L(E)$.
  
\begin{definition}
  Let $E$ be an ARE over an alphabet $\Sigma$. The tuple $B'(E)=(\Sigma,Q,I,F,\delta)$ is defined by:
  \begin{itemize}
    \item $Q=\mathcal{D}_E$,
    \item $I=\{E\}$,
    \item $F=\{q\in Q\mid\mathrm{r}(\varepsilon,q)=1\}$,
    \item $\forall (q,a)\in Q\times \Sigma$, $\delta(q,a)=\{\frac{d'}{d'_a}(q)\}$.
  \end{itemize}
\end{definition}
  
  \begin{proposition}
    Let $E$ be an approximate regular expression. Then:
    
    \centerline{
      $B'(E)$ is a deterministic automaton that recognizes $L(E)$.
    }
  \end{proposition}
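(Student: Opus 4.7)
The plan is to verify the claim by checking the three properties that make up the statement: determinism, finiteness, and correct recognition of $L(E)$. The text preceding the proposition already outlines the roadmap, so the proof essentially packages together the three preceding lemmas.

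First I would observe that $B'(E)$ is deterministic directly from its definition: the initial set $I=\{E\}$ is a singleton, and for every $(q,a)\in Q\times\Sigma$ the transition set $\delta(q,a)=\{\frac{d'}{d'_a}(q)\}$ is a singleton as well. Finiteness of $B'(E)$ follows at once from Lemma~\ref{lem deriv ham fini}, which guarantees that $\mathcal{D}_E$ is finite and that its computation halts; consequently the state set $Q$ and, via Lemma~\ref{lem eps are}, the final set $F$ are both effectively computable.

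To prove that $L(B'(E))=L(E)$, I would proceed by a short induction on the length of a word $w\in\Sigma^*$ to show that the (unique) state reached from $E$ on input $w$ is exactly $\frac{d'}{d'_w}(E)$. The base case $w=\varepsilon$ is immediate, and the inductive step $w=ua$ uses the recursive definition $\frac{d'}{d'_{ua}}(E)=\frac{d'}{d'_a}(\frac{d'}{d'_u}(E))$ together with the transition rule. Combining this with Lemma~\ref{lem form deriv ham ok} iterated over the letters of $w$, one obtains $L(\frac{d'}{d'_w}(E))=w^{-1}(L(E))$. Therefore $w\in L(B'(E))$ if and only if the unique reachable state $\frac{d'}{d'_w}(E)$ is final, which by definition and by Lemma~\ref{lem eps are} is equivalent to $\varepsilon\in L(\frac{d'}{d'_w}(E))=w^{-1}(L(E))$, i.e.\ to $w\in L(E)$.

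Since each of the ingredients is already established, the proof is almost purely organisational. The only mildly delicate point is justifying that the membership predicate $\mathrm{r}(\varepsilon,q)$ used to define $F$ is well-defined and computable on the \emph{whole} state space: this is exactly what Lemma~\ref{lem eps are} provides, but one must apply it uniformly to every $q\in\mathcal{D}_E$, not just to $E$ itself. I would therefore emphasise in the write-up that every element of $\mathcal{D}_E$ is itself an ARE, so Lemma~\ref{lem eps are} applies to each of them; this is the small conceptual step that ties the three lemmas into a clean conclusion.
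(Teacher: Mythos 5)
Your proposal is correct and follows essentially the same route as the paper: an induction on $|w|$ showing that the unique state reached on $w$ is $\frac{d'}{d'_w}(E)$, determinism read off from the singleton initial set and singleton transitions, and the final chain of equivalences through $\mathrm{r}(\varepsilon,\cdot)$, Lemma~\ref{lem form deriv ham ok} and Lemma~\ref{lem eps are}. Your added remark that Lemma~\ref{lem eps are} must be applied uniformly to every state of $\mathcal{D}_E$ matches the paper's own discussion preceding the proposition.
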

  \begin{proof}
    Let $B'(E)=(\Sigma,Q,I,F,\delta)$. Let $w$ be a word in $\Sigma^*$. Let us show by recurrence over the length of $w$ that $\delta(E,w)=\{\frac{d'}{d'_w}(E)\}$.
    
    If $w\in\Sigma$, proposition is satisfied by definition of $\delta$.
    
    If $w=w'a$ with $w'\in\Sigma^*$ and $a\in\Sigma$, by recurrence hypothesis it holds $\delta(E,w')=\{\frac{d'}{d'_{w'}}(E)\}$. By definition of $\delta$:
    
    \centerline{
      \begin{tabular}{l@{\ }l}
        $\delta(E,w'a)$ & $=\delta(\delta(E,w'),a)$,\\
        & $=\delta(\{\frac{d'}{d'_{w'}}(E)\},a)$\\ 
        & $=\delta(\frac{d'}{d'_{w'}}(E),a)$\\ 
        & $=\{\frac{d'}{d'_a}(\frac{d'}{d'_{w'}}(E))\}$\\ 
        & $=\{\frac{d'}{d'_{w'}}(E)\}$.
      \end{tabular}
    } 

    As a first consequence, since $\mathrm{Card}(I)=1$,  since $\delta$ is a function from $Q\times\Sigma^*$ to $2^Q$, and since for any pair $(q,a)$ in $Q\times \Sigma$, $\mathrm{Card}(\delta(q,a))=1$, then the tuple $B'(E)$ is a deterministic automaton. Moreover,
    
    \centerline{
      \begin{tabular}{l@{\ }l}
        $w\in L(B'(E))$ & $\Leftrightarrow \delta(E,w'a)\cap F\neq\emptyset$\\
        & $\Leftrightarrow \{\frac{d'}{d'_{w'}}(E)\}\cap F\neq\emptyset$\\ 
        & $\Leftrightarrow \frac{d'}{d'_{w'}}(E)\in F$\\ 
        & $\Leftrightarrow \mathrm{r}(\varepsilon,\frac{d'}{d'_{w'}}(E))=1$\\ 
        & $\Leftrightarrow \varepsilon\in L(\frac{d'}{d'_{w'}}(E))$\\
        & $\Leftrightarrow \varepsilon\in w^{-1}(L(E))$\\ 
        & $\Leftrightarrow w\in L(E)$\\ 
      \end{tabular}
    } 
    
  \end{proof}
  
  For any ARE $E$, the automaton $B'(E)$ is the \emph{dissimilar derivative finite automaton} of $E$. Consequently, according to Kleene theorem, we have the following corollary.
  
  \begin{corollary}\label{cor lang are rat}
    The language denoted by any ARE is regular.
  \end{corollary}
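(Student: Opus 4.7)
The plan is to assemble the corollary from three ingredients already in hand and to invoke Kleene's theorem at the end. First I would recall that, for any ARE $E$, the preceding proposition shows $B'(E)=(\Sigma,Q,I,F,\delta)$ is a (deterministic) automaton with $L(B'(E))=L(E)$, whose state set is $Q=\mathcal{D}_E$. So the only thing missing for $L(E)$ to be accepted by a finite automaton is that $\mathcal{D}_E$ is finite.

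Next I would argue finiteness of $\mathcal{D}_E$ by structural induction on $E$. For the base cases ($\emptyset,\varepsilon,a$) and for the standard regular operators ($+,\cdot,{}^*$), finiteness of the set of dissimilar derivatives modulo ACI is Brzozowski's classical result and the corresponding inductive step is straightforward from the derivation formulae recalled in Section~\ref{se:pre}. For the new operator $\mathbb{F}_k$, Lemma~\ref{lem deriv ham fini} is exactly the required inductive step: given that $\mathcal{D}_{E'}$ is finite (induction hypothesis), it ensures $\mathcal{D}_{\mathbb{F}_k(E')}$ is finite as well and that its computation halts. Combining these cases yields $\mathrm{Card}(\mathcal{D}_E)<\infty$ for every ARE $E$.

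Having established that $B'(E)$ has finitely many states and recognizes $L(E)$, I would conclude by Kleene's theorem, cited in Section~\ref{se:pre}, that $L(E)$ is regular. The main obstacle, if any, is the bookkeeping in the inductive step for the $\mathbb{F}_k$ case, but this has already been discharged in Lemma~\ref{lem deriv ham fini}; the corollary itself is then a one-line consequence of the proposition, that lemma, and Kleene's theorem.
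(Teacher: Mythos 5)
Your proposal is correct and follows essentially the same route as the paper: finiteness of the set of dissimilar derivatives (Lemma~\ref{lem deriv ham fini} supplying the $\mathbb{F}_k$ inductive step, Brzozowski's classical result covering the regular operators), the fact that $B'(E)$ is a deterministic automaton recognizing $L(E)$, and Kleene's theorem to conclude regularity. The paper states this more tersely, but the underlying argument is the one you give.
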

  
  \subsection{Antimirov Derivatives for an ARE}\label{subsec ant der}
  
  Partial derivatives are defined by means of sets of expressions instead of
expressions and thus lead to the construction of a nondeterministic
recognizer. We now extend partial derivatives to the family of AREs.  
  For convenience, let us set for $\mathcal{E}$ a set of expressions $\mathbb{F}_{k}(\mathcal{E})=\bigcup_{E\in\mathcal{E}} \mathbb{F}_{k}(E)$ and $L(\mathcal{E})=\bigcup_{E\in\mathcal{E}} L(E)$.
  
  \begin{definition}\label{def part deriv are}
    Let $E=\mathbb{F}_{k}(E')$ be an ARE over an alphabet $\Sigma$ where $\mathbb{F}$ is associated with $\mathcal{F}$ and $a$ be a symbol in $\Sigma$. The \emph{partial derivative} of $E$ w.r.t. $a$ is the set $\frac{\partial}{\partial_a}(E)$ computed as follows:
    
    \centerline{$\frac{\partial}{\partial_a}(E)=
      \left\{
        \begin{array}{l@{\ }l}
          & \bigcup_{F\in X(E'),b\in\Sigma}( \mathbb{F}_{k-\mathcal{F}((a),(b))}(\frac{\partial}{\partial_b}(F)))
          \cup \    \bigcup_{F\in X(E')} \mathbb{F}_{k-\mathcal{F}((a),(\varepsilon))}(F)\\
          \cup & \frac{\partial}{\partial_a}(\bigcup_{F\in X(E'),b \in\Sigma,\mathcal{F}((\varepsilon),(b))\neq 0}( \mathbb{F}_{k-\mathcal{F}((\varepsilon),(b))}(\frac{\partial}{\partial_b}(F))))\\
        \end{array}
      \right.
    $,}
  
  \centerline{
    where $\mathcal{W}_\mathcal{F}=(\bigcup_{b\in\Sigma,\mathcal{F}((\varepsilon),(b))=0} \{b\})^*$ and $X(E')=\{E'\}\cup\bigcup_{w\in\mathcal{W}_\mathcal{F}}\frac{\partial}{\partial_w}(E')$.
  }
  \end{definition}
  
  \begin{lemma}\label{lem form deriv part ham ok}
    Let $E=\mathbb{F}_{k}(E')$ be an ARE over an alphabet $\Sigma$ and $a$ be a symbol in $\Sigma$. Then $ L(\frac{\partial}{\partial_a}(E))=a^{-1}(L(E))$.
  \end{lemma}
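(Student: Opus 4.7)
The plan is to mirror the proof of Lemma~\ref{lem form deriv ham ok} almost line for line, replacing singleton-valued Brzozowski derivatives with the set-valued partial derivatives and using the conventions $L(\mathcal{E}) = \bigcup_{G\in\mathcal{E}} L(G)$ and $\mathbb{F}_{k'}(\mathcal{E}) = \bigcup_{G\in\mathcal{E}} \mathbb{F}_{k'}(G)$ introduced just before Definition~\ref{def part deriv are}. I would do structural induction on $E'$ with a secondary recurrence on $k$; the recurrence is well-founded because the only recursive call with the same operand uses index $k - \mathcal{F}((\varepsilon),(b)) < k$ (the case $\mathcal{F}((\varepsilon),(b)) = 0$ is precisely what has been pre-absorbed into $X(E')$), and the base case $k = \indef$ is trivial since both sides are empty.

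First, I would invoke Lemma~\ref{lem quot app lang} to expand $a^{-1}(L(E))$ as a union of three pieces indexed over $X(L(E'))$. The outer induction hypothesis applied to $E'$ gives $w^{-1}(L(E')) = L(\frac{\partial}{\partial_w}(E'))$ for every $w \in \mathcal{W}_{\mathcal{F}}$, so there is a surjection $X(E') \to X(L(E'))$ sending each expression $F$ to its language $L(F)$. Reindexing the three unions over $X(E')$ and using the same induction hypothesis to rewrite $b^{-1}(L(F))$ as $L(\frac{\partial}{\partial_b}(F))$ converts Lemma~\ref{lem quot app lang} into a union whose building blocks are of the form $\mathbb{F}_{k - \mathcal{F}((a),(b))}(L(\frac{\partial}{\partial_b}(F)))$, $\mathbb{F}_{k-\mathcal{F}((a),(\varepsilon))}(L(F))$, and $a^{-1}(\mathbb{F}_{k-\mathcal{F}((\varepsilon),(b))}(L(\frac{\partial}{\partial_b}(F))))$.

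For the first two summands, I would pull $L(\cdot)$ outside the union using the set conventions above; each then becomes $L$ applied to exactly the first two terms of Definition~\ref{def part deriv are}. For the third summand, the secondary recurrence on $k$ applies: for each $G \in \frac{\partial}{\partial_b}(F)$ with $\mathcal{F}((\varepsilon),(b)) \neq 0$, the hypothesis yields
\[
a^{-1}\bigl(L(\mathbb{F}_{k-\mathcal{F}((\varepsilon),(b))}(G))\bigr) = L\bigl(\tfrac{\partial}{\partial_a}(\mathbb{F}_{k-\mathcal{F}((\varepsilon),(b))}(G))\bigr).
\]
Distributing $a^{-1}$ over the union and pulling $L(\cdot)$ back out identifies the third summand with the $L$-image of the third term in Definition~\ref{def part deriv are}, completing the calculation.

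The main obstacle is not mathematical but notational: one must be meticulous about swapping the operators $L(\cdot)$, $\mathbb{F}_{k'}(\cdot)$, $\frac{\partial}{\partial_b}(\cdot)$, and set-theoretic union as the proof moves between the language-level formulation of Lemma~\ref{lem quot app lang} and the expression-set-level formulation of Definition~\ref{def part deriv are}. Once the conventions are unpacked, the computation is essentially the same chain of equivalences as in Lemma~\ref{lem form deriv ham ok}, and termination of the derivative computation (needed to make the recurrence on $k$ legitimate) is guaranteed by the analogue of Lemma~\ref{lem deriv ham fini} for partial derivatives, which can be proved by the same argument.
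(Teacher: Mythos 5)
Your proposal matches the paper's own proof essentially step for step: structural induction on $E'$ via Lemma~\ref{lem quot app lang}, reindexing the three unions from $X(L(E'))$ to $X(E')$ using the induction hypothesis $w^{-1}(L(E'))=L(\frac{\partial}{\partial_w}(E'))$, pulling $L(\cdot)$ through the set conventions, and handling the third summand by the secondary recurrence on $k$ (legitimate since $\mathcal{F}((\varepsilon),(b))\neq 0$ forces a strict decrease). The only cosmetic differences are that you spell out the base case $k=\indef$ and the surjection $X(E')\to X(L(E'))$, which the paper leaves implicit here; the argument is the same.
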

  \begin{proof} 
    
    By induction over the structure of $E$.
    
    According to Lemma~\ref{lem quot app lang}:
    
    \centerline{$a^{-1}(L(E))=
      \left\{
        \begin{array}{l@{\ }l}
          & \bigcup_{L''\in X(L(E')),b\in\Sigma}( \mathbb{F}_{k-\mathcal{F}((a),(b))}(b^{-1}(L'')))\\
          \cup &   \bigcup_{L''\in X(L(E'))} \mathbb{F}_{k-\mathcal{F}((a),(\varepsilon))}(L'')\\
          \cup & a^{-1}(\bigcup_{L''\in X(L(E')),b \in\Sigma,\mathcal{F}((\varepsilon),(b))\neq 0}( \mathbb{F}_{k-\mathcal{F}((\varepsilon),(b))}(b^{-1}(L''))))\\
        \end{array}
      \right.
    $,}
  
  \centerline{
    where $X(L(E'))=\{L(E')\}\cup\bigcup_{w\in\mathcal{W}_\mathcal{F}}w^{-1}(L(E'))$ with $\mathcal{W}_\mathcal{F}=(\bigcup_{b\in\Sigma,\mathcal{F}((\varepsilon),(b))=0} \{b\})^*$.
  }
  
  Let $X(E')=\{E'\}\cup\bigcup_{w\in\mathcal{W}_\mathcal{F}}\frac{\partial}{\partial_w}(E')$.
  
  By induction over $E'$, for any word $w$ in $\Sigma^*$, $w^{-1}(L(E'))=L(\frac{\partial}{d'_w}(E'))$. As a consequence:
  
  \centerline{$\bigcup_{L''\in X(L(E'))}L''= \bigcup_{E''\in X(E')} L(E'')$}
  
  and:
  
  \centerline{$a^{-1}(L(E))=
      \left\{
        \begin{array}{l@{\ }l}
          & \bigcup_{E''\in X(E'),b\in\Sigma}( \mathbb{F}_{k-\mathcal{F}((a),(b))}(b^{-1}(L(E''))))\\
          \cup &   \bigcup_{E''\in X(E')} \mathbb{F}_{k-\mathcal{F}((a),(\varepsilon))}(L(E''))\\
          \cup & a^{-1}(\bigcup_{E''\in X(E'),b \in\Sigma,\mathcal{F}((\varepsilon),(b))\neq 0}( \mathbb{F}_{k-\mathcal{F}((\varepsilon),(b))}(b^{-1}(L(E'')))))\\
        \end{array}
      \right.
    $}
    
  By induction over $E'$, for any derivative $E''$ of $E'$, it holds 
  
  \centerline{$b^{-1}(L(E''))=L(\frac{\partial}{\partial_b}(E''))$.}
  
  Consequently:
  
  \centerline{
    \begin{tabular}{l@{\ }l}
     $a^{-1}(L(E))$ & $=
      \left\{
        \begin{array}{l@{\ }l}
          & \bigcup_{E''\in X(E'),b\in\Sigma}( \mathbb{F}_{k-\mathcal{F}((a),(b))}(L(\frac{\partial}{\partial_b}(E''))))\\
          \cup &   \bigcup_{E''\in X(E')} \mathbb{F}_{k-\mathcal{F}((a),(\varepsilon))}(L(E''))\\
          \cup & a^{-1}(\bigcup_{E''\in X(E'),b \in\Sigma,\mathcal{F}((\varepsilon),(b))\neq 0}( \mathbb{F}_{k-\mathcal{F}((\varepsilon),(b))}(L(\frac{\partial}{\partial_b}(E''))))\\
        \end{array}
      \right.
    $\\
    \\
    & $=
      \left\{
        \begin{array}{l@{\ }l}
          & L(\bigcup_{E''\in X(E'),b\in\Sigma}( \mathbb{F}_{k-\mathcal{F}((a),(b))}(\frac{\partial}{\partial_b}(E''))))\\
          \cup &   L(\bigcup_{E''\in X(E')} \mathbb{F}_{k-\mathcal{F}((a),(\varepsilon))}(E''))\\
          \cup & a^{-1}(L(\bigcup_{E''\in X(E'),b \in\Sigma,\mathcal{F}((\varepsilon),(b))\neq 0}( \mathbb{F}_{k-\mathcal{F}((\varepsilon),(b))}(\frac{\partial}{\partial_b}(E''))))\\
        \end{array}
      \right.
    $\\
   \end{tabular}
   }
    
   Furthermore, by recurrence over $k$, for any $\mathcal{F}((\varepsilon),(b))>0$, it holds:
   
   \centerline{ $a^{-1}(L(\mathbb{F}_{k-\mathcal{F}((\varepsilon),(b))}(\frac{\partial}{\partial_b}(E''))))=L(\frac{\partial}{\partial_a}(\mathbb{F}_{k-\mathcal{F}((\varepsilon),(b))}(\frac{\partial}{\partial_b}(E''))))$.}
   
   Finally,
   
   \centerline{$a^{-1}(L(E))=
      \left\{
        \begin{array}{l@{\ }l}
          & L(\bigcup_{E''\in X(E'),b\in\Sigma}( \mathbb{F}_{k-\mathcal{F}((a),(b))}(\frac{\partial}{\partial_b}(E''))))\\
          \cup &   L(\bigcup_{E''\in X(E')} \mathbb{F}_{k-\mathcal{F}((a),(\varepsilon))}(E''))\\
          \cup & L(\frac{\partial}{\partial_a}(\bigcup_{E''\in X(E'),b \in\Sigma,\mathcal{F}((\varepsilon),(b))\neq 0}( \mathbb{F}_{k-\mathcal{F}((\varepsilon),(b))}(\frac{\partial}{\partial_b}(E''))))\\
        \end{array}
      \right.
    $}
    
   and
   
   \centerline{$a^{-1}(L(E))=L(\frac{\partial}{\partial_a}(E))$.}
    
  \end{proof}
  
  Let $\mathcal{DT}_E$ be the set of derivated terms of an ARE $E$, that is the set
of the elements of all the partial derivatives of $E$.
  
  \begin{lemma}\label{lem deriv part fini}
    Let $E=\mathbb{F}_{k}(E')$ be an ARE over an alphabet $\Sigma$. Then:
    
    \centerline{$\mathcal{DT}_E \subset \bigcup_{k'\in\{0,\ldots,k\}} \mathbb{F}_{k'}(\mathcal{DT}_{E'})$.}
    
    \noindent Moreover, the computation of $\mathcal{DT}_E$ halts.
  \end{lemma}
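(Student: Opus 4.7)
The plan is to mirror the proof of Lemma~\ref{lem deriv ham fini}, but rephrased in terms of sets rather than dissimilar sums. Concretely I would prove by structural induction on $E'$, together with an inner recurrence on $k$, that every derivated term of $E=\mathbb{F}_k(E')$ satisfies the property $\mathrm{P}(E',k)$: it is an expression of the form $\mathbb{F}_{k'}(G)$ with $k'\in\{0,\ldots,k\}$ and $G\in\mathcal{DT}_{E'}$. The structural induction supplies $\mathcal{DT}_{E'}$ as a finite set (invoking Antimirov's bound lifted to AREs via the inductive hypothesis), and the recurrence on $k$ handles the only genuinely problematic branch in Definition~\ref{def part deriv are}, namely the third summand which is itself defined via a partial derivation.

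The first step is to analyze a single-symbol derivative $\frac{\partial}{\partial_a}(E)$. Inspecting the three parts of Definition~\ref{def part deriv are}: since $X(E')\subseteq\mathcal{DT}_{E'}$, and since $\frac{\partial}{\partial_b}(F)\subseteq\mathcal{DT}_{E'}$ whenever $F\in\mathcal{DT}_{E'}$ (because iterated partial derivation of $E'$ remains a partial derivation of $E'$), every element of the first two parts is immediately of the form $\mathbb{F}_{k'}(G)$ with $k'\leq k$ and $G\in\mathcal{DT}_{E'}$, i.e.\ satisfies $\mathrm{P}(E',k)$. For the third part, each inner term $\mathbb{F}_{k-\mathcal{F}((\varepsilon),(b))}(\frac{\partial}{\partial_b}(F))$ has strictly smaller parameter because the union is restricted to $b$'s with $\mathcal{F}((\varepsilon),(b))\neq 0$; the recurrence hypothesis on $k$ applied to such an expression yields that any of its partial derivatives satisfies $\mathrm{P}(\frac{\partial}{\partial_b}(F),k-\mathcal{F}((\varepsilon),(b)))$, and this entails $\mathrm{P}(E',k)$ since $\frac{\partial}{\partial_b}(F)\subseteq\mathcal{DT}_{E'}$ and $k-\mathcal{F}((\varepsilon),(b))\leq k$.

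Next I lift this single-symbol statement to words. By a straightforward induction on $|w|$, using the decomposition $\frac{\partial}{\partial_{aw'}}(E)=\frac{\partial}{\partial_{w'}}(\frac{\partial}{\partial_a}(E))$ and the fact that derivated terms of an expression $H$ of the form $\mathbb{F}_{k'}(G)$ (with $G\in\mathcal{DT}_{E'}$, $k'\leq k$) satisfy $\mathrm{P}(G,k')$ and hence $\mathrm{P}(E',k)$, I conclude $\mathcal{DT}_E\subseteq\bigcup_{k'\in\{0,\ldots,k\}}\mathbb{F}_{k'}(\mathcal{DT}_{E'})$. Since the right-hand set is a finite union of finite sets (by the induction hypothesis ensuring $\mathcal{DT}_{E'}$ is finite), $\mathcal{DT}_E$ is finite.

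For the halting claim, I again combine the two inductions: by the structural hypothesis the computation of $\mathcal{DT}_{E'}$ (and thus of $X(E')$) halts, while the outer nested call $\frac{\partial}{\partial_a}(\bigcup_F \mathbb{F}_{k-\mathcal{F}((\varepsilon),(b))}(\frac{\partial}{\partial_b}(F)))$ only invokes partial derivation of expressions whose top-level index is strictly less than $k$, so the recurrence on $k$ bottoms out. I expect the main delicacy to be in the word-length induction step: ensuring that when a derivated term $H=\mathbb{F}_{k'}(G)$ is further differentiated, the appeal to the single-symbol analysis applied to $H$ really keeps us inside $\bigcup_{k''\leq k}\mathbb{F}_{k''}(\mathcal{DT}_{E'})$, which hinges on the transitivity property $\mathcal{DT}_G\subseteq\mathcal{DT}_{E'}$ for $G\in\mathcal{DT}_{E'}$ — a point worth stating explicitly before invoking it.
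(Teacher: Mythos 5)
Your proposal follows essentially the same argument as the paper's proof: the same structural induction on $E'$ combined with a recurrence on $k$, the same three-way analysis of Definition~\ref{def part deriv are} (with the nested third summand handled by the recurrence on $k$), the same closure argument for extending from symbols to words via the transitivity of derivated terms, and the same halting argument. The transitivity property you flag explicitly is exactly what the paper uses (more tersely) when it asserts that any expression in $\frac{\partial}{\partial_b}(F)$ is again a derivated term of $E'$, so there is no substantive difference.
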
  
  \begin{proof}
    Consider that $\mathbb{F}$ is associated with $\mathcal{F}$. Let us define the set $S(E',k)=\bigcup_{k'\in\Intervalle{0}{k}}\mathbb{F}_{k'}(\mathcal{DT}_{E'})$. Let us show by induction over the structure of $E'$ and by recurrence over $k$ that $\mathcal{DT}_E\subset S(E',k)$.  
  Since $X(E')$ is a finite set of derivated terms of $E'$, any subexpression of type $\mathbb{F}_{k-\mathcal{F}((a),(\varepsilon))}(F)$ with $F\in X(E')$ belongs to $S(E',k)$. Since $X(E')$ is a subset of $\mathcal{DT}_{E'}$, $\frac{\partial}{\partial_b}(F)$ is a set of derivated terms of $E'$ for any $b$ in $\Sigma$. Consequently, $\bigcup_{F\in X(E'),b\in\Sigma}( \mathbb{F}_{k-\mathcal{F}((a),(b))}(\frac{\partial}{\partial_b}(F)))$ is a subset of $S(E',k)$.
  Finally, by recurrence hypothesis, for $k'< k$, any partial derivative of an expression $\mathbb{F}_{k'}(H)$ is a subset of $S(H,k')$. Consequently, any partial derivative of $\mathbb{F}_{k-\mathcal{F}((\varepsilon),(b))}(\frac{\partial}{\partial_b}(F))$ is included into $\bigcup_{F'\in \frac{\partial}{\partial_b}(F)} S(F',k-\mathcal{F}((\varepsilon),(b)))$  if $\mathcal{F}((\varepsilon),(b))\neq 0$. Since $F$ is a derivated term of $E'$, so is any expression in $\frac{\partial}{\partial_b}(F)$, and since $k-\mathcal{F}((\varepsilon),(b))\leq k$, any partial derivative of $\mathbb{F}_{k-\mathcal{F}((\varepsilon),(b))}(\frac{\partial}{\partial_b}(F))$ is a subset of $S(E',k)$. As a consequence, \textbf{(Fact~A)} any derivated term of $E$ w.r.t. a symbol $a$ belongs to $S(E',k)$.
  
  Furthermore, let us show that if $G=\mathbb{F}_{k'}(H)$ is an expression that belongs to $S(E',k)$, then any partial derivative of $G$ is a subset of $S(E',k)$.
  According to \textbf{Fact~A}, any partial derivative of an expression $\mathbb{F}_{k'}(H)$ is a subset of $S(H,k')$. When $H$ is a derivated term of $E'$ and $k'\leq k$, any expression in $S(H,k')$ belongs to $S(E',k)$. As a consequence, any derivated term of $E$ belongs to $S(E',k)$.
  
  As a conclusion, $\mathcal{DT}_E\subset S(E',k)=\bigcup_{k'\in\Intervalle{0}{k}}\mathbb{F}_{k'}(\mathcal{DT}_{E'})$. Moreover, by induction over $E'$ and by recurrence over $k$, since any derivated term of an expression $F$ in $X(E')$ belongs to the finite set of derivated terms of $E'$ the computation of which halts, and since $k-\mathcal{F}((\varepsilon),(b))<k$ when $\mathcal{F}((\varepsilon),(b))\neq 0$, the computation of $\mathcal{DT}_E$ halts.  
  \end{proof}
  
  \begin{corollary}\label{cor card terme mot}
    Let $E=\mathbb{F}_{k}(E')$ be an ARE over an alphabet $\Sigma$. Then $\mathcal{DT}_E$ is a finite set of AREs. Furthermore, $\mathrm{Card}(\mathcal{DT}_E) \leq \mathrm{Card}(\mathcal{DT}_{E'})\times (k+1) $.
  \end{corollary}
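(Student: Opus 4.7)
The plan is to derive this corollary directly from Lemma~\ref{lem deriv part fini}, which already provides the inclusion $\mathcal{DT}_E \subset \bigcup_{k'\in\{0,\ldots,k\}} \mathbb{F}_{k'}(\mathcal{DT}_{E'})$. Granted this inclusion, I would first establish that $\mathcal{DT}_{E'}$ is itself finite, then bound the right-hand side using the definition of $\mathbb{F}_{k'}(\mathcal{DT}_{E'})$ as the pointwise wrapping of each expression in $\mathcal{DT}_{E'}$ by the operator $\mathbb{F}_{k'}$.

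To obtain the finiteness of $\mathcal{DT}_{E'}$, I would proceed by structural induction on the ARE $E'$. The base cases ($\emptyset$, $\varepsilon$, $a$) are immediate. For the regular operators ($+$, $\cdot$, $*$), Antimirov's classical argument (recalled in Section~\ref{se:pre}) carries over verbatim since the derivation formulas for these constructors are syntactically identical to the classical ones. For the similarity-operator case $E' = \mathbb{G}_{k''}(E'')$, the induction hypothesis applied to $E''$ yields finiteness of $\mathcal{DT}_{E''}$, and then Lemma~\ref{lem deriv part fini} applied at this level gives $\mathcal{DT}_{E'} \subset \bigcup_{k'\in\{0,\ldots,k''\}} \mathbb{G}_{k'}(\mathcal{DT}_{E''})$, which is a finite set.

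Once $\mathcal{DT}_{E'}$ is known to be finite, the cardinality bound follows by a simple counting argument. For each fixed $k' \in \{0,\ldots,k\}$, the set $\mathbb{F}_{k'}(\mathcal{DT}_{E'}) = \{\mathbb{F}_{k'}(F) \mid F \in \mathcal{DT}_{E'}\}$ has cardinality at most $\mathrm{Card}(\mathcal{DT}_{E'})$, since it is the image of $\mathcal{DT}_{E'}$ under the injective map $F \mapsto \mathbb{F}_{k'}(F)$. Taking the union over the $k+1$ possible values of $k'$ yields $\mathrm{Card}(\mathcal{DT}_E) \leq (k+1) \cdot \mathrm{Card}(\mathcal{DT}_{E'})$, as required.

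There is essentially no hard step here: the entire content of the corollary is packed into Lemma~\ref{lem deriv part fini}, and what remains is bookkeeping. The only place where one must be careful is in the structural induction used to establish finiteness of $\mathcal{DT}_{E'}$, since the recursion on AREs and the recursion on the similarity parameter $k$ are intertwined; however this is precisely the combined induction already set up inside the proof of Lemma~\ref{lem deriv part fini}, so it can be invoked without further work.
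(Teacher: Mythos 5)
Your proof is correct and follows essentially the same route as the paper: the corollary is stated there as an immediate consequence of Lemma~\ref{lem deriv part fini}, with exactly the counting argument you give (at most $\mathrm{Card}(\mathcal{DT}_{E'})$ expressions $\mathbb{F}_{k'}(F)$ for each of the $k+1$ values of $k'$). Your explicit remark that finiteness of $\mathcal{DT}_{E'}$ comes from the structural induction intertwined with the recurrence on $k$ is precisely the induction already set up in the paper's proof of that lemma, so nothing further is needed.
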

  
    \begin{lemma}\label{lem eps deriv part are}
    Let $E=\mathbb{F}_{k}(E')$ be an ARE over an alphabet $\Sigma$ and $a$ be a symbol in $\Sigma$. Let $\mathcal{W}_\mathcal{F}$ and $X(E')$ be the sets defined by:
    
    \centerline{  $\mathcal{W}_\mathcal{F}=(\bigcup_{b\in\Sigma,\mathcal{F}((\varepsilon),(b))=0} \{b\})^*$}
    
    \centerline{ and $X(E')=\{E'\}\cup\bigcup_{w\in\mathcal{W}_\mathcal{F}}\frac{\partial}{\partial_w}(E')$.}
    
    Let $L'$ be the language defined by:
    
    \centerline{$L'=\bigcup_{F\in X(E')}L(F)\cup L(\bigcup_{F\in X(E'),b \in\Sigma,\mathcal{F}((\varepsilon),(b))\neq 0}( \mathbb{F}_{k-\mathcal{F}((\varepsilon),(b))}(\frac{\partial}{\partial_b}(F))))$.}
    
     Then the two following conditions are equivalent:
    
    \begin{itemize}
    \item $\varepsilon\in L(E)$,
    
    \item $k\neq\indef \wedge \varepsilon\in L'$.
    \end{itemize}
    
    \noindent Furthermore, this equivalence defines a membership test that halts.
  \end{lemma}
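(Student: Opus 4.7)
The plan is to mirror the structure of Lemma~\ref{lem eps are}, replacing dissimilar derivatives with partial derivatives throughout, and substituting Lemma~\ref{lem form deriv part ham ok} ($L(\frac{\partial}{\partial_w}(E'))=w^{-1}(L(E'))$) wherever Lemma~\ref{lem form deriv ham ok} was used. Since $\mathbb{F}_\indef(L(E'))=\emptyset$ by Definition~\ref{def fkl}, the case $k=\indef$ gives $\varepsilon\notin L(E)$ immediately, so it suffices to handle $k\in\mathbb{N}$.

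Assume $k\in\mathbb{N}$. I would start from $\varepsilon\in L(E)\Leftrightarrow \exists w\in L(E'),\,\mathbb{F}(\varepsilon,w)\leq k$, then split $w$ at its first symbol whose individual cost against $\varepsilon$ is nonzero. Because $\mathcal{F}$ is symbol-wise (Condition~3) and satisfies Condition~1, the cost $\mathbb{F}(\varepsilon,w)$ decomposes as a sum of $\mathcal{F}((\varepsilon),(b))$ over the symbols $b$ of $w$. Hence exactly one of two situations occurs: either every symbol of $w$ has zero cost against $\varepsilon$, i.e.\ $w\in\mathcal{W}_\mathcal{F}$, or we can factor $w=w_1 b w_2$ with $w_1\in\mathcal{W}_\mathcal{F}$, $\mathcal{F}((\varepsilon),(b))\neq 0$, and $\mathbb{F}(\varepsilon,w_2)\leq k-\mathcal{F}((\varepsilon),(b))$.

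In the first case, $w\in L(E')\cap\mathcal{W}_\mathcal{F}$ means $\varepsilon\in w^{-1}(L(E'))=L(\frac{\partial}{\partial_w}(E'))$ for some $w\in\mathcal{W}_\mathcal{F}$, which is exactly $\varepsilon\in\bigcup_{F\in X(E')}L(F)$ by the definition of $X(E')$. In the second case, $w_2\in (w_1 b)^{-1}(L(E'))=L(\frac{\partial}{\partial_b}(F))$ for some $F\in X(E')$ (applying Lemma~\ref{lem form deriv part ham ok} first to $w_1$ and then to $b$), and $\varepsilon\in\mathbb{F}_{k-\mathcal{F}((\varepsilon),(b))}(\{w_2\})$ is equivalent to $\varepsilon\in L(\mathbb{F}_{k-\mathcal{F}((\varepsilon),(b))}(\frac{\partial}{\partial_b}(F)))$ for some $F\in X(E')$ and $b\in\Sigma$ with $\mathcal{F}((\varepsilon),(b))\neq 0$. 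Taking the disjunction yields $\varepsilon\in L'$.

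For the termination claim I would argue by structural induction on $E'$ and simultaneous recurrence on $k$, as in Lemma~\ref{lem eps are}: finiteness of $X(E')$ follows from Lemma~\ref{lem deriv part fini} applied to $E'$ (so the base-case test $\varepsilon\in\bigcup_{F\in X(E')}L(F)$ reduces to finitely many structural-membership tests that halt by induction), and in the recursive disjunct $\mathcal{F}((\varepsilon),(b))\neq 0$ forces $k-\mathcal{F}((\varepsilon),(b))<k$, so the recurrence on $k$ strictly decreases. The main subtlety, as before, is the potential loop introduced by steps where $\mathcal{F}((\varepsilon),(b))=0$; this is precisely what is absorbed into the set $\mathcal{W}_\mathcal{F}$ and the closure $X(E')$, so the recursive call is only made in the strictly-decreasing branch. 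No new difficulty arises that was not already present in the Brzozowski case, so this is essentially a transcription of Lemma~\ref{lem eps are}'s proof with $\frac{d'}{d'_w}$ replaced by $\frac{\partial}{\partial_w}$ and expression sums replaced by set unions.
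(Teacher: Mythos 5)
Your proof is correct and follows essentially the same route as the paper's: the same two-case decomposition of $\mathbb{F}(\varepsilon,w)$ (zero-cost prefix absorbed into $\mathcal{W}_\mathcal{F}$ and $X(E')$ versus a first symbol $b$ with $\mathcal{F}((\varepsilon),(b))\neq 0$), the same translation via partial-derivative correctness into $\varepsilon\in\bigcup_{F\in X(E')}L(F)$ or $\varepsilon\in L(\bigcup \mathbb{F}_{k-\mathcal{F}((\varepsilon),(b))}(\frac{\partial}{\partial_b}(F)))$, and the same termination argument by structural induction on $E'$ combined with recurrence on the strictly decreasing $k$. The only cosmetic difference is that you invoke Lemma~\ref{lem deriv part fini} and Lemma~\ref{lem form deriv part ham ok} for facts about $E'$ that are, strictly speaking, the mutual induction hypotheses (those lemmas are stated for expressions of the form $\mathbb{F}_k(\cdot)$), which matches the paper's own appeal to ``by induction over $E'$''.
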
    
  \begin{proof}
  Let $\mathcal{W}_\mathcal{F}=(\bigcup_{b\in\Sigma,\mathcal{F}((\varepsilon),(b))=0} \{b\})^*$ and $X(E')=\{E'\}\cup\bigcup_{w\in\mathcal{W}_\mathcal{F}}\frac{\partial}{\partial_w}(E')$. For convenience, for any two symbols $\alpha$ and $\beta$ in $\Sigma\cup\{\varepsilon\}$, let us set $k_{\alpha,\beta}=k-\mathcal{F}((\alpha),(\beta))$. Obviously, if $k=\indef$, $\varepsilon\notin L(E)$. For $k\neq\indef$:
  
    $\varepsilon\in L(E)$ $\Leftrightarrow$ $\exists w\in L(E'), \mathbb{F}(\varepsilon,w)\in\Intervalle{0}{k}$
    
    \noindent $\Leftrightarrow$ $
      \left\{
        \begin{array}{l@{\ }l}
          & \exists w\in L(E'), \mathbb{F}(\varepsilon,w)=0\\
          \vee & \exists b\in\Sigma,\exists w_1bw_2\in L(E'),\\
          &  \mathbb{F}(\varepsilon,w_1)=0\wedge \mathcal{F}((\varepsilon),(b))\neq 0 \wedge \mathbb{F}(\varepsilon,w_2)\leq k_{(\varepsilon),(b))}\\
        \end{array}
      \right.$   
      
      \vspace{\baselineskip} 
    
    \noindent $\Leftrightarrow$ $
      \left\{
        \begin{array}{l@{\ }l}
          & \exists w\in L(E'), w\in \mathcal{W}_\mathcal{F}\\
          \vee & \exists b\in\Sigma, \exists w_1\in \mathcal{W}_\mathcal{F},\exists w_2\in (w_1b)^{-1}(L(E')),\\
          & \mathcal{F}((\varepsilon),(b))\neq 0 \wedge \mathbb{F}(\varepsilon,w_2)\leq k_{(\varepsilon),(b))}\\
        \end{array}
      \right.$   
      
      \vspace{\baselineskip} 
    
    \noindent $\Leftrightarrow$ $
      \left\{
        \begin{array}{l@{\ }l}
          & \exists w\in L(E'), \varepsilon\in w^{-1}(\mathcal{W}_\mathcal{F})\\
          \vee & \exists b\in\Sigma, \exists w_2\in (b)^{-1}(\bigcup_{F\in X(E')}L(F)),\\
          & \mathcal{F}((\varepsilon),(b))\neq 0 \wedge \mathbb{F}(\varepsilon,w_2)\leq k_{(\varepsilon),(b))}\\
        \end{array}
      \right.$  
      
      \vspace{\baselineskip}   
    
    \noindent $\Leftrightarrow$ $
      \left\{
        \begin{array}{l@{\ }l}
          & \varepsilon\in \bigcup_{F\in X(E')}L(F)\\
          \vee & \exists b\in\Sigma, \exists w_2\in L(\bigcup_{F\in X(E')} \frac{\partial}{\partial_b}(F)),\\
          & \mathcal{F}((\varepsilon),(b))\neq 0 \wedge \mathbb{F}(\varepsilon,w_2)\leq k_{(\varepsilon),(b))}\\
        \end{array}
      \right.$  
      
      \vspace{\baselineskip}      
    
    \noindent $\Leftrightarrow$ $
      \left\{
        \begin{array}{l@{\ }l}
          & \varepsilon\in \bigcup_{F\in X(E')}L(F)\\
          \vee & \varepsilon \in L(\bigcup_{b\in\Sigma,F\in X(E'),\mathcal{F}((\varepsilon),(b))\neq 0} \mathbb{F}_{k_{(\varepsilon),(b))}}(\frac{\partial}{\partial_b}(F)))\\
        \end{array}
      \right.$

    Furthermore, \textbf{(a)} by induction over $E'$, the membership test
    defined by
    $\varepsilon\in \bigcup_{F\in X(E')}L(F)$ halts; \textbf{(b)} by recurrence over $k$ since 
    $k_{\varepsilon),b}<k$
    when $\mathcal{F}((\varepsilon),(b))\neq 0$, the membership test defined by
    
    \centerline{$\varepsilon\in L(\bigcup_{F\in X(E'),b \in\Sigma,\mathcal{F}((\varepsilon),(b))\neq 0}( \mathbb{F}_{k-\mathcal{F}((\varepsilon),(b))}(\frac{\partial}{\partial_b}(F))))$}
    
     halts. 
  \end{proof}
  
  Corollary~\ref{cor card terme mot} ensures that the derivated term automaton $A(E)$ of an ARE $E$, computed from the set $\mathcal{DT}_E$ of derivated terms of $E$ following the classical way, is a finite recognizer. Lemma~\ref{lem eps deriv part are} ensures that the set of final states can be computed. Finally, Lemma~\ref{lem form deriv part ham ok} ensures that the NFA $A$ recognizes $L(E)$.

\begin{definition}
  Let $E$ be an ARE over an alphabet $\Sigma$. The tuple $A(E)=(\Sigma,Q,I,F,\delta)$ is defined by:
  \begin{itemize}
    \item $Q=\mathcal{DT}_E$,
    \item $I=\{E\}$,
    \item $F=\{q\in Q\mid\mathrm{r}(\varepsilon,q)=1\}$,
    \item $\forall (q,a)\in Q\times \Sigma$, $\delta(q,a)=\frac{\partial}{\partial_a}(q)$.
  \end{itemize}
\end{definition}
  
  \begin{proposition}
    Let $E$ be an approximate regular expression. Then:
    
    \centerline{
      $A(E)$ is a finite automaton that recognizes $L(E)$.
    }
  \end{proposition}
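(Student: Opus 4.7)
The plan is to mirror the Brzozowski-style argument used for $B'(E)$, adapting it to the set-valued setting of partial derivatives. Three ingredients are already in hand: Corollary~\ref{cor card terme mot} guarantees that $Q = \mathcal{DT}_E$ is finite (so $A(E)$ is a genuine finite automaton); Lemma~\ref{lem form deriv part ham ok} gives the key semantic identity $L(\frac{\partial}{\partial_a}(E)) = a^{-1}(L(E))$ at the symbol level; and Lemma~\ref{lem eps deriv part are} ensures that the set of final states $F = \{q \in Q \mid \mathrm{r}(\varepsilon,q) = 1\}$ is well-defined and effectively computable. So the only real content left is to lift the single-symbol identity to arbitrary words and to bridge it to the acceptance condition of $A(E)$.

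First I would prove by induction on $|w|$ that $\delta(\{E\}, w) = \frac{\partial}{\partial_w}(E)$. The base case $w = \varepsilon$ is immediate from the extended transition function convention $\delta(P, \varepsilon) = P$ together with $\frac{\partial}{\partial_\varepsilon}(E) = \{E\}$. For the inductive step, write $w = w'a$ and use the definition of $\delta$ on a set of states together with the inductive hypothesis:
\begin{align*}
\delta(\{E\}, w'a)
  &= \delta(\delta(\{E\}, w'), a) \\
  &= \textstyle\bigcup_{q \in \frac{\partial}{\partial_{w'}}(E)} \delta(q, a) \\
  &= \textstyle\bigcup_{q \in \frac{\partial}{\partial_{w'}}(E)} \frac{\partial}{\partial_a}(q)
   = \frac{\partial}{\partial_a}\!\bigl(\tfrac{\partial}{\partial_{w'}}(E)\bigr)
   = \tfrac{\partial}{\partial_{w'a}}(E),
\end{align*}
where the last equality uses the inductive definition extending partial derivation to words and sets.

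Next I would derive the language identity at word level. Iterating Lemma~\ref{lem form deriv part ham ok} (formally by induction on $|w|$, since $w^{-1}(L(E)) = a^{-1}(w''^{-1}(L(E)))$ for $w = w''a$ and similarly for partial derivatives), one obtains $L(\frac{\partial}{\partial_w}(E)) = w^{-1}(L(E))$, i.e.\ $\bigcup_{q \in \frac{\partial}{\partial_w}(E)} L(q) = w^{-1}(L(E))$. Combining this with the characterisation of the final states, we have
\[
w \in L(A(E)) \iff \delta(\{E\}, w) \cap F \neq \emptyset \iff \exists q \in \tfrac{\partial}{\partial_w}(E),\ \varepsilon \in L(q),
\]
and the right-hand side is equivalent to $\varepsilon \in w^{-1}(L(E))$, i.e.\ $w \in L(E)$.

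The main obstacle, such as it is, lies in justifying the equivalence $\exists q \in \frac{\partial}{\partial_w}(E),\ \varepsilon \in L(q) \iff \varepsilon \in \bigcup_q L(q)$ in the presence of the $\mathbb{F}_k$ constructor; but this is immediate from the set-level definition $L(\mathcal{E}) = \bigcup_{E \in \mathcal{E}} L(E)$ introduced at the start of Section~\ref{subsec ant der}, together with the halting membership test provided by Lemma~\ref{lem eps deriv part are}. Unlike the deterministic case, $A(E)$ need not be deterministic, so no cardinality claim on $\delta(q,a)$ is made; the proof simply stops at finiteness and correctness of the recognised language.
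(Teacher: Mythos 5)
Your proposal is correct and follows essentially the same route as the paper: an induction on $|w|$ showing $\delta(\{E\},w)=\frac{\partial}{\partial_w}(E)$, followed by the chain of equivalences $w\in L(A(E)) \Leftrightarrow \exists q\in\frac{\partial}{\partial_w}(E),\ \varepsilon\in L(q) \Leftrightarrow \varepsilon\in w^{-1}(L(E)) \Leftrightarrow w\in L(E)$, with finiteness of $Q$ and computability of $F$ supplied by Corollary~\ref{cor card terme mot} and Lemma~\ref{lem eps deriv part are}. Your explicit base case $w=\varepsilon$ and the remark that no determinism claim is needed are minor (sound) refinements of the paper's argument, not a different proof.
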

  \begin{proof}
    Let $A(E)=(\Sigma,Q,I,F,\delta)$. Let $w$ be a word in $\Sigma^*$. Let us show by recurrence over the length of $w$ that $\delta(E,w)=\frac{\partial}{\partial_w}(E)$.
    
    If $w\in\Sigma$, proposition is satisfied by definition of $\delta$.
    
    If $w=w'a$ with $w'\in\Sigma^*$ and $a\in\Sigma$, by recurrence hypothesis it holds $\delta(E,w')=\frac{\partial}{\partial_{w'}}(E)$. By definition of $\delta$:
    
    \centerline{
      \begin{tabular}{l@{\ }l}
        $\delta(E,w'a)$ & $=\delta(\delta(E,w'),a)$,\\
        & $=\delta(\frac{\partial}{\partial_{w'}}(E),a)$\\ 
        & $=\bigcup_{E'\in \frac{\partial}{\partial_{w'}}(E)}\delta(E',a)$\\ 
        & $=\bigcup_{E'\in \frac{\partial}{\partial_{w'}}(E)}\frac{\partial}{\partial_{a}}(E')$\\
        & $=\frac{\partial}{\partial_{w'a}}(E)$\\
      \end{tabular}
    } 

    Consequently,
    
    \centerline{
      \begin{tabular}{l@{\ }l}
        $w\in L(A(E))$ & $\Leftrightarrow \delta(E,w'a)\cap F\neq\emptyset$\\
        & $\Leftrightarrow \frac{\partial}{\partial_{w'a}}(E)\cap F\neq\emptyset$\\ 
        & $\Leftrightarrow \exists E'\in \frac{\partial}{\partial_{w'a}}(E)\mid E' \in F$\\
        & $\Leftrightarrow \exists E'\in \frac{\partial}{\partial_{w'a}}(E)\mid \mathrm{r}(\varepsilon,E')=1$\\ 
        & $\Leftrightarrow \varepsilon\in \bigcup_{E'\in \frac{\partial}{\partial_{w'a}}(E)} L(E')$\\ 
        & $\Leftrightarrow \varepsilon\in w^{-1}(L(E))$\\ 
        & $\Leftrightarrow w\in L(E)$\\ 
      \end{tabular}
    } 
    
  \end{proof}
  
  For any ARE $E$, the automaton $A(E)$ is called the \emph{derivated term finite automaton} of $E$.
  
  \subsection{Back to Hamming and Levenshtein Derivation}\label{se:lienEntreDeuxForm}
  
  This subsection is devoted to show the link between HLARE derivation formulae and ARE ones. Given an HLARE $E$ and a word $w$, the following proposition illustrates the fact that the expression $D'_w(E)$ of Definition~\ref{def deriv diss hlare} (resp. the set of expressions $\Delta_w(E)$ of Definition~\ref{def deriv part hlare}) and the expression $\frac{d'}{d'_w}(E)$ in Definition~\ref{def diss deriv are} (resp. the set of expressions $\frac{\partial}{\partial_w}(E)$ in Definition~\ref{def part deriv are}) are syntactically equal up to the expression $\emptyset$.
  
  \begin{proposition}\label{prop eq der HLARE ARE}
    Let $E$ be an HLARE over an alphabet $\Sigma$. For any symbol $a$ in $\Sigma$, the two following conditions are satisfied:
    
    \begin{itemize}
      \item $\frac{d'}{d'_a}(E)\in\{(D'_a(E)+\emptyset)_{\sim_s},D'_a(E)\}$,
      \item $\frac{\partial}{\partial_a}(E)\in\{\Delta_a(E)\cup\{\emptyset\},\Delta_a(E)\}$.
    \end{itemize}
  \end{proposition}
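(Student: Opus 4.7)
The plan is to proceed by structural induction on the HLARE $E$, with a nested induction on $k$ in the case $E = \mathbb{F}_k(E')$. A key preliminary observation is that for $\mathbb{F} \in \{\mathbb{H}, \mathbb{L}\}$, no symbol $b \in \Sigma$ satisfies $\mathcal{F}((\varepsilon),(b)) = 0$ (the value is $\indef$ for Hamming and $1$ for Levenshtein), so $\mathcal{W}_\mathcal{F} = \{\varepsilon\}$ and therefore $X(E') = \{E'\}$. This collapses every sum (resp.\ union) indexed by $F \in X(E')$ appearing in Definitions~\ref{def diss deriv are} and~\ref{def part deriv are} to a single summand in which $F$ is replaced by $E'$.

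For the base cases $E \in \{\emptyset, \varepsilon, a\}$ and the inductive cases $E \in \{E_1+E_2,\ E_1 \cdot E_2,\ E_1^*\}$, Definitions~\ref{def diss deriv are} and~\ref{def part deriv are} only explicitly specify the $\mathbb{F}_k(\cdot)$ constructor and inherit for all other constructors the classical Brzozowski/Antimirov formulas recalled in Section~\ref{se:pre}; those formulas coincide verbatim with the corresponding clauses of Definitions~\ref{def deriv diss hlare} and~\ref{def deriv part hlare}. Hence the induction hypothesis applied to the strict subexpressions $E_1, E_2$ immediately propagates the ``up to $\emptyset$'' (resp.\ ``up to $\{\emptyset\}$'') property through $+$, $\cdot$ and $*$.

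The substantive case is $E = \mathbb{F}_k(E')$. For $\mathbb{F} = \mathbb{H}$, applying the ARE formula with $X(E') = \{E'\}$ produces three contributions: the sum $\sum_{b \in \Sigma} \mathbb{H}_{k - \mathcal{H}((a),(b))}(\frac{d'}{d'_b}(E'))$ splits as $\mathbb{H}_k(\frac{d'}{d'_a}(E')) + \sum_{b \neq a} \mathbb{H}_{k-1}(\frac{d'}{d'_b}(E'))$; the term $\mathbb{H}_{k - \mathcal{H}((a),(\varepsilon))}(E') = \mathbb{H}_\indef(E') = \emptyset$ by the reduction $\mathbb{F}_\indef(\cdot) = \emptyset$; and the ``loop'' term collapses to $\frac{d'}{d'_a}(\emptyset) = \emptyset$, because its inner sum $\sum_{b \in \Sigma} \mathbb{H}_\indef(\frac{d'}{d'_b}(E'))$ is itself $\emptyset$. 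Substituting $D'_b(E')$ for $\frac{d'}{d'_b}(E')$ via the induction hypothesis, the ACI-reduced result matches the HLARE Hamming clause of Definition~\ref{def deriv diss hlare} up to at most one extra $+\emptyset$ summand. For $\mathbb{F} = \mathbb{L}$, the four summands of the HLARE Levenshtein clause are in bijection with the three ARE contributions: $b = a$ in the first ARE sum gives $\mathbb{L}_k(\frac{d'}{d'_a}(E'))$; $b \neq a$ gives $\sum_{b \neq a}\mathbb{L}_{k-1}(\frac{d'}{d'_b}(E'))$; the second ARE term gives $\mathbb{L}_{k-1}(E')$ since $\mathcal{L}((a),(\varepsilon)) = 1$; and the third gives $\frac{d'}{d'_a}(\sum_{b \in \Sigma}\mathbb{L}_{k-1}(\frac{d'}{d'_b}(E')))$ since $\mathcal{L}((\varepsilon),(b)) = 1 \neq 0$. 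The outer $\frac{d'}{d'_a}$ in the last term is applied to an HLARE whose top-level $\mathbb{F}$-indices are all strictly less than $k$, so the nested induction on $k$ applies; its base $k = 0$ makes $\mathbb{L}_{-1} = \mathbb{L}_\indef = \emptyset$, terminating the recursion.

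The partial derivative clause is handled by the exact same scheme, with unions replacing sums and the singleton $\{\emptyset\}$ playing the role of the neutral summand $\emptyset$ (via the set reduction $\{\emptyset\} = \emptyset$ listed in Section~\ref{se:hamLevDeriv}). The chief obstacle is the well-foundedness of the nested induction on $k$ for the Levenshtein ``loop'' term, since the induction hypothesis must be invoked on an expression that is not a syntactic subexpression of $E$; well-foundedness follows from the strict decrease of $k$ combined with the observation that once $k = 0$ the offending term itself already reduces to $\emptyset$.
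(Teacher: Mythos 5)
Your proof is correct and follows essentially the same route as the paper's: structural induction in which the classical constructors are immediate by identity of the formulae, the $\mathbb{H}_k$ and $\mathbb{L}_k$ cases are handled by instantiating the ARE formula with $X(E')=\{E'\}$, using $\mathcal{H}((a),(\varepsilon))=\indef$ (resp.\ $\mathcal{L}=1$) and the reductions $\mathbb{F}_\indef(\cdot)=\emptyset$, $E+\emptyset=E$, and the Levenshtein loop term is resolved by a nested recurrence over $k$; the partial-derivative clause is treated symmetrically, exactly as the paper does. Your explicit remarks on why $X(E')=\{E'\}$ and on the well-foundedness of the recurrence over $k$ are if anything slightly more detailed than the paper's own wording.
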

  \begin{proof}
    We prove the first membership relation. A similar proof can be given for the second one. 
    
    By induction over the structure of an HLARE. 
    
    \begin{enumerate}
    
    \item If $E=a\in\Sigma$, $E=E_1+E_2$, $E=E_1\cdot E_2$ or if $E=E_1^*$, the proposition is trivially checked by similarity of the formulae.
    
    \item If $E=\mathbb{H}_k(E')$, by definition of $\frac{d'}{d'_a}(E)$:
    
    \centerline{$\frac{d'}{d'_a}(E)=
      \left(
        \begin{array}{l@{\ }l}
          & \sum_{F\in X(E'),b\in\Sigma}( \mathbb{H}_{k-\mathcal{H}((a),(b))}(\frac{d'}{d'_b}(F)))\\
          + &   \sum_{F\in X(E')} \mathbb{H}_{k-\mathcal{H}((a),(\varepsilon))}(F)\\
          + & \frac{d'}{d'_a}(\sum_{F\in X(E'),b \in\Sigma,\mathcal{H}((\varepsilon),(b))\neq 0}( \mathbb{H}_{k-\mathcal{H}((\varepsilon),(b))}(\frac{d'}{d'_b}(F))))\\
        \end{array}
      \right)_{\sim_s}
    $,}
  
  \centerline{
    where $X(E')=\{E'\}\cup\bigcup_{w\in\mathcal{W}_\mathcal{H}}\frac{d'}{d'_w}(E')$ with $\mathcal{W}_\mathcal{H}=(\bigcup_{b\in\Sigma,\mathcal{H}((\varepsilon),(b))=0} \{b\})^*$.
  }
  
  By definition of $\mathbb{H}$, $X(E')=\{E'\}$, $\mathcal{H}((a),(b))\in\{1,0\}$ and $\mathcal{H}((a),(\varepsilon))=\indef$ for any two symbols $a$ and $b$ in $\Sigma$.
  
  Consequently:
    
    \centerline{$\frac{d'}{d'_a}(E)=
      \left(
        \begin{array}{l@{\ }l}
          & \sum_{b\in\Sigma}( \mathbb{H}_{k-\mathcal{H}((a),(b))}(\frac{d'}{d'_b}(E')))\\
          + &   \sum_{F\in X(E')} \mathbb{H}_{k-\indef}(F)\\
          + & \frac{d'}{d'_a}(\sum_{F\in X(E'),b \in\Sigma,\mathcal{H}((\varepsilon),(b))\neq 0}( \mathbb{H}_{k-\indef}(\frac{d'}{d'_b}(F))))\\
        \end{array}
      \right)_{\sim_s}
    $,}
    
    and finally
    
    \centerline{$\frac{d'}{d'_a}(E)=
      \left(
        \begin{array}{l@{\ }l}
          & \mathbb{H}_{k}(\frac{d'}{d'_a}(E'))\\
          +& \sum_{b\in\Sigma\setminus\{a\}}( \mathbb{H}_{k-1}(\frac{d'}{d'_b}(E')))\\
          + &   \emptyset\\
          + &   \emptyset \\
        \end{array}
      \right)_{\sim_s}$}

    \centerline{$=
      \left(
        \begin{array}{l@{\ }l}
          & \mathbb{H}_{k}(D'_a(E'))\\
          +& \sum_{b\in\Sigma\setminus\{a\}}( \mathbb{H}_{k-1}(D'_b(E')))\\
          + &   \emptyset\\
          + &   \emptyset \\
        \end{array}
      \right)_{\sim_s}
    \in \{D'_a(E)+\emptyset,D'_a(E)\}$.}

    \item If $E=\mathbb{L}_k(E')$, by definition of $\frac{d'}{d'_a}(E)$:
    
    \centerline{$\frac{d'}{d'_a}(E)=
      \left(
        \begin{array}{l@{\ }l}
          & \sum_{F\in X(E'),b\in\Sigma}( \mathbb{L}_{k-\mathcal{L}((a),(b))}(\frac{d'}{d'_b}(F)))\\
          + &   \sum_{F\in X(E')} \mathbb{L}_{k-\mathcal{L}((a),(\varepsilon))}(F)\\
          + & \frac{d'}{d'_a}(\sum_{F\in X(E'),b \in\Sigma,\mathcal{L}((\varepsilon),(b))\neq 0}( \mathbb{L}_{k-\mathcal{H}((\varepsilon),(b))}(\frac{d'}{d'_b}(F))))\\
        \end{array}
      \right)_{\sim_s}
    $,}
  
  \centerline{
    where $X(E')=\{E'\}\cup\bigcup_{w\in\mathcal{W}_\mathcal{L}}\frac{d'}{d'_w}(E')$ with $\mathcal{W}_\mathcal{L}=(\bigcup_{b\in\Sigma,\mathcal{L}((\varepsilon),(b))=0} \{b\})^*$.
  }
  
  By definition of $\mathbb{L}$, $X(E')=\{E'\}$, $\mathcal{L}((a),(b))\in\{1,0\}$ and $\mathcal{L}((a),(\varepsilon))=\mathcal{L}((\varepsilon,(a)))=1$ for any two symbols $a$ and $b$ in $\Sigma$.
  
  Consequently:
    
    \centerline{$\frac{d'}{d'_a}(E)=
      \left(
        \begin{array}{l@{\ }l}
          & \sum_{b\in\Sigma}( \mathbb{L}_{k-\mathcal{L}((a),(b))}(\frac{d'}{d'_b}(E')))\\
          + &   \mathbb{L}_{k-1}(E')\\
          + & \frac{d'}{d'_a}(\sum_{b \in\Sigma}( \mathbb{L}_{k-1}(\frac{d'}{d'_b}(E'))))\\
        \end{array}
      \right)_{\sim_s}
    $}
    
    \centerline{$=
      \left(
        \begin{array}{l@{\ }l}
          & \mathbb{L}_{k}(\frac{d'}{d'_a}(E')))\\
          + & \sum_{b\in\Sigma}( \mathbb{L}_{k-1}(\frac{d'}{d'_b}(E')))\\
          + &   \mathbb{L}_{k-1}(E')\\
          + & \frac{d'}{d'_a}(\sum_{b \in\Sigma}( \mathbb{L}_{k-1}(\frac{d'}{d'_b}(E'))))\\
        \end{array}
      \right)_{\sim_s}
    $}
    
    Finally, by induction hypothesis and by recurrence over $k$,
    
    \centerline{$\frac{d'}{d'_a}(E)=
      \left(
        \begin{array}{l@{\ }l}
          & \mathbb{L}_{k}(D'_a(E')))\\
          + & \sum_{b\in\Sigma}( \mathbb{L}_{k-1}(D'_b(E')))\\
          + &   \mathbb{L}_{k-1}(E')\\
          + & D'_a(\sum_{b \in\Sigma}( \mathbb{L}_{k-1}(D'_b(E'))))\\
        \end{array}
      \right)_{\sim_s}
    =D'_a(E)$.}
    
    \end{enumerate}
    
  \end{proof}
  
  As a corollary of Proposition~\ref{prop eq der HLARE ARE}, the proofs of the lemmas and propositions of Section~\ref{se:hamLevDeriv} can be deduced from the corresponding ones of Section~\ref{se: deriv are}.
  
\section{Conclusion}

  The similarity operators
that equip the family of approximate regular expressions
 make AREs to be a nice tool to deal with
approximate regular expression matching.
The extension of dissimilar derivatives and partial derivatives to the
family of AREs allows us to provide a syntactical solution to the
approximate membership problem; moreover in each case the set of
derivatives is finite and thus this extension also yields the
construction of a recognizer.
An additional advantage of similarity operators is that they can be combined with 
other regular operators, such as intersection and complementation operators~\cite{CCM11b},
in order to produce even smaller expressions.

\bibliography{/home/ludo/Bureau/Donnees/DocsSync/Recherche/Bibliographie/biblio}

\end{document}